\documentclass[Phys]{SciPost}

\pdfoutput=1
\usepackage[T1]{fontenc}
\usepackage{lmodern}
\usepackage{microtype}
\usepackage{amssymb,amsthm,mathtools}
\usepackage{bm}
\usepackage{tikz}
\usepackage{array}
\usepackage{booktabs}
\usepackage{longtable}
\usepackage{tabularx}
\usepackage{enumitem}
\usepackage{placeins}
\usepackage{orcidlink}
\hypersetup{
  colorlinks=true,
  linkcolor=red!50!black,
  citecolor=blue!50!black,
  urlcolor=blue!80!black
}
\definecolor{LeanVerifyBorder}{HTML}{1E6A8D}
\definecolor{LeanVerifyBackground}{HTML}{EFF8FC}
\newcommand{\leanverifiedbox}[1]{%
  \par\medskip
  \noindent\begingroup
  \setlength{\fboxsep}{7pt}%
  \setlength{\fboxrule}{0.8pt}%
  \fcolorbox{LeanVerifyBorder}{LeanVerifyBackground}{%
    \begin{minipage}{\dimexpr\linewidth-2\fboxsep-2\fboxrule\relax}
      \small
      \textcolor{LeanVerifyBorder}{\textbf{\checkmark\ Lean 4 machine verification}}\par\smallskip
      #1
    \end{minipage}%
  }%
  \endgroup
  \par\medskip
}
\usetikzlibrary{arrows.meta,calc,positioning,fit,backgrounds,decorations.pathreplacing}

\tikzset{
  smVertex/.style={circle, fill=black, inner sep=0pt, minimum size=4.0pt, label distance=1.5pt},
  smOpenVertex/.style={circle, draw=black, fill=black, inner sep=0pt, minimum size=4.0pt, label distance=1.5pt},
  smCoreVertex/.style={circle, draw=black, fill=black, inner sep=0pt, minimum size=4.0pt, label distance=1.5pt},
  smSmallLabel/.style={font=\scriptsize},
  smEdge/.style={thick},
  smDashedBlock/.style={draw=black, dashed, rounded corners=5pt, inner sep=4pt},
  smArrow/.style={-{Latex[length=2.0mm]}, thick},
  smMessage/.style={-{Latex[length=1.8mm]}, thick, dashed},
  smCountBox/.style={draw=black, rounded corners=4pt, inner sep=5pt, align=center, font=\scriptsize},
  gammaDictVertex/.style={
    circle,
    fill=black,
    draw=black,
    inner sep=0pt,
    minimum size=2.5pt
  },
  gammaDictEdge/.style={
    draw=black,
    line width=0.35pt,
    line cap=round
  },
  gammaDictName/.style={
    font=\scriptsize,
    align=center
  },
}

\newcommand{\CoreFourDictionaryGraph}[2]{%
\begin{tikzpicture}[
  baseline=(current bounding box.center),
  x=0.78cm,
  y=0.78cm
]
  \coordinate (v1) at (-0.72, 0.58);
  \coordinate (v2) at ( 0.72, 0.58);
  \coordinate (v3) at ( 0.72,-0.58);
  \coordinate (v4) at (-0.72,-0.58);

  \foreach \i/\j in {#2}{
    \draw[gammaDictEdge] (v\i)--(v\j);
  }

  \node[gammaDictVertex] at (v1) {};
  \node[gammaDictVertex] at (v2) {};
  \node[gammaDictVertex] at (v3) {};
  \node[gammaDictVertex] at (v4) {};

  \node[font=\scriptsize] at (-0.92, 0.78) {$a$};
  \node[font=\scriptsize] at ( 0.92, 0.78) {$b$};
  \node[font=\scriptsize] at ( 0.92,-0.79) {$c$};
  \node[font=\scriptsize] at (-0.92,-0.79) {$d$};
  \node[gammaDictName] at (0,-1.17) {#1};
\end{tikzpicture}%
}

\newcommand{\CoreFiveDictionaryGraph}[2]{%
\begin{tikzpicture}[
  baseline=(current bounding box.center),
  x=0.80cm,
  y=0.80cm
]
  \coordinate (v1) at ( 0.000, 0.950);
  \coordinate (v2) at ( 0.904, 0.294);
  \coordinate (v3) at ( 0.559,-0.769);
  \coordinate (v4) at (-0.559,-0.769);
  \coordinate (v5) at (-0.904, 0.294);

  \foreach \i/\j in {#2}{
    \draw[gammaDictEdge] (v\i)--(v\j);
  }

  \node[gammaDictVertex] at (v1) {};
  \node[gammaDictVertex] at (v2) {};
  \node[gammaDictVertex] at (v3) {};
  \node[gammaDictVertex] at (v4) {};
  \node[gammaDictVertex] at (v5) {};

  \node[font=\scriptsize] at ( 0.000, 1.155) {$a$};
  \node[font=\scriptsize] at ( 1.105, 0.365) {$b$};
  \node[font=\scriptsize] at ( 0.690,-0.985) {$c$};
  \node[font=\scriptsize] at (-0.690,-0.985) {$d$};
  \node[font=\scriptsize] at (-1.105, 0.365) {$e$};

  \node[font=\scriptsize] at (0,-1.235) {$#1$};
\end{tikzpicture}%
}

\setlist[itemize]{leftmargin=2em}
\setlist[enumerate]{leftmargin=2em}

\newtheorem{theorem}{Theorem}[section]
\newtheorem{proposition}[theorem]{Proposition}
\newtheorem{lemma}[theorem]{Lemma}

\newtheorem{definition}[theorem]{Definition}

\newcommand{\F}{\mathbb F}
\newcommand{\Fp}{\mathbb F_p}

\newcommand{\Zn}{\mathbb Z_n}
\newcommand{\TT}{\mathbb T_n}
\newcommand{\one}{\mathbf 1}
\newcommand{\ket}[1]{\left|#1\right\rangle}
\newcommand{\bra}[1]{\left\langle#1\right|}

\DeclareMathOperator{\rank}{rank}

\begin{document}

\begin{center}
{\Large \textbf{R\'enyi Multi-entropy for Stabilizer States: \\A Finite Field Gauss Sum Problem}}
\end{center}

\begin{center}
Bart{\l}omiej Czech\,\orcidlink{0000-0002-6368-3502}\textsuperscript{1},
Yichen Feng\,\orcidlink{0009-0003-2816-2714}\textsuperscript{1},
Jonathan Harper\,\orcidlink{0000-0003-4913-6470}\textsuperscript{2,3},
\\
Xin-Xiang~Ju\,\orcidlink{0000-0002-9905-3908}\textsuperscript{1,4},
Yang~Zhao\,\orcidlink{0009-0004-3492-1892}\textsuperscript{4}
\end{center}

\begin{center}
{\bf 1} Chen-Ning Yang Institute for Advanced Study, Tsinghua University, Beijing, China
\\
{\bf 2} Department of Physics, Case Western Reserve University,
\\10900 Euclid Ave, Cleveland, OH 44106, USA
\\
{\bf 3} Center for Gravitational Physics and Quantum Information,
\\Yukawa Institute for Theoretical Physics, Kyoto University,
\\Kitashirakawa Oiwakecho, Sakyo-ku, Kyoto 606-8502, Japan
\\
{\bf 4} School of Physical Sciences, University of Chinese Academy of Sciences, Beijing, China
\\
Email:
{\small \sf \href{mailto:bartlomiej.czech@gmail.com}{bartlomiej.czech@gmail.com}},
{\small \sf \href{mailto:fyc23@mails.tsinghua.edu.cn}{fyc23@mails.tsinghua.edu.cn}},
{\small \sf \href{mailto:jonathan.harper@case.edu}{jonathan.harper@case.edu}},
{\small \sf \href{mailto:juxinxiang21@mails.ucas.ac.cn}{juxinxiang21@mails.ucas.ac.cn}},
{\small \sf \href{mailto:zhaoyang20a@mails.ucas.ac.cn}{zhaoyang20a@mails.ucas.ac.cn}}
\end{center}

\section*{Abstract}
R\'enyi multi-entropy probes multi-partite structure beyond bipartite spectra, but its behavior in general quantum states and the entanglement structure it captures remain poorly understood. We develop a finite field Gauss sum framework for calculating it in stabilizer states. Within this framework, the replica contraction reduces to a quadratic Gauss sum if and only if the state is locally unitarily equivalent to a stabilizer state. We obtain analytic expressions for the R\'enyi multi-entropy of all tree graph states and large families of loop graph states. In contrast to their flat bipartite R\'enyi spectra, stabilizer states generically have nonflat R\'enyi multi-entropy spectra. 

\vspace{10pt}
\noindent\rule{\textwidth}{1pt}
\tableofcontents\thispagestyle{fancy}
\noindent\rule{\textwidth}{1pt}
\vspace{10pt}

\section{Introduction}
Much of our understanding of quantum entanglement rests on dividing a system into two parts. For a pure state, the spectrum of either reduced density matrix fully characterizes the entanglement across that bi-division, up to local unitaries. When the state is shared by more than two parties (subsystems), however, the entanglement among them cannot be classified by data from all bi-divisions~\cite{DurVidalCirac2000}. This motivates a study of quantities, which are sensitive to the way entanglement is shared among more than two parties. The present paper focuses on one class of such quantities, called multi-R\'enyi entropies.

Ordinary R\'enyi entropies \cite{renyi1961entropy} provide a natural starting point for defining quantities sensitive to entanglement across multiple parties (multi-partitions). Each moment \(\operatorname{Tr}\rho_A^n\) is a closed contraction of copies of a pure state and its conjugate. For a multipartite state, a natural generalizations of this construction is to take a number of labeled replicas and set a rule for which bra-replica index is to be contracted with which ket-replica index. As explained in \cite{LeiferLindenWinter2004} and elsewhere, this specification can be encapsulated by a set of permutations acting on the replicas of the state, one permutation per subsystem. In the case of the R\'enyi multi-entropies, which are the focus of this paper, the permutations generate a $(q-1)$-fold product of $\mathbb{Z}_n$'s, where $q$ is the number of constituent subsystems. Regardless of the choice of permutations, the scalar defined by such a pattern of contractions is invariant under local changes of basis and, for this reason, bears the name \emph{multi-invariant}. Replica-symmetric families of multi-invariants have been studied in holographic duality \cite{GaddeHarperKrishna2025}, and more recent work has investigated their properties for stabilizer states \cite{AkellaGaddePandey2026,CzechFengWuXie2026}. 

R\'enyi multi-entropies are a distinguished family
of multi-invariants, in which the replicas of the state are associated with group elements of products of $\mathbb{Z}_n$. They reduce to ordinary R\'enyi entropies for two parties and treat all parties symmetrically. The original motivation
for studying R\'enyi multi-entropies was holographic: an analytic continuation to \(n=1\) was proposed to
compute the area of a network of minimal surfaces in the dual bulk geometry~\cite{GaddeKrishnaSharma2022}. The integer-replica definition, however, does not require a gravitational dual. It is natural to ask: what aspects of multipartite entanglement are reflected in the  dependence of the R\'enyi multi-entropies on the replica index? To begin to address this question, we compute the multi-R\'enyis exactly in a well-studied family of quantum states:


\paragraph{Stabilizer states} provide a particularly clean testbed for this problem.
While they admit a compact classical description, the structure of their entanglement across multi-partitions exhibits non-trivial features, which we help to elucidate in this paper. A distinctive feature of stabilizer states is that their bipartite entanglement spectra are flat \cite{fattal2004stabilizer} and, consequently, their R\'enyi entropies are constant (independent of the replica index). This is not the case for the multi-R\'enyi entropies.

\paragraph{Technique: Gauss sums}
We recast the calculation of R\'enyi multi-entropies for stabilizer states as a finite field Gauss sum problem. In a suitable local basis, their nonzero amplitudes have equal magnitude, with phases determined by a quadratic polynomial over a finite field. Replica gluing imposes linear conditions on the finite field elements, leading to a quadratic Gauss sum over the space of solutions to certain linear conditions. In the end, our methods reduce the calculation of an index contraction to a linear algebra problem over a finite field.

\paragraph{Results}
Analytic results for R\'enyi multi-entropies at arbitrary integer R\'enyi
index \(n\) are only known for limited classes of states
\cite{BerthiereGaudin2026,AkellaGaddePandey2026,ChangGoswamiIizukaMiyata2026},
with GHZ states and their generalizations providing the standard examples
\cite{GaddeKrishnaSharma2022,PeningtonWalterWitteveen2023}.
For more general states, explicit calculations have largely been confined
to low replica indices
\cite{HarperTakayanagiTsuda2024,HarperMollabashiTakayanagiTasuki2025,IizukaLin2026,CamargoNishida2026}. In a companion paper \cite{CzechFengHarperJu2026}, we study a broader class of multi-invariants (beyond the multi-R\'enyi entropies), but restricted to a narrow class of stabilizer states (equation~(\ref{eq:twoblockstate}) and close cousins thereof).

In this paper, we organize the information about a stabilizer state in the form of a ``core graph;'' see Section~\ref{sec:coregraphrep}. In a core graph, a vertex is associated not to an individual qubit but a GHZ-like block. Applying the Gauss sum framework to the core graph representation of a stabilizer state substantially extends the range in which arbitrary-\(n\) analytic results can be obtained:

\begin{itemize}
\item  Treating each qubit as a separate party, we derive a closed formula
  for the R\'enyi multientropy of every tree graph state, valid for every
  integer \(n\ge2\) (Section~\ref{sec:trees}).
  We also obtain analytic expressions for large families of graph
  states containing loops, including odd cycles of arbitrary length
  for odd \(n\). A decomposition rule further extends the odd-\(n\)
  results to larger graphs (Section~\ref{sec:loops}).
\item We also extend the framework to coarser multipartitions (more than one qubit per party) and prime-dimensional qudits (Section~\ref{sec:generalization-scope}).
\end{itemize}

\noindent This paper is a direct continuation of our accompanying work \cite{CzechFengHarperJu2026}.

\paragraph{Notes and disclaimers}
A considerable proportion of this work was assisted by GPT-6 Astra and earlier models. We include a detailed statement of where and how it was used in the Discussion. All analytical proofs in this paper have been verified and simplified by us, the human authors. We have not attempted to verify enumerative calculations such as equation~(\ref{eq:cubic-small-n}). Several key results have been formalized and machine-verified in Lean~4~\cite{deMoura2021Lean4,mathlib2020}. The Lean source code and build instructions are publicly available in our GitHub repository~\cite{StabilizerRenyiLean2026}.

As we were completing this paper, we learned of concurrent work on the same subject by Abhijit Gadde and Sriram Akella. We thank these authors for synchronizing the arXiv submissions of our papers.

\section{Graph state representatives and finite field assignments}
\label{sec:calculus}

This section establishes finite field Gauss sums as the mathematical tool for computing R\'enyi multi-entropy for stabilizer states. Section \ref{subsec:2.1} briefly reviews R\'enyi multi-entropy and the graph state description of stabilizer states. In Section \ref{sec:coregraphrep}, we introduce the core-graph representation, which simplifies this description and organizes the states into families sharing a common core graph. Section \ref{subsec:finite-fourier-gauss} associates each such family with a single parameterized Gauss sum problem. Solving this problem yields the R\'enyi multi-entropy for the entire family, without requiring a separate calculation for each graph state.

As examples, Section \ref{subsec:worked-examples} gives complete evaluations for two of the simplest core-graph families beyond GHZ states. Finally, Section \ref{subsec:2.5} shows that local complementation of graph states is implemented by a Fourier identity in the finite field formulation. This provides a direct consistency check: the Gauss sum description preserves the local-unitary invariance of R\'enyi multi-entropy.

\subsection{The R\'enyi multi-entropy and the graph state representation of stabilizer states}
\label{subsec:2.1}

We begin with a brief introduction to R\'enyi multi-entropy and stabilizer states, the objects of our study.

\subsubsection{R\'enyi multi-entropy}

Consider a normalized pure state \(\lvert\Psi\rangle\) of \(q\) parties \(A_1,A_2,\ldots,A_{q-1}\), with Hilbert-space dimensions \(d_0,d_1,\ldots,d_{q-1}\). We also use the label \(A_q\equiv A_0\). In an orthonormal product basis,
\begin{equation}
  \lvert\Psi\rangle
  =\sum_{i_0=1}^{d_0}\cdots\sum_{i_{q-1}=1}^{d_{q-1}}
  \psi_{i_0\cdots i_{q-1}}
  \lvert e^{i_0}\rangle\otimes\cdots\otimes\lvert e^{i_{q-1}}\rangle,
\end{equation}
and
\begin{equation}
  \langle\Psi\rvert
  =\sum_{i_0=1}^{d_0}\cdots\sum_{i_{q-1}=1}^{d_{q-1}}
  \bar\psi^{i_0\cdots i_{q-1}}
  \langle e_{i_0}\rvert\otimes\cdots\otimes\langle e_{i_{q-1}}\rvert.
\end{equation}
The density matrix is \(\rho=\lvert\Psi\rangle\langle\Psi\rvert\).

For permutations \(g_0,\ldots,g_{q-1}\in S_s\) of \(s\) replicas, the corresponding \emph{multi-invariant} is
\begin{equation}
\label{inv-def}
\begin{aligned}
  \mathcal E(g_0,\ldots,g_{q-1})
  &={}
  \left(\prod_{t=1}^{s}\psi_{i_0^{(t)}\cdots i_{q-1}^{(t)}}\right)
  \left(\prod_{t=1}^{s}\bar\psi^{j_0^{(t)}\cdots j_{q-1}^{(t)}}\right)
  \prod_{\mu=0}^{q-1}\delta^{\vec i_\mu}_{g_\mu\cdot\vec j_\mu},
  \\
  \delta^{\vec i_\mu}_{g_\mu\cdot\vec j_\mu}
  &:=\prod_{t=1}^{s}
  \delta^{i_\mu^{(t)}}_{j_\mu^{(g_\mu\cdot t)}}.
\end{aligned}
\end{equation}
Repeated tensor indices are summed. Each permutation specifies how the indices of one party are contracted between the ket and bra replicas.

An important class of multi-invariants, and the primary focus of this article, comprises replica contractions that define the \(q\)-party R\'enyi-\(n\) multi-entropy. These contractions have replica symmetry group \(G=(\mathbb Z_n)^{q-1}\). We label the parties \(A_0,A_1,\ldots,A_{q-1}\), with \(A_q\equiv A_0\), and choose \(A_0\) as the ``reference'', setting \(g_0=e\). The remaining permutations \(g_1,\ldots,g_{q-1}\) are independent commuting generators of \(G\), each acting as a cyclic shift along one coordinate direction. The contraction can therefore be represented on a \((q-1)\)-dimensional toric lattice of period \(n\), with each site corresponding to one copy of the density matrix \(\rho\). Formally, we define:

\begin{definition}[integer-replica contraction for the R\'enyi multi-entropy]
\label{def:integer-replica-contraction}
Let \(q\ge2\) and \(n\ge2\) be integers, and let
\(\psi_{i_0\cdots i_{q-1}}\) be a normalized \(q\)-party pure-state tensor.
With \(A_0\) as reference, set
\begin{equation}
    \TT^{(q)}:=(\Zn)^{q-1},
  \qquad
  e_0=0,
\end{equation}
where \(e_1,\ldots,e_{q-1}\) are the standard unit vectors and all coordinates are understood modulo \(n\). Represent the permutations by \(g_\mu\cdot u=u-e_\mu\) for \(\mu=0,\ldots,q-1\). The integer-replica contraction is the special case of Eq.~\eqref{inv-def} with these permutations and \(s=n^{q-1}\):
\begin{equation}
  \mathcal Z_n(\psi)=
  \sum_{\{i_\mu(u)\}}
  \prod_{u\in\TT^{(q)}}
  \psi_{i_0(u),i_1(u),\ldots,i_{q-1}(u)}
  \overline{
    \psi_{i_0(u),i_1(u+e_1),\ldots,i_{q-1}(u+e_{q-1})}
  }.
  \label{eq:general-toric-contraction}
\end{equation}
\end{definition}

We write \(\mathcal Z_n(\psi)\) for the replica trace contraction in Eq.~\eqref{eq:general-toric-contraction}, which contains \(n^{q-1}\) copies of \(\psi\) and the same number of its complex conjugate. For a normalized state, the R\'enyi multi-entropy used in this paper is
\begin{equation}
  S_n(\psi)
  =
  -\frac{\log\mathcal Z_n(\psi)}{(n-1)n^{q-2}},
  \qquad \mathcal Z_n(\psi)>0.
  \label{eq:entropy-normalization}
\end{equation}
Our convention differs from the unrescaled expression \(-\log\mathcal Z_n/(n-1)\) by a factor \(n^{-(q-2)}\). For \(q=2\), \(\mathcal Z_n=\operatorname{Tr}\rho_{A_1}^{\,n}\), where \(\rho_{A_1}=\operatorname{Tr}_{A_0}\rho\), and \(S_n\) reduces to the ordinary R\'enyi entropy.

For real wavefunction amplitudes, as used in Sections \ref{sec:calculus}--\ref{sec:loops}, complex conjugation leaves the coefficients unchanged. We retain the overline in Eq. \eqref{eq:general-toric-contraction} because the definition applies to general complex amplitudes, including those considered in Section~\ref{subsec:prime-qudits}.



\begin{figure}[!h]
\centering
\begin{tikzpicture}[
  every node/.style={font=\small},
  site/.style={circle, fill=black, inner sep=1.5pt},
  arr/.style={-{Latex[length=2mm]}, thick},
  wrap/.style={thick, line cap=round}
]

\def\wrapradius{0.18}

\foreach \x in {0,1,2} {
  \foreach \y in {0,1,2} {
    \coordinate (p\x\y) at (\x,\y);
  }
}

\foreach \k in {0,1,2} {
  \draw[gray!45, line width=1pt] (\k,0) -- (\k,2);
  \draw[gray!45, line width=1pt] (0,\k) -- (2,\k);
}

\foreach \x in {0,1,2} {
  \draw[wrap, gray!45]
    (p\x2)
    arc[
      start angle=180,
      end angle=0,
      radius=\wrapradius
    ]
    -- ({\x+2*\wrapradius},0)
    arc[
      start angle=0,
      end angle=-180,
      radius=\wrapradius
    ];
}

\foreach \y in {0,1,2} {
  \draw[
    wrap,
    gray!45,
    preaction={draw=white, line width=2.8pt}
  ]
    (p2\y)
    arc[
      start angle=-90,
      end angle=90,
      radius=\wrapradius
    ]
    -- (0,{\y+2*\wrapradius})
    arc[
      start angle=90,
      end angle=270,
      radius=\wrapradius
    ];
}

\foreach \x in {0,1,2} {
  \foreach \y in {0,1,2} {
    \node[site] at (p\x\y) {};
  }
}

\draw[arr, blue]
  (p11) -- (p21)
  node[pos=0.73, below=2pt] {\(e_1\)};

\draw[arr, red!70!black]
  (p11) -- (p12)
  node[pos=0.73, left=2pt] {\(e_2\)};

\node[
  draw,
  rounded corners=1pt,
  minimum width=0.70cm,
  minimum height=0.28cm,
  inner sep=0pt
] at (3.52,1.55) {\(0\)};

\node[anchor=west] at (4.12,1.55) {party \(A_0\)};
\node[anchor=west] at (5.7,1.55) {reference, no shift};

\draw[arr, blue]
  (3.17,1.00) -- (3.87,1.00);

\node[anchor=west] at (4.12,1.00) {party \(A_1\)};
\node[anchor=west] at (5.7,1.00) {shift by \(e_1\)};

\draw[arr, red!70!black]
  (3.17,0.45) -- (3.87,0.45);

\node[anchor=west] at (4.12,0.45) {party \(A_2\)};
\node[anchor=west] at (5.7,0.45) {shift by \(e_2\)};

\end{tikzpicture}
\caption{Replica contractions for $q=3$ parties. The reference party \(A_0\) is contracted without a shift, while the \(A_1\)- and \(A_2\)-indices are contracted with $\mathbb{Z}_n$-shifted replicas. The full replica group is $\mathbb{Z}_n^{q-1}$, here shown for $n=3$.}
\label{fig:replica-torus-gluing}
\end{figure}
\FloatBarrier

\begin{lemma}[partition-local unitary invariance]
\label{lem:partition-local-unitary-invariance}
\label{lem:one-body-phases}
For a fixed partition and reference party, the contraction \(\mathcal Z_n\) in \eqref{eq:general-toric-contraction} is invariant under \(U_1\otimes\cdots\otimes U_q\), where each \(U_\mu\) acts on the corresponding party of that partition.
\end{lemma}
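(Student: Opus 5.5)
The plan is to substitute the transformed amplitudes into \eqref{eq:general-toric-contraction} and show that each local unitary cancels against its conjugate through one summed index per replica site. Under \(U_0\otimes\cdots\otimes U_{q-1}\) (with \(U_q\equiv U_0\) acting on the reference party \(A_0\)), the amplitudes become
\[
\psi'_{i_0\cdots i_{q-1}}=\sum_{k_0,\ldots,k_{q-1}}\Bigl(\prod_{\mu}(U_\mu)_{i_\mu k_\mu}\Bigr)\psi_{k_0\cdots k_{q-1}} .
\]
We insert this expression for every factor \(\psi'\) and every factor \(\overline{\psi'}\) in \(\mathcal Z_n(\psi')\). Each site \(u\in\TT^{(q)}\) then carries a new summation index \(k_\mu(u)\) from the ket and \(l_\mu(u)\) from the bra, for every party \(\mu\).

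Next we regroup the sum by party and by index. Fix a party \(\mu\) and an original index \(i_\mu(v)\). In \eqref{eq:general-toric-contraction} this index appears in exactly one ket factor, namely at site \(v\). It also appears in exactly one bra factor, namely at the site \(u\) with \(u+e_\mu=v\), which is \(u=v-e_\mu\), or \(u=v\) when \(\mu=0\) since \(e_0=0\). This uniqueness holds because \(u\mapsto u+e_\mu\) is a bijection of the torus. Collecting the matrix elements that involve \(i_\mu(v)\) and summing over it gives
\[
\sum_{i}(U_\mu)_{i\,k}\,\overline{(U_\mu)_{i\,l}}=(U_\mu^\dagger U_\mu)_{l k}=\delta_{lk},
\]
with \(k=k_\mu(v)\) and \(l=l_\mu(v-e_\mu)\). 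This sets \(l_\mu(v-e_\mu)=k_\mu(v)\).

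After doing this for all \(\mu\) and \(v\), the remaining sum over \(k\) and \(l\) is exactly \eqref{eq:general-toric-contraction} for \(\psi\), with \(k\) playing the role of \(i\). Hence \(\mathcal Z_n(\psi')=\mathcal Z_n(\psi)\). Applying the same argument to the permutation form \eqref{inv-def} gives invariance of every multi-invariant, since each \(g_\mu\) is also a bijection.

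The main obstacle is only bookkeeping: one must confirm that every ket index is matched to exactly one bra index, including for the reference party. Both facts follow from the shifts being permutations of the replicas. No further estimates are needed, and phases (one-body diagonal unitaries) are a special case.
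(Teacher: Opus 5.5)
Your proposal is correct and follows the paper's own argument. Both use the bijectivity of translation by $e_\mu$ to pair each ket-side $U_\mu$ with a conjugate $U_\mu^\dagger$ from the shifted bra factor, and then collapse each pair to the identity by unitarity. Your version writes out the index bookkeeping explicitly, including the identification $l_\mu(v-e_\mu)=k_\mu(v)$, which the paper leaves implicit.
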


\begin{proof}
After applying \(U_1\otimes\cdots\otimes U_q\), each party contributes one matrix \(U_\mu\) from the ket coefficient and one conjugate matrix \(U_\mu^\dagger\) from the shifted bra coefficient.  For the reference party these matrices are contracted at the same torus site.  For a non-reference party \(\mu\), the bra index is shifted by \(e_\mu\), but translation by \(e_\mu\) is a bijection of \(\TT^{(q)}\).  Relabeling the torus site along each closed replica link therefore pairs every \(U_\mu\) with \(U_\mu^\dagger\), and unitarity collapses each pair to the identity.
\end{proof}

\subsubsection{Graph states}
In this paper we calculate the R\'enyi multi-entropy for \emph{stabilizer states}. It is well-known that every stabilizer state is locally Clifford equivalent to a graph state \cite{hein2006graph,vandennest2004graph}. Therefore, in what follows our treatment of stabilizer states is phrased in terms of graph states. 

Given a graph \(G_{\rm phys}=(V_{\rm phys},E_{\rm phys})\) with \(N\) vertices, a graph state is prepared as follows. Starting from the state \(\ket{+}^{\otimes N}\), act with the gate $CZ = (-1)^{(1-Z) \otimes (1-Z)/4}$ on each pair of qubits whose corresponding vertices are connected by an edge:
\begin{equation}
    \ket{G_{\rm phys}}=\prod_{\{i,j\}\in E_{\rm phys}}CZ_{ij}\ket{+}^{\otimes N}
\end{equation}
Importantly for us, the graph state can be equivalently written as:
\begin{equation}
    \ket{G_{\rm phys}}=
  2^{-N/2}\sum_{x\,\in\,\F_2^N}
  (-1)^{D_{\rm phys}(x)}\ket{x_1\cdots x_N}
  \quad {\rm with}~~
  D_{\rm phys}(x)=\sum_{\{i,j\}\,\in\, E_{\rm phys}}x_ix_j.
\end{equation}
Here the kets $\ket{x_1\cdots x_N}$ are $(-1)^{x_i}$-eigenstates of $Z_i$.  As such, we interpret the $x_i$ as elements of the binary field \(\mathbb{F}_2\). The exponent \(D_{\rm phys}(x)\) is fully determined by the graph \(G\) and is a \emph{quadratic Boolean polynomial}: each \(CZ\) edge contributes exactly one bilinear monomial \(x_i x_j\).  This quadratic structure allows us to express the replica contraction as a finite field Gauss sum.


\subsection{Core-graph representation of stabilizer states}
\label{sec:coregraphrep}
We introduce a \emph{core-graph representation}, which organizes stabilizer states into families of coupled GHZ blocks. In a suitable local basis, all qubits within a block share the same binary label, so each block is represented by a single core vertex, while the edges record the quadratic couplings between blocks. Varying the block sizes gives a family of states with the same core graph but different numbers of physical qubits. As we show in Section~\ref{subsec:finite-fourier-gauss}, their R\'enyi multientropies can be obtained from a single calculation with the block sizes as parameters.


\begin{figure}[t]
\centering
\begin{tikzpicture}[scale=1.0]

\node[smSmallLabel] at (1.55,1.55) {physical graph $G_{\rm phys}$};

\node[smVertex,color=red] (a1) at (0,0.45) {};
\node[smVertex] (a2) at (0,-0.45) {};

\node[smVertex] (b1) at (1.7,0.75) {};
\node[smVertex,color=red] (b2) at (1.7,0) {};
\node[smVertex] (b3) at (1.7,-0.75) {};

\node[smVertex,color=red] (c1) at (3.4,0.45) {};
\node[smVertex] (c2) at (3.4,-0.45) {};

\draw[smEdge] (a1)--(a2);
\draw[smEdge] (a1)--(b2);
\draw[smEdge] (b1)--(b2);
\draw[smEdge] (b2)--(b3);
\draw[smEdge] (b2)--(c1);
\draw[smEdge] (c1)--(c2);

\node[smDashedBlock, fit=(a1)(a2), label=above:{\scriptsize \(B_1\)}] {};
\node[smDashedBlock, fit=(b1)(b2)(b3), label=above:{\scriptsize \(B_2\)}] {};
\node[smDashedBlock, fit=(c1)(c2), label=above:{\scriptsize \(B_3\)}] {};

\node[smSmallLabel, align=center] at (1.7,-1.45)
{\(a_1=2,\quad a_2=3,\quad a_3=2\)};

\draw[smArrow] (4.35,0) -- (6,0);

\node[smSmallLabel] at (8.5,1.55) {core graph representation $(G,\mathbf a)$};

\node[draw, circle, color=red, minimum size=8mm, inner sep=0pt] (A) at (7,0,0) {\color{black} 2};
\node[draw, circle, color=red, minimum size=8mm, inner sep=0pt] (B) at (8.5,0,0) {\color{black} 3};
\node[draw, circle, color=red, minimum size=8mm, inner sep=0pt] (C) at (10,0,0) {\color{black} 2};

\draw[smEdge] (A)--(B);
\draw[smEdge] (B)--(C);

\end{tikzpicture}
\caption{Core-graph representation of a seven-qubit graph state $\ket{G_{\rm phys}}$. The vertices are partitioned into three star-shaped blocks $B_1,B_2,B_3$ of sizes $(a_1,a_2,a_3)=(2,3,2)$. Their red central vertices and the edges connecting them form the core graph $G$, shown on the right together with the block sizes. By Proposition~\ref{prop:coregraphstate-physicalgraphstate}, $\ket{G_{\rm phys}}$ is local-Clifford equivalent to the corresponding core-graph state $\ket{\Psi_{G,\mathbf{a}}}$.}
\label{fig:ghz-block-compression}
\end{figure}

We illustrate how a core-graph representation works for a given stabilizer state through an example first. Consider a 7-qubit graph state $\ket{G_{\rm phys}}$ shown in Figure \ref{fig:ghz-block-compression}. The graph consists of three star subgraphs $B_1, B_2, B_3$ centered at the red qubits $c_1, c_2, c_3$ connected along a path, with block sizes $a_1=2, a_2=3, a_3=2$. Specifically, $c_1$ is attached to one leaf, $c_2$ to two leaves, and $c_3$ to one leaf.

Applying a single-qubit Hadamard gate $H_r$ to every leaf qubit $r$ transforms this seven-qubit state into
\begin{equation}
\begin{aligned}\label{eq: core3example}
  \left(\prod_{r\in\text{leaves}} H_r\right)\ket{G_{\rm phys}}
  &\propto
  \sum_{x_1,x_2,x_3\,\in\,\F_2} (-1)^{x_1 x_2 + x_2 x_3} \ket{x_1^2 \, x_2^3 \, x_3^2}\\
  &=
  \ket{00\,000\,00} + \ket{00\,000\,11} + \ket{00\,111\,00} - \ket{00\,111\,11}\\
  &\quad + \ket{11\,000\,00} + \ket{11\,000\,11} - \ket{11\,111\,00} + \ket{11\,111\,11}.
\end{aligned}
\end{equation}
Notice that this state is simply a three-qubit chain graph state $2^{-3/2}\sum_x (-1)^{x_1 x_2 + x_2 x_3}\ket{\bar{x}_1 \bar{x}_2 \bar{x}_3}$, with each core variable $x_v$ encoded into an $a_v$-qubit repetition code:
\begin{equation}
    V_v:\ket{\bar{0}}\mapsto\ket{0^{a_v}}:=\ket{0}^{\otimes a_v},
  \quad
  \ket{\bar{1}}\mapsto\ket{1^{a_v}}:=\ket{1}^{\otimes a_v},
    \quad
  \forall v\in V.
\end{equation}
Therefore, we call this simpler encoded description the \emph{core-graph state} $\ket{\Psi_{G,\mathbf a}}$. As shown in Figure \ref{fig:ghz-block-compression}, for any graph state $\ket{G_{\rm phys}}$, we collapse it to a smaller core graph $G=(V,E)$ by pruning all leaves. For each non-leaf vertex $v\in V$, we set its block size
\begin{equation}
  a_v = 1 + \#\{\text{pruned leaves attached to } v\}.
\end{equation}
We then define the corresponding \emph{core-graph state} $\ket{\Psi_{G,\mathbf a}}$ in the computational basis by
\begin{equation}\label{eq:core-state}
\ket{\Psi_{G,\mathbf a}}
:=2^{-k/2}\sum_{x\,\in\,\F_2^{|V|}}(-1)^{D_G(x)+\ell(x)}
\bigotimes_{v\,\in\, V}\ket{x_v^{a_v}},
\qquad
D_G(x)=\sum_{(u,v)\,\in\, E}x_ux_v ,
\end{equation}
where $k:=|V|$ and \(\ell(x):=\sum_v l_v x_v\) is an optional linear phase that can be removed by local unitaries. By construction, a core-graph can LC-equivalently represent \textit{any} stabilizer state. The following proposition explicitly shows the LC-equivalency.
\begin{proposition}[Local-Clifford equivalence to core-graph states]\label{prop:coregraphstate-physicalgraphstate}
Every graph state $\ket{G_{\rm phys}}$ is local-Clifford equivalent to its core-graph state $\ket{\Psi_{G,\mathbf a}}$ via Hadamard gates on leaves and Pauli-$Z$ gates on centers:
\begin{equation}
  \ket{G_{\rm phys}}
  =
  \left(\prod_{v\in V} Z_{c_v}^{l_v}\right)
  \left(\prod_{r\in \text{leaves}} H_r\right)
  \ket{\Psi_{G,\mathbf a}},
\end{equation}
where $c_v$ is the center qubit of block $v$.
\end{proposition}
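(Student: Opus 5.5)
The plan is a direct amplitude computation. I will apply the inverse Clifford $\bigl(\prod_{r} H_r\bigr)\bigl(\prod_v Z_{c_v}^{l_v}\bigr)$ to $\ket{G_{\rm phys}}$ and show that the result is $\ket{\Psi_{G,\mathbf a}}$ as written in \eqref{eq:core-state}. Since $H_r$ and $Z_{c_v}$ are self-inverse and act on distinct qubits, they commute, so this is equivalent to the claimed identity.

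\textbf{Step 1 (fix the block structure).} Label the physical qubits as centers $c_v$, $v\in V$, together with leaves $r$. Each leaf has exactly one neighbour, and that neighbour is the center $c_{v(r)}$ of its block. For a connected component that is a single edge, both endpoints are leaves; I will declare one of them the center by convention, and it then becomes a core vertex with $a_v=2$. With these labels,
\begin{equation}
D_{\rm phys}(x)=\sum_{\{u,v\}\in E} x_{c_u}x_{c_v}+\sum_{r\in\text{leaves}} x_r\, x_{c_{v(r)}} .
\end{equation}
Here $E$ is exactly the edge set of the core graph, because pruning leaves removes only the leaf edges.

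\textbf{Step 2 (Hadamards on leaves).} Each leaf variable $x_r$ appears only in the single monomial $x_r x_{c_{v(r)}}$. The sum over $x_r$ therefore factorizes, and for fixed $x_{c_v}$ it gives
\begin{equation}
H_r\,\frac{1}{\sqrt2}\sum_{x_r\in\F_2}(-1)^{x_r x_{c_v}}\ket{x_r}=\ket{x_{c_v}} .
\end{equation}
This is just $H\ket{+}=\ket0$ and $H\ket{-}=\ket1$. Applying this to every leaf replaces each leaf ket by a copy of its center's bit, and it absorbs one factor $2^{-1/2}$ per leaf. The result is
\begin{equation}
2^{-k/2}\sum_{x\in\F_2^{k}}(-1)^{D_G(x)}\bigotimes_{v}\ket{x_v^{a_v}} ,
\end{equation}
with $x_v:=x_{c_v}$. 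This is $\ket{\Psi_{G,\mathbf a}}$ with $\ell=0$.

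\textbf{Step 3 (linear phase).} For a core-graph state carrying a linear phase $\ell$, the operator $Z_{c_v}$ acts on $\ket{x_v^{a_v}}$ as multiplication by $(-1)^{x_v}$. Hence $\prod_v Z_{c_v}^{l_v}$ multiplies each amplitude by $(-1)^{\ell(x)}$, which adds or cancels the linear phase. This matches the two sides for any choice of $l_v$ that the convention assigns. For the plain graph state the convention is $l_v=0$; if one allows graph states decorated by $Z$ byproducts, those are exactly the $l_v$.

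The main obstacle is not the algebra, which is a one-line Fourier identity per leaf. It is the bookkeeping in Step 1: making sure that every non-center vertex is a leaf attached to a unique center, and handling degenerate components (isolated vertices, which give $a_v=1$, and single edges) consistently with the definition $a_v=1+\#\{\text{leaves}\}$. I would state these conventions explicitly before running Step 2.
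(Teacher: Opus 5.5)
Your proof is correct, but it takes a different route from the paper's.

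The paper argues in the stabilizer formalism. It takes the generators of $\ket{\Psi_{G,\mathbf a}}$: the repetition-code checks $Z_{c_v}Z_r$ for each leaf $r$, and the encoded generators $K_v=(-1)^{l_v}\bigl(\prod_{j\in B_v}X_j\bigr)\bigl(\prod_{u\in N_G(v)}Z_{c_u}\bigr)$. It then checks that conjugation by the leaf Hadamards and the center factors $Z_{c_v}^{l_v}$ sends these to the canonical graph-state generators $X_rZ_{c_v}$ and $X_{c_v}\prod_{r\in L_v}Z_r\prod_{u\in N_G(v)}Z_{c_u}$.

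You instead compute amplitudes directly. You integrate out each leaf bit with the one-bit identity $H\,2^{-1/2}\sum_{x_r}(-1)^{x_rx_c}\ket{x_r}=\ket{x_c}$. This is the same kind of bit-string Fourier step the paper itself uses in Section~\ref{subsec:2.5}, so your argument sits naturally inside the paper's assignment language.

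What each route buys:
\begin{itemize}
\item The stabilizer argument is quicker for a reader fluent in Clifford conjugation. It also shows directly that the $Z_{c_v}^{l_v}$ are exactly the sign corrections on the $K_v$.
\item However, matching stabilizer groups only fixes the state up to a global phase. It also tacitly needs the listed generators to be $N$ independent operators.
\item Your computation gives the stated equality exactly, including the normalization $2^{-N/2}\to2^{-k/2}$ and the overall phase.
\item Your computation also makes you face the degenerate components, which the paper's proof passes over. These are isolated vertices and single-edge components. In a single-edge component both endpoints are leaves, so ``pruning all leaves'' as literally defined would delete the component. Your convention of promoting one endpoint to a center with $a_v=2$ is the right fix.
\end{itemize}
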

\begin{proof}
On each block $v$, the leaf stabilizers $Z_{c_v}Z_r$ are mapped by the leaf Hadamards $H_r$ to $X_r Z_{c_v}$, matching the physical graph state generators for the leaves. The encoded generator $K_v = (-1)^{l_v}(\prod_{j\in B_v} X_j)(\prod_{u\in N_G(v)} Z_{c_u})$ is mapped by $\prod_{r\in L_v} H_r$ and $Z_{c_v}^{l_v}$ to $X_{c_v}(\prod_{r\in L_v} Z_r)(\prod_{u\in N_G(v)} Z_{c_u})$, which is the canonical physical graph state generator for the center $c_v$.
\end{proof}
For a fixed graph state, finding its core-graph representation is straightforward: One simply prunes all leaves and attaches to every non-leaf vertex $v$ a quantity $a_v$, set to the number of pruned vertices plus one. We point out, however, that distinct graph states in the same LC orbit have distinct core-graph representations, some of which may be far simpler than others. If one had a way of efficiently finding the LC-equivalent graph representation with the smallest core-graph, the computations presented in this paper would be amenable to a further simplification. Unfortunately, scanning over the orbit of LC-equivalent graph states is known to be NP-hard \cite{Adcock2020,Sharma:2025kkb}.

From now on, we work directly with core-graph states, but our results apply to all stabilizer states.

\subsection{R\'enyi multi-entropy as a finite field Gauss sum problem}
\label{subsec:finite-fourier-gauss}

The core-graph representation introduced above gives a simple starting point for the replica calculation. Its nonzero amplitudes all have the same magnitude, and their signs are determined by the quadratic polynomial $D_G$. We now show how this representation turns the calculation of R\'enyi multi-entropy into a finite field Gauss sum problem, and explain the basic summations used to solve it.

Let $G$ have $k$ core vertices with GHZ-block sizes $a_v$. We treat each physical qubit as a separate party, so that $q=\sum_{v=1}^k a_v$. For an integer $n\geq 2$, the R\'enyi multi-entropy is
\begin{equation}
  S_n(G)
  =
  -\frac{\log \mathcal Z_n(G)}{(n-1)n^{q-2}},
  \qquad \mathcal Z_n(G)>0,
  \label{eq:entropy-normalizationII}
\end{equation}
where $\mathcal Z_n(G)$ is the replica contraction of the normalized state defined in Section \ref{subsec:2.1}. It contains $n^{q-1}$ copies of the wavefunction and the same number of its complex conjugate. For $q=2$, this definition reduces to the ordinary R\'enyi entropy.

Choose \(A_0\) as the reference party and label its GHZ block by \(v=1\). We group the replica-torus coordinates by GHZ block, writing each site as \((\mathbf r,\mathbf y,\mathbf z,\ldots)\). The vector \(\mathbf r\) contains the \(a_1-1\) coordinates associated with the non-reference qubits in the first block. Each subsequent block \(v\) contributes \(a_v\) coordinates, grouped into \(\mathbf y,\mathbf z,\ldots\) in block order. All coordinate components take values in \(\Zn\).

An {\bf assignment} specifies a ket and a bra bit string from the core-graph expansion at every site of the replica torus. It is \emph{valid} if it satisfies the gluing conditions: all bit labels paired by the replica contraction must agree. Each valid assignment gives one nonzero term in the contraction, equal to \(+1\) or \(-1\) after removing the common normalization factor.

For each GHZ block $v=1,\ldots,k$, introduce the ket and bra {\bf assignment fields}:
\begin{equation}
\label{defassignmentfields}
    A_v^{ket},A_v^{bra}:
  \TT^{(q)}=(\Zn)^{q-1}
  \longrightarrow \F_2.
\end{equation}
An assignment field specifies the binary label of its block at every replica site $(\mathbf r,\mathbf y,\mathbf z,\ldots)$. All physical qubits in block $v$ carry the label $A_v^{ket}(\mathbf r,\mathbf y,\mathbf z,\ldots)$ in the ket and $A_v^{bra}(\mathbf r,\mathbf y,\mathbf z,\ldots)$ in the bra. In the two-block example below, the fields of the first block are written as $A^{ket},A^{bra}$, and those of the second block as $B^{ket},B^{bra}$.

We collect the assignment fields into
\begin{equation}
    \mathbf A^{ket}=(A_1^{ket},\ldots,A_k^{ket}),
  \qquad
  \mathbf A^{bra}=(A_1^{bra},\ldots,A_k^{bra}).
\end{equation}
Thus $\mathbf A^{ket}(\mathbf r,\mathbf y,\mathbf z,\ldots)$ and $\mathbf A^{bra}(\mathbf r,\mathbf y,\mathbf z,\ldots)$ are the binary strings labeling the ket and bra basis states at that site.

If physical qubit $\mu$ belongs to block $v(\mu)$, the replica contraction identifies its bra label with its ket label at the appropriate shifted site:
\begin{equation}
  A_{v(\mu)}^{bra}(\mathbf r,\mathbf y,\mathbf z,\ldots)
  =
  A_{v(\mu)}^{ket}
  \bigl((\mathbf r,\mathbf y,\mathbf z,\ldots)+\mathbf e_\mu\bigr).
  \label{eq:core-replica-gluing}
\end{equation}
Here \(\mu=0,\ldots,q-1\), \(\mathbf e_0=\mathbf 0\), and \(\mathbf e_\mu\) for \(\mu\ne0\) is the unit shift in the coordinate associated with physical qubit \(\mu\), with coordinates understood modulo \(n\). In the examples below, unit shifts within the \(\mathbf r\)-, \(\mathbf y\)-, and \(\mathbf z\)-coordinate groups are written as \(\mathbf e_r\), \(\mathbf e_y\), and \(\mathbf e_z\), respectively. Thus the qubits within one GHZ block share an assignment field in the ket and one in the bra, but each physical qubit still supplies its own gluing condition.

We call the GHZ block containing the reference party the \emph{root block}. Since \(A_0\) belongs to block \(v=1\) and \(\mathbf e_0=\mathbf 0\), Eq.~\eqref{eq:core-replica-gluing} gives \(A_1^{ket}=A_1^{bra}\) at every replica site. We call this common assignment field the \emph{root field}. Choosing a different reference party changes the coordinate description of the assignment fields, but not the contraction \(\mathcal Z_n(G)\).

We denote the set of all valid assignments by $\mathcal A_n(G,\mathbf a)$. The gluing conditions are homogeneous linear equations over $\F_2$, so this set is a finite-dimensional vector space over $\F_2$.

Recall that \(D_G(x)=\sum_{(u,v)\in E}x_u x_v\) is the quadratic phase polynomial of the core-graph state in Eq.~\eqref{eq:core-state}, with one term for each edge of \(G\). For a valid assignment \((\mathbf A^{ket},\mathbf A^{bra})\), multiplying the signs of all ket and bra amplitudes gives \((-1)^{Q_{G,n}(\mathbf A^{ket},\mathbf A^{bra})}\), where
\begin{equation}
  Q_{G,n}(\mathbf A^{ket},\mathbf A^{bra})=
  \sum_{\mathbf r,\mathbf y,\mathbf z,\ldots}
  \Bigl[
    D_G\bigl(\mathbf A^{ket}
      (\mathbf r,\mathbf y,\mathbf z,\ldots)\bigr)
    +D_G\bigl(\mathbf A^{bra}
      (\mathbf r,\mathbf y,\mathbf z,\ldots)\bigr)
  \Bigr]
  \in\F_2.
  \label{eq:replica-quadratic-exponent}
\end{equation}
The sum runs over all replica sites $(\mathbf r,\mathbf y,\mathbf z,\ldots)\in\TT^{(q)}$. The amplitudes are real in this representation, so complex conjugation leaves the bra signs unchanged. 
Each ket and bra pair also contributes the common factor $2^{-k}$. Since there are $n^{q-1}$ replica sites, the contraction is
\begin{equation}
\begin{aligned}
  \mathcal Z_n(G)
  &=
  2^{-k n^{q-1}}\Delta_G(n),
  \\
  \Delta_G(n)
  &=
  \sum_{(\mathbf A^{ket},\mathbf A^{bra})
         \in\mathcal A_n(G,\mathbf a)}
  (-1)^{Q_{G,n}(\mathbf A^{ket},\mathbf A^{bra})}.
\end{aligned}
  \label{eq:replica-gauss-representation}
\end{equation} We refer to $\Delta_G(n)$ as {\bf the assignment sum} and remind the reader that the dummy summation variables in it are {\bf assignment fields}; viz. equation~(\ref{defassignmentfields}).

Solving the gluing conditions expresses the values of the assignment fields as linear functions of a set of independent binary variables. Substituting these expressions into $D_G$ leaves a polynomial of degree at most two. Thus the assignment sum $\Delta_G(n)$ is a quadratic Gauss sum over $\F_2$: a sum of signs whose exponents are given by a quadratic polynomial in independent binary variables.

The R\'enyi multi-entropy then follows directly from Eq.~\eqref{eq:entropy-normalization}:
\begin{equation}
    S_n(G)
  =
  \frac{k n^{q-1}-\log_2\Delta_G(n)}
       {(n-1)n^{q-2}}\log 2
       \qquad{\rm with}~
  \Delta_G(n)>0.
\end{equation}
The remaining task is to evaluate the assignment sum $\Delta_G(n)$.

To do so, we first solve the gluing conditions and identify the independent values of the assignment fields. Let $A:S\to\F_2$ be an assignment field whose values on a finite set $S$ can be chosen independently. For a fixed function $\kappa:S\to\F_2$, the basic summation identity is
\begin{equation}
  \sum_{A:S\to\F_2}
  (-1)^{\sum_{s\in S}A(s)\kappa(s)}
  =
  \begin{cases}
    2^{|S|}, & \kappa=0,\\
    0, & \kappa\ne0,
  \end{cases}
  \label{eq:finite-fourier-projection}
\end{equation}
where $\kappa=0$ means $\kappa(s)=0$ for every $s\in S$. If $\kappa=0$, all $2^{|S|}$ assignments contribute $+1$. Otherwise, changing the value of $A$ at any point where $\kappa(s)=1$ reverses the sign, so the terms cancel in pairs.

Suppose the exponent is $\sum_{s\in S}A(s)\kappa(s)+R$, where $\kappa$ and $R$ depend only on the remaining fields. For each fixed choice of those fields, summing over $A$ therefore imposes $\kappa(s)=0$ for every $s\in S$ and gives a factor of $2^{|S|}$ when these conditions hold. The remaining sign is $(-1)^R$. Because the full exponent has degree at most two in the independent binary variables, each $\kappa(s)$ is a linear expression in the remaining variables, possibly including a constant.

If these equations have no solution, the assignment sum vanishes. Otherwise, solve them and continue with the remaining free variables. If the remaining exponent vanishes on all solutions, only the number of solutions is needed: each free binary variable contributes another factor of two.

The same identity also evaluates sums with quadratic terms. After solving the constraints, express the remaining assignment fields in terms of independent binary variables. If the exponent contains a product $\xi\eta$ of two such variables, it can be written as
\begin{equation}
  Q(\xi,\eta,\boldsymbol{\zeta})
  =
  \xi\eta
  +\xi L(\boldsymbol{\zeta})
  +\eta M(\boldsymbol{\zeta})
  +R(\boldsymbol{\zeta}),
  \label{eq:quadratic-pair-decomposition}
\end{equation}
where $\boldsymbol{\zeta}$ collects all the other remaining independent binary variables. Here $L$ and $M$ are $\mathbb{F}_2$-linear expressions, possibly including constants, and $R$ has degree at most two.

For each fixed $\boldsymbol{\zeta}$, summing over $\xi$ imposes
$\eta=L(\boldsymbol{\zeta})$. Consequently,
\begin{equation}
  \sum_{\xi,\eta\in\F_2}
  (-1)^{Q(\xi,\eta,\boldsymbol{\zeta})}
  =
  2(-1)^{
    R(\boldsymbol{\zeta})
    +L(\boldsymbol{\zeta})M(\boldsymbol{\zeta})
  }.
  \label{eq:quadratic-pair-summation}
\end{equation}
This removes two variables and leaves another exponent of degree at most two. Repeating the step removes all products of distinct variables. The remaining linear terms are evaluated by Eq.~\eqref{eq:finite-fourier-projection}, while a constant term $c$ contributes the overall sign $(-1)^c$. Keeping this sign and all factors of two produced during the summations gives the complete assignment sum.

This procedure shows that a quadratic Gauss sum over $\F_2$ is either zero or a signed power of two. Its sign must be determined by a calculation. For the graph families evaluated below, the explicit reductions give $\Delta_G(n)>0$, as required for a real-valued R\'enyi multi-entropy in equation~\eqref{eq:entropy-normalization}. 

The procedure applies to every fixed integer $n$. Obtaining a formula valid for arbitrary $n$ requires understanding how the gluing conditions and the remaining quadratic terms depend on $n$. The two-block and three-block examples below illustrate this calculation before we turn to more general core graphs.
\subsection{Worked examples: Two-block and three-block tree graphs}
\label{subsec:worked-examples}

\paragraph{Two-block graph.}
Consider a core graph consisting of two vertices joined by an edge, with arbitrary positive block sizes $a$ and $b$. The corresponding core-graph state is 
\begin{equation}\label{eq:twoblockstate}
  \ket{\Psi_{\mathrm{edge}}^{a+b}}
  =
  2^{-1}\sum_{x,y\,\in\,\F_2}(-1)^{xy}\ket{x^{\otimes a} y^{\otimes b}}.  
\end{equation}
Using Proposition~\ref{prop:coregraphstate-physicalgraphstate}, it is straightforward to find a physical graph state, which is LC-equivalent to this \textit{core-2} state, as shown in Figure~\ref{fig:core2}.
\begin{figure}[t!]
  \centering
   \includegraphics[width=.6\textwidth]{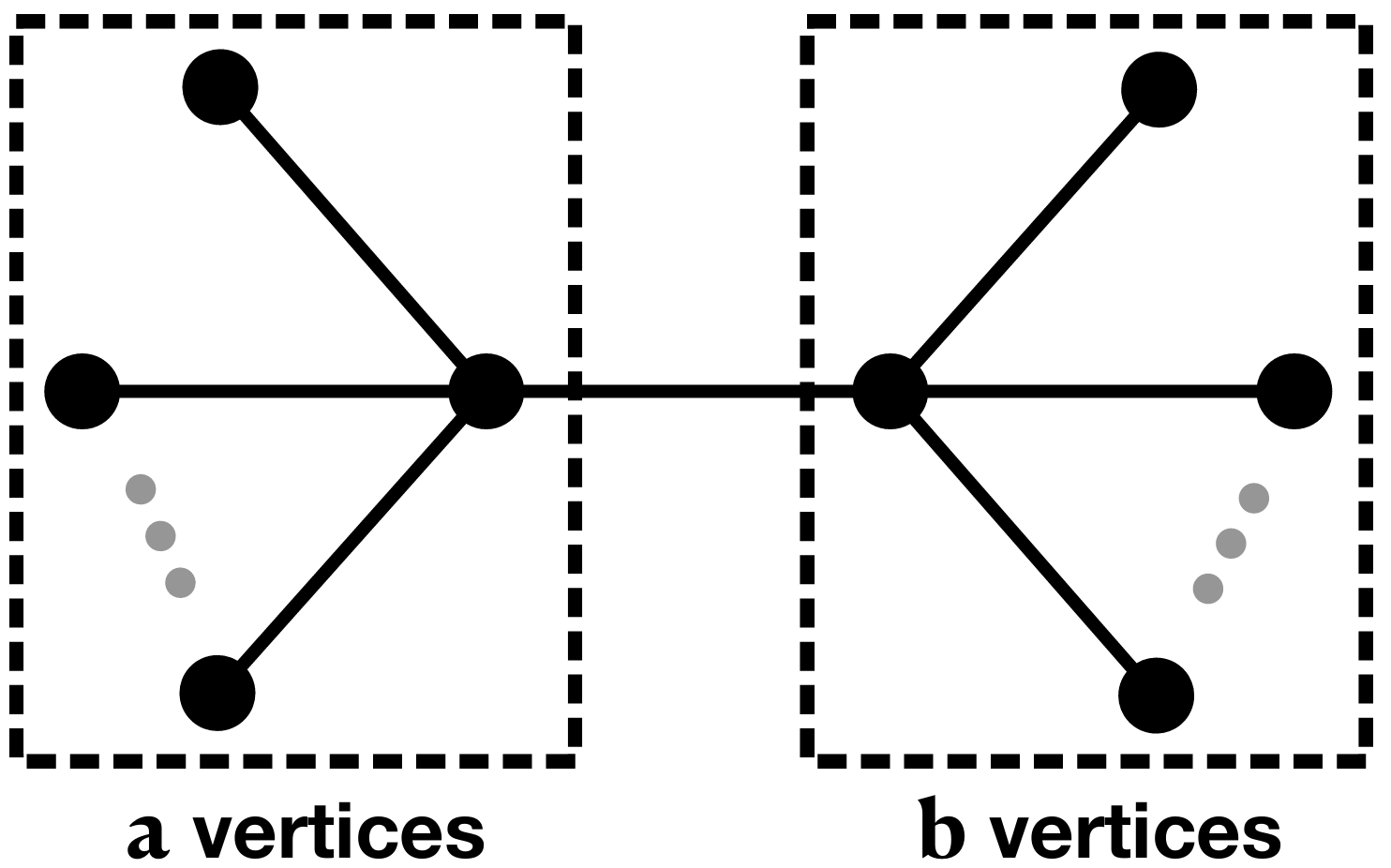}
  \caption{A graph state, which is LC-equivalent to a \textit{core-2} state defined in \eqref{eq:twoblockstate}.}
\label{fig:core2}
\end{figure}
Choose one qubit in the \(x\)-block as reference. The torus coordinates are then
\begin{equation}
    (\mathbf r,\mathbf y)
  =
  (r_1,\ldots,r_{a-1},y_1,\ldots,y_b)\in(\Zn)^{a+b-1},
\end{equation}
corresponding to the two blocks. Valid bit values of these blocks are specified by the assignment fields \(A^{bra},A^{ket},B^{bra},B^{ket}:(\Zn)^{a+b-1}\to\F_2\). Specifically, each torus point $(\mathbf r,\mathbf y)$ is associated with the matrix element $\ket{A^{ket}B^{ket}}\bra{A^{bra}B^{bra}}$. For the bra-ket shift contraction on the toric lattice to be nonzero, the valid assignment fields must satisfy the following gluing conditions in the reference direction and the $\mathbf r,\mathbf y$ directions:
\begin{equation}
\begin{aligned}
        &A^{bra}(\mathbf r,\mathbf y)=A^{ket}(\mathbf r,\mathbf y)=A^{ket}(\mathbf r+\mathbf e_r,\mathbf y)\\
        &B^{bra}(\mathbf r,\mathbf y)=B^{ket}(\mathbf r,\mathbf y+\mathbf e_y),
\end{aligned}
\end{equation}
where $\mathbf e_r$ and $\mathbf e_y$ are arbitrary standard basis vectors in the r- and y- coordinate sectors. Thus, let $s:=|\mathbf y|=y_1+\dots +y_b~(\text{mod } n )$. The coordinate dependence of the bra and ket assignment fields is:
\begin{equation}
\begin{aligned}
        &A^{bra}=A^{ket}=A(\mathbf y), && A: (\Zn)^{b}\to\F_2,\\
        &B^{bra}=B(\mathbf r,s+1),\ B^{ket}=B(\mathbf r,s),\quad && B: (\Zn)^{a}\to\F_2.
\end{aligned}
\end{equation}
This leaves \(2^{n^b}\) independent choices for \(A\) and \(2^{n^a}\) choices for \(B\), so the number of valid assignments is $2^{n^a+n^b}$.

These valid assignments can contribute negatively or positively to the assignment sum, due to the $(-1)$ exponent in \eqref{eq:twoblockstate}. The specific sign is determined by the parity of the total exponent. Define:
\begin{equation}
     N=\sum_{\mathbf r,\mathbf y}
  A(\mathbf y)\Big(B(\mathbf r,s)+B(\mathbf r,s+1)\Big)
  \in\F_2.
\end{equation}
Assignments with \(N=0\) contribute \(+1\), while assignments with \(N=1\) contribute \(-1\).
Thus, the assignment sum can be written as
\begin{equation}
    \Delta_{\mathrm{2}}(n)
  =
  \sum_{\text{valid}}(-1)^N =\sum_{\substack{A:(\Zn)^b\to\F_2\\B:(\Zn)^a\to\F_2}}
  (-1)^{\sum_{\mathbf y}
  A(\mathbf y)\sum_{\mathbf r}\Big(B(\mathbf r,s)+B(\mathbf r,s+1)\Big)}
  \to
  \begin{cases}
  2^{n^b},& \kappa_B=0,\\
  0,& \kappa_B\ne0.
  \end{cases}
\end{equation}
The final equation is based on the Gauss sum property \eqref{eq:finite-fourier-projection}, with
\begin{equation}
    \kappa_{B}(y):=
  \sum_{\mathbf r}\Big(B(\mathbf r,s)+B(\mathbf r,s+1)\Big)
\end{equation}
as a function of $y$.
It then remains to determine the dimension of the solution space of $\kappa_B=0$, since each independent solution contributes $2^{n^b}$ to the assignment sum. Letting $b(s):=\sum_{\mathbf r}\big(B(\mathbf r,s)\big)$, the constraint $\kappa_B=0$ becomes 
\begin{equation}\label{kappaBedge}
    b(s)+b(s+1)=0,
     \qquad \forall\ s\in\mathbb Z_n .
\end{equation}
Before imposing this constraint, $b$ is an arbitrary binary function on $\mathbb Z_n$. Its values $b(s)$ are independent since one may freely vary $B(\mathbf r,s)$ for different values of $s$.
However, \eqref{kappaBedge} demands that
\begin{equation}
    b(s)=b(s+1)
\end{equation}
for every $s$. This renders $b$ a constant function, and compresses its original $n$ degrees of freedom to only one. Therefore, the constraint $\kappa_B=0$ has rank $n-1$. Since the original field $B:(\mathbb Z_n)^{a}\to\mathbb F_2$ has $n^a$ independent binary degrees of freedom, the solution space has dimension $n^a-(n-1)$.

Consequently, combining the Fourier sum $2^{n^b}$ over $A$ and the $n^a-(n-1)$ free binary degrees of freedom of $B$, the assignment sum is:
\begin{equation}
\label{2blockfinal}
    \Delta_{\mathrm{2}}(n)
    =
    2^{n^b}\,2^{\,n^a-(n-1)}
    =
    2^{\,n^a+n^b-(n-1)}.
\end{equation}


\leanverifiedbox{The formula for the assignment sum \(\Delta_2(n)\) has been formally verified in Lean for all \(a,b\ge1\) and integers \(n\ge2\). In the \texttt{Core2} project of our public GitHub
repository~\cite{StabilizerRenyiLean2026}, see
\nolinkurl{Core2/Core2/CombinatorialAssignments.lean} and the declarations \path{card_validAssignments}, \path{card_constraint_kernel_user}, and \path{FormalMathLab.CombinatorialAssignments.signed_imbalance_user}.  These declarations begin from the finite field assignment model; Section 2.3 supplies the manuscript's contraction-to-assignment bridge.}

\paragraph{Three-block tree graph.}
Next consider a three-vertex path. Let the central block have size \(a\), and let the two end blocks have sizes \(b\) and \(c\). Choose a qubit in the central block as the reference. The corresponding state is:
\begin{equation}\label{eq:threeblock}
    \ket{\Psi_{\textrm{3-block}}^{a+b+c}}
  =
  2^{-\frac{3}{2}}\sum_{x,y,z\,\in\,\F_2}(-1)^{xy+xz}\ket{ y^{\otimes b} x^{\otimes a}z^{\otimes c}}.  
\end{equation}
Thus, on the ($a+b+c-1$)-dimensional toric lattice, the vector \(\mathbf r\in(\Zn)^{a-1}\) records the non-reference coordinates in the root block, while \(\mathbf y\in(\Zn)^b\) and \(\mathbf z\in(\Zn)^c\) depict the coordinates in the other two blocks. The assignment fields for the three blocks are again represented with $A,B,C$, with each torus point associated with $\ket{B^{ket} A^{ket} C^{ket}}\bra{B^{bra}A^{bra} C^{bra}}$.

Similar to the two-block example, the gluing conditions first simplify the coordinate dependence of the assignment fields. In the root block, the reference qubit together with the remaining physical qubits identifies the bra and ket root fields across all choices of $\mathbf r$, so the root field becomes independent of the root-block replica coordinates. In each non-root block, the repeated physical qubits impose the same shifted bra-ket identification along every replica direction of that block, so the corresponding assignment field depends on those coordinates only through their cyclic sum. Consequently, let $s=|\mathbf{y}|,t=|\mathbf{z}|$,
\begin{equation}\label{ABCform}
    \begin{aligned}
        &A^{bra}=A^{ket}=A(\mathbf y,\mathbf z), \qquad A: (\Zn)^{b+c}\to\F_2,\\
        &B^{bra}=B(\mathbf r,s+1,\mathbf z),\ B^{ket}=B(\mathbf r,s,\mathbf z),\qquad B: (\Zn)^{a+c}\to\F_2,\\
        &C^{bra}=C(\mathbf r,\mathbf y,t+1),\ C^{ket}=C(\mathbf r,\mathbf y,t),\qquad C: (\Zn)^{a+b}\to\F_2,
\end{aligned}
\end{equation}
resulting in $2^{n^{a+b}+n^{b+c}+n^{a+c}}$ valid assignments in total.

For each choice of assignment, the sign is determined by $\sum_{\mathbf r,\mathbf y, \mathbf z} A(\mathbf y,\mathbf z)\big(B(\mathbf r,s,\mathbf z)+B(\mathbf r,s+1,\mathbf z)+C(\mathbf r,\mathbf y,t)+C(\mathbf r,\mathbf y,t+1)\big)$. The assignment sum is obtained by summing these signed contributions:
      \begin{equation}
      \label{eq:bigeq}
\begin{aligned}
   & \Delta_{\mathrm{3}}(n)
  =
  \sum_{\text{valid}}(-1)^N =\sum_{\substack{A:(\Zn)^{b+c}\to\F_2\\B:(\Zn)^{a+c}\to\F_2\\C:(\Zn)^{a+b}\to\F_2}}
  (-1)^{\sum_{\mathbf y}
  A(\mathbf y,\mathbf z) \kappa_{BC}(\mathbf y,\mathbf z; s, t) 
  }
  =
  \begin{cases}
  2^{n^{b+c}}& {\rm if}~\kappa_{BC}=0\\
  0& {\rm if}~\kappa_{BC}\ne0
  \end{cases} 
  \\&  
  \kappa_{BC}:=\sum_{\mathbf r}\Big(B(\mathbf r,s,\mathbf z)+B(\mathbf r,s+1,\mathbf z)+C(\mathbf r,\mathbf y,t)+C(\mathbf r,\mathbf y,t+1)\Big).
\end{aligned}
\end{equation}
\noindent 
Therefore, similar to the two-block case, the calculation of the assignment sum reduces to determining the dimension of the solution space of $\kappa_{BC}=0$. Again, this is obtained by introducing the summed functions $b(s,\mathbf z)=\sum_{\mathbf r}B(\mathbf r,s,\mathbf z)$ and $c(\mathbf y,t)=\sum_{\mathbf r}C(\mathbf r,\mathbf y,t)$, then rewriting $\kappa_{BC}=0$ as:
\begin{equation}\label{kappabc}
    b(s,\mathbf z)+b(s+1,\mathbf z)+c(\mathbf y,t)+c(\mathbf y,t+1)=0.
\end{equation} 
Section~\ref{subsec:triangle-main} analyzes this space in detail for $a=b=c=2$ to evaluate the triangle's residual quadratic phase. Here we quote the dimension for arbitrary positive block sizes $a,b,c$:
\begin{equation}
    \dim K_n
    =n^{a+c}+n^{a+b}
    -(n-1)(n^b+n^c)
    +(n-1)^2.
\end{equation}
($K_n$ stands for the solution space of $\kappa_{BC}=0$.) Restoring the factor $2^{n^{b+c}}$ from the Fourier sum in (\ref{eq:bigeq}) gives
\begin{equation}
\label{3blocktreefinal}
\Delta_{\mathrm{3}}(n)=2^{X_{\mathrm{3}}(n)}
\quad {\rm with}~~
  X_{\mathrm{3}}(n)
    =
    {n^{b+c}+n^{a+c}+n^{a+b}-(n-1)(n^b+n^c)+(n-1)^2}.
\end{equation}

\leanverifiedbox{
This formula for the assignment sum \(\Delta_3(n)\) has been
formally verified in Lean for all \(a,b,c\ge1\) and integers \(n\ge2\). In the \texttt{Core3} directory of our public GitHub repository~\cite{StabilizerRenyiLean2026} (Lean package \texttt{Core3tree}), the closing declaration is \nolinkurl{No13Migration.signed_imbalance_abc_final_closed} in \nolinkurl{Core3/Core3tree/No13_abc_general.lean}. Eq.~\eqref{eq:replica-gauss-representation} relates the replica contraction to the assignment sum.
}

In summary, we have obtained the assignment sums of two-block graphs and three-block tree graphs in equations~(\ref{2blockfinal}) and (\ref{3blocktreefinal}). The R\'enyi multi-entropies of these states follow from \eqref{eq:entropy-normalization}.

\subsection{Local complementation as a bit-string Fourier identity}\label{subsec:2.5}

Stabilizer states related by a local Clifford transformation have identical replica multi-invariants for the corresponding physical-party partition.  Their graph state representatives, however, need not be described by the same graph: at the graph level, locally Clifford-equivalent graph states are connected by a sequence of local complementations \cite{vandennest2004graph}.  Given a graph \(G\)\footnote{We assume $a_v=1$ for all $v\in V$ in this subsection, i.e. $G=G_{\rm phys}$.} and a vertex \(v\), the local complement \(G*v\) toggles every edge between two neighbors of \(v\) and leaves all other adjacencies unchanged. R\'enyi multi-entropy, as a series of multi-invariant, is invariant under local unitary transformation on the state to be evaluated. As a consistency check, in this subsection, we show explicitly that after translation into the finite field Gauss sum language, this graph operation is implemented by a one-bit Fourier identity.  Consequently, graph states related by local complementation give exactly the same closed Gauss sum, and hence the same multi-invariant as expected.

To see this, we examine how local complementation at $v$ work. Suppose the neighbors of \(v\) carry bits \(x_1,\ldots,x_r\) in computational basis, and set
\begin{equation}
    m=x_1+\cdots+x_r
\end{equation}
as an ordinary integer.  
Let
\begin{equation}
     R=e^{-\pi iX/4}
  =
  \frac1{\sqrt2}
  \begin{pmatrix}
  1&-i\\
  -i&1
  \end{pmatrix},
  \qquad
  R_{y,x}=2^{-1/2}(-i)^{x+y}(-1)^{xy}.
\end{equation}
Then, for every integer \(m\) and \(y\in\{0,1\}\),
\begin{equation}
  (-i)^m\sum_{x=0}^1R_{y,x}(-1)^{xm}
  =
  e^{-\pi i/4}(-1)^{ym+\binom m2},
  \label{eq:lc-one-bit}
\end{equation}
which is easily checked directly for all eight cases: for $y=0,1$ and $0\leq m \leq 3$ (and therefore for all $m$). The term \(xm\) is the old star phase from \(v\) to its neighbors, \(ym\) is the transformed star phase, and \(\binom m2\) is the clique phase toggling the neighbor-neighbor edges. For instance, with three neighbors one has \(m=x_1+x_2+x_3\), and
\begin{equation}
    \binom m2\equiv x_1x_2+x_1x_3+x_2x_3\pmod2,
\end{equation}
which is exactly the phase contributed by toggling the triangle among the three neighbors.

\begin{figure}[h]
\centering
\begin{tikzpicture}[scale=0.95]
\node[smSmallLabel] at (0,1.65) {\scriptsize before LC at \(v\)};
\node[smOpenVertex,label=below right:{\scriptsize \(v\)}] (v) at (0,0) {};
\node[smOpenVertex,label=above left:{\scriptsize 1}] (x1) at (-1,0.9) {};
\node[smOpenVertex,label=above right:{\scriptsize 2}] (x2) at (1,0.9) {};
\node[smOpenVertex,label=below:{\scriptsize 3}] (x3) at (0,-1.1) {};
\draw[smEdge] (v)--(x1);
\draw[smEdge] (v)--(x2);
\draw[smEdge] (v)--(x3);
\draw[smEdge] (x1)--(x2);
\draw[smEdge,dashed] (x2)--(x3);
\draw[smEdge,dashed] (x3)--(x1);

\draw[smArrow] (1.8,0) -- (3.0,0);
\node[smSmallLabel, align=center] at (2.4,0.45) {local\\complementation};

\begin{scope}[shift={(5.0,0)}]
\node[smSmallLabel] at (0,1.65) {\scriptsize after LC};
\node[smOpenVertex,label=below right:{\scriptsize \(v\)}] (V) at (0,0) {};
\node[smOpenVertex,label=above left:{\scriptsize 1}] (y1) at (-1,0.9) {};
\node[smOpenVertex,label=above right:{\scriptsize 2}] (y2) at (1,0.9) {};
\node[smOpenVertex,label=below:{\scriptsize 3}] (y3) at (0,-1.1) {};
\draw[smEdge] (V)--(y1);
\draw[smEdge] (V)--(y2);
\draw[smEdge] (V)--(y3);
\draw[smEdge,dashed] (y1)--(y2);
\draw[smEdge] (y2)--(y3);
\draw[smEdge] (y3)--(y1);
\end{scope}
\end{tikzpicture}
\caption{Local complementation at \(v\): edges among the neighbors of \(v\) are toggled.  In the coefficient sum this graph move is implemented by the one-bit Fourier identity in Eq.~\eqref{eq:lc-one-bit}.}
\label{fig:local-complementation}
\end{figure}
\FloatBarrier
We shall make a comparison between this identity and the finite Fourier projection trick used in Section \ref{subsec:finite-fourier-gauss}. While the earlier trick that sums up all the real characters \((-1)^{A\kappa}\) acted as a filter (imposing a mathematical constraint), this new trick acts as a transformer. It physically redraws the graph by reversing the connectivity of the neighbors of the vertex $v$. In both cases, our assignment calculus is a dictionary between stabilizer operations and finite sums over bit-strings.

To verify that the assignment sum is invariant under local complementation, we evaluate a modified sum in two different orders. Define \(\widetilde\Delta_{G,v}(n)\) by inserting the local Clifford kernel above on every ket coefficient and its conjugate on every bra coefficient in the closed replica contraction. If we evaluate the contraction links first, unitarity cancels the inserted transformations with their adjoints, leaving the original assignment sum \(\Delta_G(n)\). If instead we evaluate the coefficient sums first, the Fourier identity~\eqref{eq:lc-one-bit} transforms the graph state coefficients into those of the local complement \(G*v\), yielding \(\Delta_{G*v}(n)\). Since both orders evaluate the same modified sum, we obtain
\begin{equation}
    \Delta_G(n)=\widetilde\Delta_{G,v}(n)=\Delta_{G*v}(n).
\end{equation}
Thus the bit-string assignment formulation preserves invariance under local complementation, in agreement with the local-unitary invariance of the underlying replica contraction.

\section{Tree graphs}
\label{sec:trees}

We now evaluate the R\'enyi multi-entropy for tree core graphs, with each physical qubit treated as a separate party. Section \ref{subsec:subtreeformula} gives a formula valid for every integer \(n\geq2\)
and proves it by induction on the number of core vertices. Section~\ref{Sec32Properties} examines the \(n\to1\) limit and shows that tree graph states in distinct local Clifford orbits can have identical R\'enyi multi-entropy spectra.

\subsection{The connected subtree formula}
\label{subsec:subtreeformula}

For any given core tree graph, the exponent of its assignment sum is expressed as a sum over its connected subtrees. The contribution of each
subtree is determined by its internal edges and by the block sizes of the vertices outside it. The precise statement is as follows:

\begin{theorem}[tree graph formula]
\label{thm:tree-main}
Let \(T\) be a finite connected tree serving as the core graph of a stabilizer graph state. Each vertex \(v\in V(T)\) has block size \(a_v\ge1\), and we set \(q=\sum_{v\in V(T)}a_v\ge2\). Let \(\Delta_T(n)\) be the assignment sum for the \(q\)-party R\'enyi multi-entropy contraction, with each physical qubit treated as a separate party. For every integer \(n\ge2\),
\begin{equation}
  \Delta_T(n)=2^{X_T(n)},
\end{equation}
where
\begin{equation}
\boxed{
  X_T(n)=
  \sum_{\substack{\varnothing\ne W\subseteq V(T)\\
                  T[W]\ \mathrm{connected}}}
  (1-n)^{|W|-1}
  n^{\sum_{v\notin W}a_v}.}
  \label{eq:tree-exponent}
\end{equation}
Here \(T[W]\) is the induced subgraph of \(T\) on the vertex subset \(W\), and \(|W|\) is the number of vertices in \(W\).
\end{theorem}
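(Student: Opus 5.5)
The plan is to use bipartiteness of the tree to remove all quadratic phases at once, so that $\Delta_T(n)$ becomes a pure dimension count, and then to compute that dimension by induction on leaves.

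\textbf{Step 1: solve the gluing conditions.} Choose the reference party in a block $v_0$. Exactly as in the two- and three-block examples, the gluing conditions \eqref{eq:core-replica-gluing} reduce each block field to a single free function $A_v$.
\begin{itemize}
\item For the root, $A_{v_0}^{ket}=A_{v_0}^{bra}$, and this common field is independent of the root-block coordinates $\mathbf r$.
\item For $v\neq v_0$, $A_v^{ket}$ depends on its own block coordinates only through the cyclic sum $s_v$, with $A_v^{bra}$ obtained by shifting $s_v\mapsto s_v+1$.
\end{itemize}
In every case $A_v$ is an arbitrary function of $q-a_v$ variables in $\Zn$. It therefore carries $n^{q-a_v}$ free bits, and these are exactly the singleton terms $W=\{v\}$ of \eqref{eq:tree-exponent}.

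\textbf{Step 2: reduce to a rank computation.} The exponent $Q_{T,n}$ in \eqref{eq:replica-quadratic-exponent} is a sum over edges $uv$ of bilinear terms $\sum_{\rm sites}(A_u^{ket}A_v^{ket}+A_u^{bra}A_v^{bra})$. Since $T$ is a tree, it is bipartite, say $V=V_{\rm even}\sqcup V_{\rm odd}$. Every quadratic monomial then pairs an even field with an odd field, with no diagonal terms, so $Q_{T,n}=\langle \mathbf A_{\rm odd}, M\mathbf A_{\rm even}\rangle$ for an $\F_2$-linear map $M$.

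Applying \eqref{eq:finite-fourier-projection} to the odd fields gives
\begin{equation}
  \Delta_T(n)=2^{\sum_{v\in V_{\rm odd}}n^{q-a_v}}\cdot 2^{\dim\ker M},
\end{equation}
so $X_T(n)=\sum_{v}n^{q-a_v}-\rank M$. This already shows $\Delta_T(n)>0$, with no residual sign, for every tree. The theorem is therefore equivalent to
\begin{equation}
  \rank M=-\sum_{|W|\ge2,\ T[W]\ \mathrm{conn.}}(1-n)^{|W|-1}n^{q-\sum_{v\in W}a_v}.
\end{equation}

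\textbf{Step 3: identify the edge operators.} For an edge $uv$, the block of $M$ between $u$ and $v$ should be the operator already seen in the examples. It sums $A_u$ over the fibre of $v$'s block coordinates at fixed $s_v$, and then applies the cyclic difference $s_v\mapsto s_v+(s_v+1)$, or the analogous operation with the roles of $u$ and $v$ exchanged. I will decompose the function space of each $A_u$ along each neighbouring block's coordinates as a tensor product of two parts: the ``zero-sum'' part, killed by the fibre sum, and the aggregated part, a function on $\Zn$. On the aggregated part, the cyclic difference operator on functions $\Zn\to\F_2$ has kernel the constants, hence rank $n-1$. This is the source of every factor $(n-1)$ in \eqref{eq:tree-exponent}, and it reproduces the counts in \eqref{kappaBedge} and in $\dim K_n$ for the three-block case.

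\textbf{Step 4: induct on leaves (main obstacle).} Remove a leaf $\ell$ with parent $p$, choosing the root away from $\ell$. The claim to prove is that the rank changes by exactly the sum of the $W$-terms with $\ell\in W$, $|W|\ge2$, namely
\begin{equation}
  -(1-n)\,n^{-a_\ell}\times\Bigl(\text{the }W'\text{-terms of }T-\ell\text{ with }p\in W'\Bigr).
\end{equation}
Here the prefactor $n^{-a_\ell}$ reflects that the relevant functions no longer depend on $\ell$'s coordinates beyond the aggregated one. The delicate point is that the constraint imposed by $\ell$ on $A_p$ interacts with the constraints from $p$'s other neighbours. One must show these constraints are ``independent up to the constant modes.'' The overlaps are then governed by inclusion--exclusion over connected subtrees containing $p$, which produces the alternating $(1-n)^{|W|-1}$ structure.

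To establish this, I would work in the tensor-product basis of Step 3, where each edge operator acts on a distinct tensor factor (the aggregated coordinate of one neighbour block). In that basis the kernel of the joint constraint on $A_p$ becomes a product of single-factor kernels, so the rank should factorize and the inclusion--exclusion becomes exact. As a check, the base cases are the two-block result \eqref{2blockfinal} and the three-block result \eqref{3blocktreefinal}, which must come out of the same computation.
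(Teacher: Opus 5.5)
Your Steps 1 and 2 are correct. The bipartite reformulation $X_T(n)=\sum_v n^{q-a_v}-\rank M$ is a clean way to see that no residual sign survives on a tree, and the identity you aim for in Step 4 is the right target. The gap is that Step 4, which carries the whole content of the theorem, is not proved, and the mechanism you sketch does not work as stated.

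Let $L_p$ be the constrained space of $A_p$ left after summing all other fields of $T-\ell$ (the $L_r$ of Eq.~\eqref{eq:tree-fixed-root-sum}). The rank increment is controlled by how $L_p$ interacts with the constraint imposed by $\ell$. Your tensor-product picture, in which each edge operator at $p$ acts on its own factor, is valid only when $p$ itself supplies the reference. For non-root $p$ the field $A_p$ depends on $s_p$. Every edge operator at $p$ then takes its cyclic difference either in $s_p$ or along a diagonal such as $(s_p,s_c)\mapsto(s_p+1,s_c-1)$ (cf.\ Eq.~\eqref{eq:Bconstraint}). So all of them act on the shared coordinate $s_p$, and the joint kernel is not a product of single-factor kernels.

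If you root at $p$, the factorization is exact. The constrained space of $A_p$ in $T$ becomes $L_p\otimes K_\ell$, where $K_\ell$ consists of functions of $\mathbf x_\ell$ whose sums over $|\mathbf x_\ell|=t$ are independent of $t$, and $\dim K_\ell=n^{a_\ell}-n+1$. The increment is therefore $(n-1)\,d$ with $d=\dim L_p$. But $d$ is a global quantity of $T-\ell$, and it is not determined by $X_{T-\ell}$ alone. Computing it directly by inclusion--exclusion would require the constrained spaces of the non-root neighbours of $p$, where the factorization fails again.

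The missing idea is a way to read off $d$ from exponents already covered by the induction. The paper compares $U=T-\ell$ with $U^+$, the same tree with $a_p$ raised by one. Because the root field does not depend on root-block coordinates, the extra coordinate leaves $L_p$ unchanged and produces $n$ independent copies of the non-root sum. Hence $X_U=P+d$ and $X_{U^+}=nP+d$. Combined with $X_T=n^{q-a_\ell}+n^{a_\ell}P+(n^{a_\ell}-n+1)d$, this gives the recursion $X_T=n^{q-a_\ell}+(n^{a_\ell}-n)X_U+X_{U^+}$. Both $U$ and $U^+$ have one vertex fewer than $T$, so the induction closes, and checking that the subtree formula obeys this recursion is a short algebraic computation. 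Equivalently, $d=(nX_U-X_{U^+})/(n-1)=\sum_{W\ni p}(1-n)^{|W|-1}n^{\sum_{v\in V(U)\setminus W}a_v}$. This is exactly the inclusion--exclusion sum your Step 4 needs, but it comes from the $U^+$ comparison rather than from a direct kernel computation.
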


\leanverifiedbox{
The tree graph formula has been formally verified in Lean 4 for arbitrary finite trees, all positive block sizes, and every integer \(n\ge2\). This verification starts from the finite-field assignment formulation, whose derivation from the graph-state replica contraction is given in Section~\ref{sec:calculus}. For historical reasons, the formal verification follows a different proof from the one presented below. We will formalize the present proof in a future version. The corresponding Lean theorem is \nolinkurl{TreeGraph.ShearedFinal.tree_imbalance_rawExpandedEdgeSignedSum_final_closed} in \nolinkurl{TreeGraph/TreeGraph/ShearedFinal.lean} in our public GitHub repository~\cite{StabilizerRenyiLean2026}. The file path is relative to the repository root; build instructions are provided in \nolinkurl{BUILDING.txt}.}

We prove Theorem \ref{thm:tree-main} by induction on the number \(k\) of core vertices. To do so, we first study the effect of eliminating a leaf from the tree. This allows us to derive a recursion relation that expresses the assignment sum exponent \(X_T\) for a \(k\)-vertex core tree in terms of the corresponding exponents for core trees with \(k-1\) vertices. We then establish \eqref{eq:tree-exponent} by verifying that it satisfies the same recursion relation and produces the correct initial values of GHZ states and two block graph states.


\subsection*{The elimination of a leaf}

We start by explicitly illustrating the elimination of a leaf. Fix an arbitrary core vertex $r\in V(T)$ as the root of $T$, and let it supply the reference qubit.

Denote the leaf vertex as $l$ and its neighboring vertex as $p$\footnote{For generality, we consider the case where $p$ is not the root. The special case $p=r$ will be discussed later in this subsection. }. $\mathbf{x}_l\in \mathbb{Z}_n^{a_l}$ and $\mathbf{x}_p\in \mathbb{Z}_n^{a_p}$ are the replica coordinates on the torus with dimensions $a_l$ and $a_p$, which correspond to their respective GHZ blocks. To describe the bit value of these blocks in the bras and kets, we introduce the assignment fields $A^{ket}_p,A^{ket}_l,A^{bra}_p,A^{bra}_l: \mathbb{Z}_n^{q-1}\rightarrow\mathbb{F}_2$ on each point of the torus. The torus coordinates are organized into $(\bm{\rho},\mathbf{x}_p,\mathbf{x}_l)$, where $\bm{\rho}\in\mathbb{Z}_n^{q-a_l-a_p-1}$ collects the replica coordinates associated with all other blocks in the tree.

We then identify the gluing conditions of $A^{ket}_p,A^{ket}_l,A^{bra}_p,A^{bra}_l$. By the bra-ket contractions in the directions of $\mathbf{x}_p,\mathbf{x}_l$, these fields satisfy:
\begin{equation}
    \begin{aligned}
        A_p^{bra}(\bm{\rho},\mathbf{x}_p,\mathbf{x}_l)
        &=
        A_p^{ket}(\bm{\rho},\mathbf{x}_p+\mathbf e_p,\mathbf{x}_l)
        \\
        A_l^{bra}(\bm{\rho},\mathbf{x}_p,\mathbf{x}_l)
        &=
        A_l^{ket}(\bm{\rho},\mathbf{x}_p,\mathbf{x}_l+\mathbf e_l).
    \end{aligned}
\end{equation}
Thus, $A_p^{bra}$ and $A_p^{ket}$ depend only on $\bm{\rho},\mathbf{x}_l$, and $s:=|\mathbf{x}_p|\in\mathbb{Z}_n$. They can be rewritten as
\begin{equation}
    A_p^{bra}=A_p(\bm{\rho},s+1,\mathbf{x}_l),
    \qquad
    A_p^{ket}=A_p(\bm{\rho},s,\mathbf{x}_l).
\end{equation}
Similarly, one obtains
\begin{equation}
    A_l^{bra}=A_l(\bm{\rho},\mathbf{x}_p,t+1),
    \qquad
    A_l^{ket}=A_l(\bm{\rho},\mathbf{x}_p,t),
\end{equation}
where $t:=|\mathbf{x}_l|\in\mathbb{Z}_n$.

We then calculate how summing the leaf field $A_l$ contributes to the assignment sum $\Delta_T$. The total parity exponent of $T$ is $N=N_0+\sum_{\bm{\rho},\mathbf{x}_p,\mathbf{x}_l} \big(A_p^{bra}A_l^{bra}+A_p^{ket}A_l^{ket}\big)$, where $N_0$ does not contain the field $A_l$. Thus:
\begin{equation}\label{longDelta}
    \begin{aligned}
        \Delta_T
        &=\sum_{\mathrm{valid}}(-1)^N
        \\
        &=\sum_{\{A_v\}_{v\ne l}}^{\mathrm{valid}}
        (-1)^{N_0}
        \sum_{A_l}
        (-1)^{
        \sum_{\bm{\rho},\mathbf{x}_p,\mathbf{x}_l}
        \Big(
        A_p(\bm{\rho},s+1,\mathbf{x}_l)
        A_l(\bm{\rho},\mathbf{x}_p,t+1)
        +
        A_p(\bm{\rho},s,\mathbf{x}_l)
        A_l(\bm{\rho},\mathbf{x}_p,t)
        \Big)
        }
        \\
        &=\sum_{\{A_v\}_{v\ne l}}^{\mathrm{valid}}
        (-1)^{N_0}
        \sum_{A_l}
        (-1)^{
        \sum_{\bm{\rho},\mathbf{x}_p,t}
        A_l(\bm{\rho},\mathbf{x}_p,t)
        \sum_{\mathbf{x}_l:\Sigma\mathbf{x}_l=t}
        \Big(
        A_p(\bm{\rho},s+1,\mathbf{x}_l-\mathbf{e}_l)
        +
        A_p(\bm{\rho},s,\mathbf{x}_l)
        \Big)
        }.
    \end{aligned}
\end{equation}
The final equation comes from the fact that the sum over $(\bm{\rho},\mathbf{x}_p,\mathbf{x}_l)$ allows us to translate the first term in the exponent along $\mathbf{e}_l$ such that $A_l$ can be extracted. 

From here, we use the Gauss-sum property \eqref{eq:finite-fourier-projection} proved in Section \ref{subsec:finite-fourier-gauss}. Let $S$ be the finite index set parametrized by coordinates $(\bm{\rho},\mathbf{x}_p,t)$, whose free dimension is $|S|=n^{q-a_l}$. Let
\begin{equation}
    \kappa_p(\bm{\rho},s,t)
    :=
    \sum_{\mathbf{x}_l:\Sigma\mathbf{x}_l=t}
    \Big(
    A_p(\bm{\rho},s,\mathbf{x}_l)
    +
    A_p(\bm{\rho},s+1,\mathbf{x}_l-\mathbf{e}_l)
    \Big).
\end{equation}
Then:
\begin{equation}\label{eq:treeGauss}
    \sum_{A_l}
    (-1)^{
    \sum_{\bm{\rho},\mathbf{x}_p,t}
    A_l(\bm{\rho},\mathbf{x}_p,t)
    \kappa_p(\bm{\rho},s,t)
    }
    =
    \begin{cases}
        2^{n^{q-a_l}},&
        \kappa_p(\bm{\rho},s,t)=0
        \quad\text{for all }(\bm{\rho},s,t),
        \\
        0,&\text{otherwise}.
    \end{cases}
\end{equation}
Introducing the summed function $B_p(\bm{\rho},s,t)= \sum_{\mathbf{x}_l:\Sigma\mathbf{x}_l=t}A_p(\bm{\rho},s,\mathbf{x}_l)$, $\kappa_p=0$ becomes:
\begin{equation}\label{eq:Bconstraint}
    B_p(\bm{\rho},s,t)
    +
    B_p(\bm{\rho},s+1,t-1)=0.
\end{equation}
This constraint has a similar structure to the examples in Section \ref{subsec:worked-examples}. Here the cyclic shift $\tau$ is on the combined direction of $s$ and $t$ together, with $\tau(a_p)(\bm{\rho},s,t)=B_p(\bm{\rho},s+1,t-1)$, and $\kappa_p=(1+\tau)B_p(\bm{\rho},s,t)$. A point located on the torus again goes through a length-$n$ orbit under this shift. For fixed $(\bm{\rho},s+t)$, $\kappa_p=0$ requires that $B_p$ be constant on this orbit, and consequently imposes $n-1$ independent constraints on the field $A_p$ during the summation of the leaf field.

If \(p=r\), \textit{i.e.}, the leaf to be eliminated connects directly to the root vertex, the root block has \(\mathbf x_p\in(\mathbb Z_n)^{a_p-1}\) while \(\bm\rho\in(\mathbb Z_n)^{q-a_p-a_l}\). Due to bra-ket contractions on the reference and $\mathbf x_p$ directions, the gluing condition for the root field $A_p$ now imposes
\begin{equation}
  A_p^{bra}=A_p^{ket}=A_p(\bm\rho,\mathbf x_l),
  \label{eq:tree-parent-root-field}
\end{equation}
where $A_p$ is independent of $\mathbf x_p$. and the same calculation gives
\begin{equation}
  B_p(\bm\rho,t)
    :=\sum_{\mathbf x_l:\,|\mathbf x_l|=t}
       A_p(\bm\rho,\mathbf x_l),
  \qquad
  \kappa_p(\bm\rho,t)
    =B_p(\bm\rho,t)+B_p(\bm\rho,t-1).
  \label{eq:tree-root-leaf-constraint}
\end{equation}
In this case, eliminating the leaf still contributes a $2^{n^{q-a_l}}$ factor, and imposes the constraint $\kappa_p(\bm\rho,t)=0$, meaning that $B_p(\bm\rho,t)$ is constant as a function of $t$.

This leaf elimination can be repeated: once all descendants of an internal vertex $v$ have been eliminated, $v$ becomes a leaf of the remaining tree, and can be eliminated using the same method\footnote{Note that $A_v$ is still a field on the replica torus: the replica coordinates corresponding to the already-eliminated leaves do not vanish, but are instead reorganized into $\bm\rho$ as spectator directions.}. The only difference lies in \eqref{longDelta} to \eqref{eq:treeGauss}: instead of freely summed over all binary fields on the torus, $A_v$ is now summed over the linear subspace $L_v$ satisfying the constraints inherited from its descendants. This gives a factor $2^{\dim_{\mathbb F_2}L_v}$, and leaves a linear constraint on the neighboring field.

To briefly summarize, we have established the leaf elimination rule of general tree graphs: summing a leaf field produces a power-of-two factor and leaves a linear constraint on the neighboring field whose nontrivial cyclic part has rank \(n-1\). This elimination process can be continued for internal vertices, until only the root field remains. From here, all the $(-1)$ exponents have been summed out, and summation over the root field contributes a power-of-two factor with the size of its constrained subspace.

\subsubsection*{A recursion relation obeyed by the tree graph assignment sums}
The leaf elimination property motivates us to derive a recursion relation that allows the assignment sum exponent of a tree to be expressed by those of smaller trees.

Consider the three trees shown in Figure~\ref{fig:tree-single-leaf-recursion}. In the larger tree $T$, choose a core vertex which has a single leaf descendant as the reference-supplying root. These vertices are labeled as $r$ and $l$, with block sizes $a_r$ and $a_l$ respectively. The smaller tree \(U\) is obtained by deleting \(l\) from $T$. \(U^+\) denotes the same core tree as $U$, only with the block size at \(r\) increased from \(a_r\) to \(a_r+1\). 
\begin{figure}[htbp]
  \centering
  \begin{tikzpicture}[
    x=1cm,y=1cm,
    sl vertex/.style={smOpenVertex,draw=black,fill=black},
    sl edge/.style={smEdge},
    sl label/.style={font=\small,inner sep=2pt}
  ]
    \foreach \offset/\blocksize/\treename [count=\panel from 0] in {
      0/{a_r}/{T},
      4.1/{a_r}/{U},
      8.2/{a_r+1}/{U^+}
    } {
      \begin{scope}[xshift=\offset cm]
        \node[font=\normalsize] at (0,2.15) {$\treename$};
        \node[sl vertex] (r) at (0,1.35) {};
        \node[sl label,anchor=south] at (0,1.48)
          {$r\;(\blocksize)$};
        \node[sl vertex] (p) at (-0.30,0.30) {};
        \node[sl vertex] (a) at (-1.00,-0.65) {};
        \node[sl vertex] (b) at (0.40,-0.65) {};
        \node[sl vertex] (c) at (-1.40,-1.60) {};
        \node[sl vertex] (d) at (-0.60,-1.60) {};
        \node[sl vertex] (e) at (0.10,-1.60) {};
        \node[sl vertex] (f) at (0.70,-1.60) {};
        \foreach \u/\v in {p/r,p/a,p/b,a/c,a/d,b/e,b/f}
          \draw[sl edge] (\u)--(\v);
        \ifnum\panel=0
          \node[sl vertex] (l) at (1.05,0.45) {};
          \node[sl label,anchor=west] at (1.18,0.45) {$l\;(a_l)$};
          \draw[sl edge] (r)--(l);
        \fi
      \end{scope}
    }
  \end{tikzpicture}
  \caption{Illustration of the three core tree graphs in the recursion relation. Deleting the leaf $l$ from $T$ gives $U$, and further increasing the block size at $r$ by one gives $U^+$. The remaining components of the three graphs are identical.}
  \label{fig:tree-single-leaf-recursion}
\end{figure}

We first evaluate the assignment sum for $U$. Specifically, consider the stage where all descendants of $r$ have been eliminated, but the root field $A_r$ is left unsummed. According to the leaf elimination rule, these summed fields each contribute a power-of two factor, and the fields neighboring to $r$ impose constraints on $A_r$: 
\begin{equation}
  M_U(A_r):=2^P\mathbf1_{A_r\in L_r},
  \qquad d=\dim_{\mathbb F_2}L_r,
  \qquad
  \Delta_U=\sum_{A_r}M_U(A_r)=2^{P+d}
  \label{eq:tree-fixed-root-sum}
\end{equation}
Here $P$ denotes the collective power-of-two contributions of all non-root fields, and \(L_r\) is the constrained linear space of the root field.

We then analyze how adding a root-block coordinate from $U$ to $U^+$ affects the assignment sum \eqref{eq:tree-fixed-root-sum}. For every non-root field, the added coordinate belongs to $\bm \rho$ as an additional spectator direction. During the elimination process previously described, whether this summed field corresponds to an original leaf or an internal vertex, this additional direction only expands the constrained linear space it lives in, but cannot serve as a shift direction in the constraints. As a consequence, this added coordinate gives \(n\) independent copies of the non-root field sum. The root field, on the other hand, does not rely on the root-block coordinates, as explained by the gluing condition \eqref{eq:tree-parent-root-field}. $L_r$ is thus unaffected by this added coordinate. Consequently, 
\begin{equation}\label{DeltaU+}
  \Delta_{U^+}
    =\sum_{A_r}M_U(A_r)^n=2^{nP+d}.
\end{equation}

It remains to study the assignment sum for $T$. This time, the additional leaf $l$ adds $a_l$ directions to the torus, which are packed into the coordinates $\mathbf{x}_l$. First, similar to the case of $U^+$, these $a_l$ directions serve as spectators for the original non-root fields, and do not change the original shift directions in the constraints of these fields. Thus, they first contribute $n^{a_l}$ copies of the non-root field sum $P$. Meanwhile, as demonstrated at the end of the leaf elimination argument, eliminating $l$ contributes another $2^{n^{q-a_l}}$ factor, and imposes a constraint on the root field $A_r$. As shown in \eqref{eq:tree-root-leaf-constraint}, this requires the summed root field $B_r$ to be constant on $|\mathbf{x}_l|$. To count the remaining choices, select one vector \(\mathbf x_l\) with \(|\mathbf x_l|=t\) for each \(t\in\Zn\). The root fields at the other \(n^{a_l}-n\) choices of $\mathbf x_l$ and the common value of \(B_r\) as functions of non-root, non-$\mathbf x_l$ coordinates can be chosen freely in the original constrained space \(L_r\), and the fields at the selected vectors are then uniquely determined. This leaves \((n^{a_l}-n+1)d\) binary degrees of freedom for the root field. Combining these contributions gives:
\begin{equation}
    X_T=n^{q-a_l}+n^{a_l}P+(n^{a_l}-n+1)d,\qquad \Delta_T=2^{X_T}.
  \label{eq:tree-X-T}
\end{equation}

Rewrite $\Delta_U$ and $\Delta_{U^+}$ into their exponents $X_U$, $X_{U^+}$. By eliminating $P$ and $d$, we obtain the recursive relation:
\begin{equation}
    X_T(n)
    =
    n^{q-a_l}+(n^{a_l}-n)X_U(n)+X_{U^+}(n).
  \label{eq:X-recursion}
\end{equation}
This formula governs the relationship between the assignment sum of a larger tree and those of its subtrees with fewer vertices.

\subsubsection*{Base case}
For an inductive argument on the number of core vertices of a tree, the base case is a single GHZ core. Its assignment field is a single free binary variable, giving \(X=1\) and \(\Delta=2\) for every positive block size. This agrees with formula~\eqref{eq:tree-exponent}, where the core vertex itself is the only connected subgraph in the summation.

\subsubsection*{Inductive step}

We prove that formula~\eqref{eq:tree-exponent} obeys the same recursion relationship as the assignment sum, that is equation~(\ref{eq:X-recursion}). This is the inductive step for the validity of~\eqref{eq:tree-exponent}. The recursion relationship that governs~\eqref{eq:tree-exponent} is read off by a direct calculation:

\begin{equation}
  \begin{aligned}
    \tilde{X}_T(n)
    &\equiv
    n^{q-a_l}+n^{a_l}X_U(n)
    +\sum_{\substack{
        W\subseteq V(T),\,l\in W\\
        T[W]\ \mathrm{connected},\,|W|\ge2
      }}
      (1-n)^{|W|-1}
      n^{\sum_{v\in V(T)\setminus W}a_v}
    \\[3pt]
    &=
    n^{q-a_l}
    +n^{a_l}\sum_{\substack{
        \varnothing\ne W\subseteq V(U)\\
        U[W]\ \mathrm{connected},\,r\notin W
      }}
      (1-n)^{|W|-1}
      n^{\sum_{v\in V(U)\setminus W}a_v}
    \\
    &\quad
    +(n^{a_l}+1-n)\sum_{\substack{
        W\subseteq V(U),\,r\in W\\
        U[W]\ \mathrm{connected}
      }}
      (1-n)^{|W|-1}
      n^{\sum_{v\in V(U)\setminus W}a_v}
    \\[3pt]
    &=
    n^{q-a_l}+(n^{a_l}-n)X_U(n)
    +\sum_{\substack{
        \varnothing\ne W\subseteq V(U)\\
        U[W]\ \mathrm{connected},\,p\notin W
      }}
      (1-n)^{|W|-1}
      n^{1+\sum_{v\in V(U)\setminus W}a_v}
    \\
    &\quad
    +\sum_{\substack{
        W\subseteq V(U),\,r\in W\\
        U[W]\ \mathrm{connected}
      }}
      (1-n)^{|W|-1}
      n^{\sum_{v\in V(U)\setminus W}a_v}
    \\[3pt]
    &=
    n^{q-a_l}+(n^{a_l}-n)\tilde{X}_U(n)+\tilde{X}_{U^+}(n).
  \end{aligned}
\end{equation}

\subsection{Properties of R\'enyi multi-entropy on trees}
\label{Sec32Properties}
We now examine what information about a tree graph is retained by its R\'enyi multi-entropy spectrum. We briefly examine the \(n\to1\) limit and show that states in distinct local Clifford orbits can have identical multi-entropy spectra for every integer \(n\ge2\).

Note that an arbitrary tree graph state always admits a representation where its core graph is identical to the original tree graph. In this representation, each vertex forms a single block, so that \(a_v=1\) for all vertices. Therefore, let $T$ be the original tree graph, set \(\mathcal N=|V(T)|\) and denote the number of size-$s$ connected subtrees of $T$ with $c_s(T)$. Theorem \ref{thm:tree-main} can then be rewritten as:
\begin{equation}
\label{treefinalresult}
  X_T(n)=
  \sum_{s=1}^{\mathcal N}c_s(T)(1-n)^{s-1}n^{\mathcal N-s}.
\end{equation}

Thus, the information captured by the R\'enyi multientropies boils down to counting connected subtrees of all sizes. Observe that $X_T(1)=c_1(T)=\mathcal N$ universally. Beyond that, this information is not sufficient to distinguish locally Clifford-inequivalent tree graph states \cite{Wagner2014,Eisenstat2006,Martin2008}. One example, which involves 10 qubits, is given in Figure~\ref{fig:tree-counterexample}. The two tree graphs displayed there are not locally Clifford-equivalent yet they have the same numbers of $s$-sized connected subtrees:
\begin{equation}
(c_1,\ldots,c_{10})=(10,9,10,12,14,14,12,8,4,1)
\end{equation}
This is the smallest example, in the sense that all locally Clifford-inequivalent trees with 9 or fewer vertices have distinct numbers of connected subtrees.

\begin{figure}[!h]
\centering
\begin{tikzpicture}[scale=0.9]
\node[smSmallLabel] at (1.4,1.75) {\(\,T_A\)};

\node[smOpenVertex,label=left:{\scriptsize 1}] (A1) at (0,0) {};
\node[smOpenVertex,label=above:{\scriptsize 2}] (A2) at (1.0,0.45) {};
\node[smOpenVertex,label=above:{\scriptsize 3}] (A3) at (2.0,0.45) {};
\node[smOpenVertex,label=above:{\scriptsize 4}] (A4) at (3.0,0.45) {};
\node[smOpenVertex,label=right:{\scriptsize 5}] (A5) at (4.0,0.45) {};
\node[smOpenVertex,label=below:{\scriptsize 6}] (A6) at (1.0,-0.45) {};
\node[smOpenVertex,label=left:{\scriptsize 7}] (A7) at (0,-1.0) {};
\node[smOpenVertex,label=below:{\scriptsize 8}] (A8) at (1.0,-1.35) {};
\node[smOpenVertex,label=below:{\scriptsize 9}] (A9) at (2.0,-1.35) {};
\node[smOpenVertex,label=left:{\scriptsize 10}] (A10) at (-1.0,0.0) {};

\foreach \u/\v in {A1/A2,A1/A7,A1/A10,A2/A3,A2/A6,A3/A4,A4/A5,A7/A8,A8/A9}
  \draw[smEdge] (\u)--(\v);

\begin{scope}[shift={(6.3,0)}]
\node[smSmallLabel] at (1.4,1.75) {\(\,T_B\)};

\node[smOpenVertex,label=left:{\scriptsize 1}] (B1) at (0,0) {};
\node[smOpenVertex,label=above:{\scriptsize 2}] (B2) at (1.0,0.45) {};
\node[smOpenVertex,label=above:{\scriptsize 3}] (B3) at (2.0,0.45) {};
\node[smOpenVertex,label=above:{\scriptsize 4}] (B4) at (3.0,0.85) {};
\node[smOpenVertex,label=right:{\scriptsize 5}] (B5) at (3.0,0.05) {};
\node[smOpenVertex,label=below:{\scriptsize 6}] (B6) at (1.0,-0.45) {};
\node[smOpenVertex,label=below right:{\scriptsize 7}] (B7) at (2.0,-0.95) {};
\node[smOpenVertex,label=left:{\scriptsize 8}] (B8) at (0,-1.0) {};
\node[smOpenVertex,label=below:{\scriptsize 9}] (B9) at (1.0,-1.45) {};
\node[smOpenVertex,label=below:{\scriptsize 10}] (B10) at (2.0,-1.45) {};

\foreach \u/\v in {B1/B2,B1/B8,B2/B3,B2/B6,B3/B4,B3/B5,B6/B7,B8/B9,B9/B10}
  \draw[smEdge] (\u)--(\v);
\end{scope}
\end{tikzpicture}
\caption{Two non-isomorphic, non-LC-equivalent tree graph states with the same number of arbitrary-sized connected subtrees.  By the tree-graph formula, the complete R\'enyi multi-entropy spectrum cannot distinguish them.}
\label{fig:tree-counterexample}
\end{figure}

\section{Loop graphs}
\label{sec:loops}
We now extend our analysis from tree graphs to graphs with loops. Section~\ref{subsec:universal-known-graph-formula} discusses how, in the case of odd replica index $n$, results for tree graphs can be leveraged for calculating R\'enyi multi-entropies of loop graphs. In Section~\ref{subsec:triangle-main}, we use the triangle loop as a case study and explain how and why the multi-R\'enyis of loop graphs exhibit different behavior for odd and even $n$. Section~\ref{subsec:core4-main} extends that analysis to four- and five-vertex loop cores and, in the case of odd $n$, for more general loop cores including arbitrary odd cycles. Finally, Section~\ref{subsec:c6-global} turns to the six-cycle graph and discusses the challenge of continuing the calculation to that case and beyond.

\subsection{Decomposing graphs into loop cores}
\label{subsec:universal-known-graph-formula}

In equation~\eqref{eq:tree-exponent}, adding contributions from induced connected  subgraphs\footnote{Given a graph $H$ and $U$ a subset of its vertex set, the induced subgraph \(H[U]\) contains the vertices in \(U\) (and no others) and all edges of \(H\), which join them.} turned out to be a productive strategy for computing the R\'enyi multi-entropies of tree graphs. In this section, we apply the same strategy for graphs with loops. 

For a connected core graph \(H\), let \(X_H(n)\) be the exponent of its assignment sum: $\Delta_H(n) = 2^{X_H(n)}$. Let \(\Gamma_H(n)\) be the exponent stripped off the contributions from the induced connected proper subgraphs:
\begin{equation}
\boxed{\displaystyle
~\Gamma_H(n)
=
X_H(n)
-
\sum_{\substack{
    \varnothing\neq U\subsetneq V(H)\\
    H[U]\ \mathrm{connected}
}}
\Gamma_{H[U]}(n)\,
n^{\sum_{v\in V(H)\setminus U}a_v}.~
}
\label{eq:cluster-extraction}
\end{equation}
Observe that the appearance of $\Gamma_{H[U]}(n)$ on the right hand side makes this definition recursive in the number of vertices of $H$. 
Both \(X_H(n)\) and \(\Gamma_H(n)\) are implicitly functions of the GHZ block sizes $a_v$. Tautologically, the exponent for a graph \(G\) is:
\begin{equation}
X_G(n)
=
\sum_{\substack{
    \varnothing\neq U\subseteq V(G)\\
    G[U]\ \mathrm{connected}
}}
\Gamma_{G[U]}(n)\,
n^{\sum_{v\in V(G)\setminus U}a_v}.
\label{eq:loop-support-expansion}
\end{equation}
The powers of \(n\), which multiply each coefficient account for the GHZ blocks outside the chosen subgraph.

For trees, all graphs with the same number of vertices have the same coefficient \(\Gamma_T\), independently of their shape and GHZ-block sizes. We call this \emph{\(\Gamma\)-degeneracy}. Different trees have different exponents \(X_T\), but the differences only arise from counting connected induced proper subgraphs.

With loops present, \(\Gamma_H\) itself develops a dependence on the connectivity of the graph. Moreover, as we shall see in the next subsection, odd and even \(n\) exhibit qualitatively different behaviors, with the odd branch significantly simpler than the even branch.

\paragraph{A formula for odd $n$} In the remainder of this subsection, we concentrate on the case where $n$ is odd. Let \(H\) be any connected core graph. An edge is called a \emph{bridge} if deleting that edge (while keeping all vertices) splits \(H\) into two disconnected parts. We write \(\mathcal{B}(H)\) for the set of bridges and 
\begin{equation}
    b(H)=|\mathcal{B}(H)|
\end{equation}
for their number.

Remove all bridges from \(H\) and call the remaining connected components \(L_1,\ldots L_k\). By construction, each \(L_i\) is a loop subgraph with no bridges. The $L_i$ may contain more than one loop, which may or may not share edges. (In the latter case, two loops may be ``tangent'' at a vertex.) The $L_i$ may also be isolated vertices.

Let \(\Gamma_{L_i}\) be the individual coefficient for the connected bridge-free subgraph $L_i$, as defined in equation~(\ref{eq:cluster-extraction}). For isolated vertices, we have $\Gamma_{\{\textrm{single vertex}\}}=1$, consistently with (\ref{eq:tree-exponent}). Assuming odd $n \geq 3$,  we have:
%
%
\begin{equation}
\boxed{\displaystyle
~\Gamma_H(n)
=
(1-n)^{b(H)}\,
\prod_{i=1}^{k}\Gamma_{L_i}(n)~
}
\label{eq:odd-bridge-factorization}
\end{equation}
We offer this formula without proof. Documentation proving its validity is available upon request.

A tree is a special case of this expression. Because every edge of a tree \(T\) is a bridge, the bridge-free subgraphs are isolated vertices. Using $b(T) = |V(T)|-1$ we therefore find $\Gamma_T(n) = (1-n)^{|V(T)|-1}$, which reproduces result~(\ref{treefinalresult}). 
As we saw in Section~\ref{sec:trees}, in trees this expression is also valid for even \(n\). 

Our next task is to build a dictionary of loop cores $\Gamma_L$. We also wish to return to the case of even replica index $n$, which we have thus far set aside.

%

\subsection{The triangle core: A residual quadratic Gauss sum}
\label{subsec:triangle-main}
We begin with the triangle, which is the simplest loop core. 
We show how summing over the linearly appearing fields leaves a residual quadratic Gauss sum and how this sum can be evaluated. The techniques developed here provide a foundation for analyzing the four-vertex, five-vertex, and more general loop cores considered in Section \ref{subsec:core4-main}.

To make the calculation explicit, we first consider the simplest case where each vertex of the triangle core graph corresponds to a GHZ block of size $2$, as shown in Figure \ref{fig:triangle-222}. We remark that this $6$-qubit state is the minimal example of a graph state, which is not LC-equivalent to a tree. (If a vertex $v$ in a triangular core graph has $a_v - 1 = 0$ leaves, applying local complementation at $v$ transforms the graph to a tree.) 

Denote the three core vertices by $v_1,v_2,v_3$. Let $v_1$ be the root vertex whose block supplies the reference qubit. The remaining replica coordinates are then:
\begin{equation}
    r\in\mathbb{Z}_n,\quad \mathbf{y}=(y_1,y_2)\in \mathbb{Z}_n^2,\ \text{and}\quad  \mathbf{z}=(z_1,z_2)\in \mathbb{Z}_n^2.
\end{equation}
Here $r$ is the non-reference coordinate in the root block while $\mathbf{y}$ and $\mathbf{z}$ correspond to the $2$-qubit blocks of $v_2$ and $v_3$. The assignment fields $A(r,\mathbf{y},\mathbf{z})$, $B(r,\mathbf{y},\mathbf{z})$ and $C(r,\mathbf{y},\mathbf{z})$ record the bit choices for the three GHZ blocks in the bras and kets associated with the torus sites. The bra-ket contractions impose the following gluing conditions:
\begin{equation}
    \begin{aligned}
        A^{bra}(r,\mathbf{y},\mathbf{z})&=A^{ket}(r,\mathbf{y},\mathbf{z})\\
    A^{bra}(r,\mathbf{y},\mathbf{z})&=A^{ket}(r+1,\mathbf{y},\mathbf{z})\\
        B^{bra}(r,\mathbf{y},\mathbf{z})&=B^{ket}(r,\mathbf{y}+\mathbf{e}_i,\mathbf{z})\\
        C^{bra}(r,\mathbf{y},\mathbf{z})&=C^{ket}(r,\mathbf{y},\mathbf{z}+\mathbf{e}_j) \qquad {\rm for}~~i,j\in\{1,2\}.
    \end{aligned}
\end{equation}
Here $\mathbf{e}_i$ and $\mathbf{e}_j$ are the standard unit vectors in the $y_i$ and $z_j$ directions. Denoting $s:=|\mathbf{y}|$ and $t:=|\mathbf{z}|$, these assignment fields simplify to:
\begin{equation}
    \begin{aligned}
        A^{bra}=A^{ket}=A(\mathbf{y},\mathbf{z}),&\\
        B^{bra}=B(r,s,\mathbf{z}),&\ ~~B^{ket}=B(r,s+1,\mathbf{z}),\\
        C^{bra}=C(r,\mathbf{y},t),&\ ~~C^{ket}=C(r,\mathbf{y},t+1).
    \end{aligned}
\end{equation}
The contribution of these fields to the $(-1)$ phase is determined by the three triangle edges. The exponent can be decomposed into two parts:
\begin{equation}
\begin{aligned}
    N=N_0+Q_{BC}&=\sum_{r,\mathbf{y},\mathbf{z}}A(\mathbf{y},\mathbf{z})\big(B(r,s,\mathbf{z})+B(r,s+1,\mathbf{z})+C(r,\mathbf{y},t)+C(r,\mathbf{y},t+1)\big)\\
    &+\sum_{r,\mathbf{y},\mathbf{z}}\big(B(r,s,\mathbf{z})C(r,\mathbf{y},t)+B(r,s+1,\mathbf{z})C(r,\mathbf{y},t+1)\big).
\end{aligned}
\end{equation}
The sum in the upper line, denoted $N_0$, is linear in the root field $A(\mathbf{y},\mathbf{z})$. As explained in previous tree graph examples, the summation over $A$ consequently imposes linear constraints on $B$ and $C$. Meanwhile, $Q_{BC}$ is a genuinely new, residual quadratic coupling between the non-root fields $B$ and $C$. Performing the summation over $A$, we obtain:
\begin{equation}
\label{sumqbc}
\begin{aligned}
    \Delta_{C_3}^{2,2,2}&=\sum_{\text{valid}}(-1)^N\\
    &=\sum_{B,C}(-1)^{Q_{BC}}\sum_A(-1)^{\sum_{r,\mathbf{y},\mathbf{z}}A(\mathbf{y},\mathbf{z})\big(B(r,s,\mathbf{z})+B(r,s+1,\mathbf{z})+C(r,\mathbf{y},t)+C(r,\mathbf{y},t+1)\big)}\\
    &=2^{n^4} \sum_{\substack{B,C\\\kappa_{BC}=0}}(-1)^{Q_{BC}}.
\end{aligned}
\end{equation}
The quantity, which is set to 0 in the final summation, reads:
\begin{equation}\label{kappa}
    \kappa_{BC}(\mathbf y,\mathbf z):=\sum_{r}\Big(B(r,s,\mathbf z) +B(r,s+1,\mathbf z)+C(r,\mathbf y,t)+C(r,\mathbf y,t+1)\Big).
\end{equation}

\begin{figure}[t]
\centering
\begin{tikzpicture}[
    scale=1.2,
    line width=0.9pt,
    vertex/.style={circle, fill=black, inner sep=1.6pt},
    every label/.style={font=\small}
]

\coordinate (v1) at (0,{sqrt(3)});
\coordinate (v2) at (-1,0);
\coordinate (v3) at (1,0);

\coordinate (u1) at (0,{sqrt(3)+1.1});
\coordinate (u2) at (-1.95,-0.55);
\coordinate (u3) at (1.95,-0.55);

\draw (v1) -- (v2) -- (v3) -- cycle;

\draw (v1) -- (u1);
\draw (v2) -- (u2);
\draw (v3) -- (u3);

\node[vertex] at (v1) {};
\node[vertex] at (v2) {};
\node[vertex] at (v3) {};

\node[xshift=-9pt] at (v1) {$v_1$};
\node[xshift=-8pt,yshift=4pt] at (v2) {$v_2$};
\node[xshift=8pt,yshift=4pt] at (v3) {$v_3$};

\node[vertex] at (u1) {};
\node[vertex] at (u2) {};
\node[vertex] at (u3) {};

\end{tikzpicture}
\caption{The $6$-qubit loop graph whose core graph is a triangle with block sizes $a_1 = a_2 = a_3 = 2$. It is the minimal example of a loop core graph, which is not LC-equivalent to a tree. Its assignment sum is calculated in Section \ref{subsec:triangle-main}.}
\label{fig:triangle-222}
\end{figure}

We compute sum~(\ref{sumqbc}) in two steps:
\begin{enumerate}
    \item[(1)] Determine the dimension of the solution space of $\kappa_{BC}=0$, temporarily disregarding the phase $(-1)^{Q_{BC}}$. 
    \item[(2)] Observe that $\sum_{p,q\,\in\,\mathbb{F}_2}(-1)^{pq} = 2$, which is $1/2$ of $\sum_{p,q\,\in\,\mathbb{F}_2} (1) = 4$. Therefore, each isolated bilinear pair (such as $pq$ in the schematic example) takes away one power of 2 from sum~(\ref{sumqbc}). We count the independent {\bf hyperbolic pairs} that occur in $Q_{BC}$. 
\end{enumerate}
In the end, the assignment sum is $2^{n^4 + {\rm step(1)} - {\rm step(2)}}$.




\subsubsection*{Step $(1)$: Dimension of solution space}
Let $K_n$ denote the solution space of the constraint $\kappa_{BC}=0$. Before evaluating the dimension of $K_n$, we simplify this constraint by introducing the $r$-summed functions:
\begin{equation}
    \overline{B}(s,\mathbf{z}):=\sum_rB(r,s,\mathbf{z}),\qquad \overline{C}(\mathbf{y},t):=\sum_rC(r,\mathbf{y},t).
\end{equation}
Then, let $\tau_s,\tau_t$ be the cyclic shift operators along the $s$ and $t$ directions, such that
\begin{equation}\label{cyclicshiftst}
   \tau_s(\overline{B})(s,\mathbf{z})=\overline{B}(s+1,\mathbf{z}),\qquad \tau_t(\overline{C})(\mathbf{y},t)=\overline{C}(\mathbf{y},t+1).
\end{equation}
For notational convenience, we further define $\mathcal{D}_s:=1-\tau_s$, and $\mathcal{D}_t:=1-\tau_t$. Thus, the constraint $\kappa_{BC}=0$ can be written as:\footnote{Note that the fields $B$ and $C$, as well as $\overline{B}$ and $\overline{C}$ are all binary functions whose values lie in $\mathbb{F}_2$. Therefore the addition and subtraction are identical in $\kappa_{BC}$, and also later in the simplification of $Q_{BC}$.}
\begin{equation}\label{vanishingkapparetell}
    \mathcal{D}_s \overline{B}(s,\mathbf{z})+\mathcal{D}_t \overline{C}(\mathbf{y},t)=0.
\end{equation}

The degrees of freedom in $K_n$ under this constraint can be divided into two sectors. The first sector contains choices of ($B,C$) that disappear during the summation over $r$ from $B,C$ to $\overline{B},\overline{C}$, and are thus {\bf invisible} to the constraint \eqref{vanishingkapparetell}. The second sector consists of the solutions that are {\bf visible} in $\overline{B},\overline{C}$, and non-trivially satisfies \eqref{vanishingkapparetell}. 

We begin with the \textit{invisible sector}, which is determined by $\overline{B}=\overline{C}=0$. The summation over $r$ can be conveniently described by a linear map
\begin{equation}
    \sigma: \mathbb{F}_2^{\mathbb{Z}_n}\rightarrow \mathbb{F}_2\quad \text{with}\quad  \sigma(f)=\sum_{r\,\in\,\mathbb{Z}_n}f(r)
\end{equation}
for any $f\in\mathbb{F}_2^{\mathbb{Z}_n}$. Under each fixed choice of $(s,\mathbf{z})$, the function ${B}(\cdot,s,\mathbf{z})$ can be viewed as an element of $\mathbb{F}_2^{\mathbb{Z}_n}$, and $\overline{B}(s,\mathbf{z})=\sigma \big(B(\cdot,s,\mathbf{z})\big)$. Thus, the invisible sector is precisely characterized by the kernel of $\sigma$, which has dimension $n-1$. As we vary $s,\mathbf{z}$, there are $n^3$ independent choices, so the invisible sector of $B$ has $n^3(n-1)$ free dimensions in total. The same argument applies to $C$, which contributes another $n^3(n-1)$ dimensions to the invisible sector.

We then move on to analyze the \textit{visible sector}. From \eqref{vanishingkapparetell}, it can be easily verified that each pair of solutions $(\overline{B}, \overline{C})$ takes the following general form:
\begin{equation}\label{visibleform}
    \overline{B}=B_0(\mathbf{z})+\mathcal{D}_tH(s,t), \qquad \overline{C}=C_0(\mathbf{y})+\mathcal{D}_sH(s,t),
\end{equation}
where $B_0,C_0$ are arbitrary functions. Note that when evaluating the dimension of this sector, for a given $H(s,t)$, any variation of the form $\Tilde{H}(s,t)=H(s,t)+h_s(s)+h_t(t)$ should be regarded as equivalent to $H(s,t)$. This is because after applying $\mathcal{D}_t,\mathcal{D}_s$, the degrees of freedom in the additional terms $h_s(s)$, $h_t(t)$ are already contained in the arbitrary choices of $B_0$ and $C_0$. Thus, these terms can be absorbed into $B_0,C_0$. Summarizing the above contributions to the visible sector, the free choices of $B_0(\mathbf{z})$ and $C_0(\mathbf{y})$ each contribute $n^2$ dimensions. The independent degrees of freedom in $H(s,t)$ are
\begin{equation}
    n^2-(2n-1)=(n-1)^2.
\end{equation}
Here $2 \times n$ counts the redundant dimensions carried by each of $h_s(s)$ and $h_t(t)$ whereas the $-1$ corrects for double-counting the constant function in both of them.

We combine the contributions from the invisible and visible sectors and find:
\begin{equation}
    \dim K_n=2n^3(n-1)+2n^2+(n-1)^2=2n^4-2n^3+3n^2-2n+1.
    \label{dimkn}
\end{equation}

\subsubsection*{Step $(2)$: Enumerating hyperbolic pairs}
%

To identify the independent hyperbolic pairs, we first simplify $Q_{BC}$. Define: 
\begin{equation}\label{bcrst}
    b(r,s,t):=\sum_{\mathbf{z}:|\mathbf{z}|=t}B(r,s,\mathbf{z}) ,\qquad c(r,s,t):=\sum_{\mathbf{y}:|\mathbf{y}|=s}C(r,\mathbf{y},t).
\end{equation}
Then $Q_{BC}$ only depends on the summed functions $b$ and $c$: 
\begin{equation}\label{QBClong}
\begin{aligned}
    &Q_{BC}=\sum_{r,\mathbf{y},\mathbf{z}}\big(B(r,s,\mathbf{z})C(r,\mathbf{y},t)+B(r,s+1,\mathbf{z})C(r,\mathbf{y},t+1)\big)\\
    &=\sum_{r,s,t}\sum_{\mathbf{z}:|\mathbf{z}|=t}B(r,s,\mathbf{z})\sum_{\mathbf{y}:|\mathbf{y}|=s}C(r,\mathbf{y},t)+\sum_{r,s,t}\sum_{\mathbf{z}:|\mathbf{z}|=t}B(r,s+1,\mathbf{z})\sum_{\mathbf{y}:|\mathbf{y}|=s}C(r,\mathbf{y},t+1)\\
    &=\sum_{r,s,t}b(r,s,t)c(r,s,t)+\sum_{r,s,t}b(r,s+1,t)c(r,s,t+1)\\
    &=\sum_{r,s,t}b(r,s,t)\big(c(r,s,t)+c(r,s-1,t+1)\big).
\end{aligned}
\end{equation}
The last equation follows because translating $s$ does not affect the summation over $r,s,t$ in the second term. 

Eecall the cyclic shift operator $\tau_s,\tau_t$ defined in \eqref{cyclicshiftst}. Define:
\begin{equation}
  \langle b,c\rangle
  :=\sum_{r,s,t} b(r,s,t)c(r,s,t)\in\mathbb F_2.  
\end{equation}
The residual phase can then be written as:
\begin{equation}
    Q_{BC}=\Big\langle b,(I_r\otimes \mathcal{D}_{st})c\Big\rangle, \quad \text{where}\quad \mathcal{D}_{st}:=1-\tau_s^{-1}\tau_t
\end{equation}
and $I_r$ is the unit operator on $r$. Thus \(Q_{BC}\) depends on \(B,C\) only through \(b,c\). Variations that leave \(b,c\) unchanged are already included in \(\dim K_n\), but do not affect the phase. To count the hyperbolic pairs, we therefore study the allowed values of \(b,c\) under the constraint \(\kappa_{B,C}=0\).

In fact, the same visible-invisible decomposition for $B,C$ above can now be applied to the summed functions $b$ and $c$. For each fixed $(s,t)$, $b(\cdot, s,t)$ and $c(\cdot, s,t)$ are again elements of $\mathbb{F}_2^{\mathbb{Z}_n}$ and their invisible sectors lie in $\ker \sigma$. To describe the visible sectors, denote
\begin{equation}
\overline{b}(s,t):=\sum_rb(r,s,t),\qquad \overline{c}(s,t):=\sum_rc(r,s,t).    
\end{equation}
Then, directly from \eqref{visibleform}:
\begin{equation}\label{bcbar}
\overline{b}=\begin{cases}b_0(t)+\mathcal{D}_tH(s,t),&n\text{ odd},\\b_0(t),&n\text{ even},\end{cases}\qquad
\overline{c}=\begin{cases}c_0(s)+\mathcal{D}_sH(s,t),&n\text{ odd},\\c_0(s),&n\text{ even}.\end{cases}
\end{equation}
Here $b_0$ and $c_0$ are arbitrary functions. 

The relevant decomposition of $\mathbb{F}_2^{\mathbb{Z}_n}$ depends on the parity of $n$.

\paragraph{Odd $n$.} For $n$ is odd, let $\mathbf{1}:=(1,\dots,1)\,\in\,\mathbb{F}_2^{\mathbb{Z}_n}$. Since 
\begin{equation}
\sigma(\mathbf{1})=\sum_{r\in \mathbb{Z}_n}\mathbf{1}(r)=1~~(\text{mod}\ 2),  
\end{equation}
we have $\mathbf{1}\notin\ker\sigma$. Moreover, by definition, for every element $f\in \ker\sigma$, $\langle\mathbf{1},f\rangle=\sum_rf(r)=0$.
Thus, $\mathbf{1}$ is orthogonal to $\ker \sigma$. Since $\dim\ker\sigma=n-1$, its orthogonal complement is one-dimensional. The full space $\mathbb{F}_2^{\mathbb{Z}_n}$  therefore admits the orthogonal direct-sum decomposition
\begin{equation}\label{odddecompose}
    \mathbb{F}_2^{\mathbb{Z}_n}=\ker\sigma\perp\text{span}\{\mathbf{1}\}.
\end{equation}
Accordingly, $b(r,s,t)$ and $c(r,s,t)$ take the following form:
\begin{equation}
    b(r,s,t)= b_{\text{inv}}(r,s,t)+\mathbf{1}(r) \overline{b}(s,t),\quad c(r,s,t)= c_{\text{inv}}(r,s,t)+\mathbf{1}(r) \overline{c}(s,t),
\end{equation}
where $b_{\text{inv}},c_{\text{inv}}\in\ker\sigma$ are the invisible components under the summation over $r$. Thus,
\begin{equation}
    Q_{BC}=\Big\langle b,(I_r\otimes \mathcal{D}_{st})c\Big\rangle=\Big\langle b_{\text{inv}},(I_{\ker \sigma}\otimes \mathcal{D}_{st})c_{\text{inv}}\Big\rangle+\langle\mathbf{1},\mathbf{1}\rangle\Big\langle\overline{b}, \mathcal{D}_{st}\overline{c}\Big\rangle. 
\end{equation}
Here the two inner products in the second term are taken over the $r$ coordinate and over the $s,t$ coordinates, respectively. Note that the mixed couplings between the invisible and the visible sectors vanish due to orthogonality \eqref{odddecompose}. Recall that $\langle\mathbf{1},\mathbf{1}\rangle=1$.
Substituting \eqref{bcbar} in $\langle\overline{b}, \mathcal{D}_{st}\overline{c}\rangle$ produces four terms, each of which individually vanishes. This can be shown by summing a cyclic difference over \(s\) or \(t\).

Therefore, $Q_{BC}=\langle b_{\text{inv}},(I_{\ker \sigma}\otimes \mathcal{D}_{st})c_{\text{inv}}\rangle$. Hence the visible sectors generate no hyperbolic pairs, and all reductions arise from the coupling between the invisible sectors of $b$ and $c$. By Gaussian elimination, the number of hyperbolic pairs generated by this remaining coupling, denoted $h_{odd}$, is given by the rank of $I_{\ker \sigma}\otimes \mathcal{D}_{st}$. Since each direction of $r$ in $\ker \sigma$ carries an independent copy of $\mathcal{D}_{st}$, 
\begin{equation}
\label{hodd}
    h_{odd}=\rank (I_{\ker \sigma}\otimes \mathcal{D}_{st})=\dim ({\ker \sigma})\rank ( \mathcal{D}_{st})=(n-1)n(n-1)=n(n-1)^2.
\end{equation}
We used $\rank (\mathcal{D}_{st})=n(n-1)$. This is because the shift $\tau_s^{-1}\tau_t$ divides the $(s,t)$ torus into $n$ disjoint orbits of length $n$, each characterized by a fixed value of $s+t$. On each orbit, $\mathcal{D}_{st}=1-\tau_s^{-1}\tau_t$ has a one-dimensional constant kernel, which gives the rank $n-1$.

\paragraph{Even $n$.} Now the decomposition of $\mathbb{F}_2^{\mathbb{Z}_n}$ is different because
\begin{equation}
\sigma(\mathbf{1})=\sum_{r\in \mathbb{Z}_n}\mathbf{1}(r)=0~~(\text{mod}\ 2) 
\end{equation}
so $\mathbf{1}\in \ker\sigma$. We choose another element $e\in \mathbb{F}_2^{\mathbb{Z}_n}$ satisfying $\sigma(e)=1$, and hence $\langle\mathbf{1},e\rangle=1$. In this way, we obtain the two-dimensional nondegenerate space: $\text{span}\{\mathbf{1},e\}$. Denote its orthogonal complement in $\mathbb{F}_2^{\mathbb{Z}_n}$ by $E:=\text{span}\{\mathbf{1},e\}^\perp$, with $\dim E=n-2$. This gives a new orthogonal decomposition:
\begin{equation}\label{evendecompose}
    \mathbb{F}_2^{\mathbb{Z}_n}=E\perp \text{span}\{\mathbf{1},e\}.
\end{equation}
Under this decomposition, $b(r,s,t)$ and $c(r,s,t)$ can be written as
\begin{align}
    b(r,s,t) & = b_1(r,s,t)+\mathbf{1}(r) u_b(s,t)+e(r) \overline{b}(s,t) 
    \nonumber \\ 
    c(r,s,t) & = c_1(r,s,t)+\mathbf{1}(r) u_c(s,t)+e(r) \overline{c}(s,t),
\end{align}
where $b_1,c_1$ take values in $E$ while $u_b,u_c$ are arbitrary functions of $(s,t)$. Here $e(r)$ pairs precisely with the visible $\overline{b}$ and $\overline{c}$, since $e$ is the only nonvanishing direction after summation over $r$. For even $n$, \eqref{bcbar} reduces to 
\begin{equation}
    \overline{b}=b_0(t),\qquad \overline{c}=c_0(s).
\end{equation}
Thus, $Q_{BC}$ becomes
\begin{equation}
\begin{aligned}
    &Q_{BC}=\Big\langle b,(I_r\otimes \mathcal{D}_{st})c\Big\rangle\\&=\Big\langle b_{1},(I_{E}\otimes \mathcal{D}_{st})c_{1}\Big\rangle+\langle e,\mathbf{1}\rangle\langle u_b,\mathcal{D}_{st}\overline{c}\rangle+\langle \mathbf{1},e\rangle\langle \overline{b},\mathcal{D}_{st}u_c\rangle+\langle e,e\rangle\langle\overline{b}, \mathcal{D}_{st}\overline{c}\rangle\\
    &=\Big\langle b_{1},(I_{E}\otimes \mathcal{D}_{st})c_{1}\Big\rangle+\langle u_b,\mathcal{D}_{st}\overline{c}\rangle+\langle \overline{b},\mathcal{D}_{st}u_c\rangle
\end{aligned}
\end{equation}
Again, there are no coupling terms between $E$ and $\text{span}\{\mathbf{1},e\}$ because of the orthogonality in \eqref{evendecompose}. In the second line, $\langle \mathbf{1},e\rangle=\langle e,\mathbf{1}\rangle=\sigma(e)=1$ while the last term vanishes by cyclic shift cancellation since $\overline{b}$ and $\overline{c}$ depend only on $t$ and $s$, respectively. At last, the three remaining terms involve disjoint sets of independent variables and give rise to independent hyperbolic pairs.

The first term gives $\rank(I_E\otimes \mathcal{D}_{st})$, just as it did for odd $n$. To study the two additional couplings $\langle u_b,\mathcal{D}_{st}\overline{c}\rangle$ and $\langle \overline{b},\mathcal{D}_{st}u_c\rangle$, define:
\begin{equation}
    U_b(s):=\sum_tu_b(s,t), \qquad U_c(t):=\sum_su_c(s,t).
\end{equation}
Because $\overline{c}$ depends only on $s$ while $\overline{b}$ depends only on $t$, these two inner products reduce to:
\begin{equation}
\begin{aligned}
    &\langle u_b,\mathcal{D}_{st}\overline{c}\rangle=\sum_sU_b(s)\Big(\overline{c}(s)-\overline{c}(s-1)\Big)
    =\langle U_b,\,(1-\tau_s^{-1})\,\overline{c}\,\rangle\\
    &\langle \overline{b},\mathcal{D}_{st}u_c\rangle=\sum_t\overline{b}(t)\Big(U_c(t)-U_c(t+1)\Big)
    =\langle \overline{b},\,(1-\tau_t)\,U_c\,\rangle.
\end{aligned}
\end{equation}
By Gaussian elimination, these two couplings respectively contribute $\rank(1-\tau_s^{-1})$ and $\rank(1-\tau_t)$  hyperbolic pairs to the reduction. Together, 
\begin{equation}
\label{heven}
\begin{aligned}
    &h_{even}=\rank(I_E\otimes \mathcal{D}_{st})+\rank(1-\tau_s^{-1})+\rank(1-\tau_t)\\
    &=(n-2)n(n-1)+2(n-1)=(n-1)(n^2-2n+2).
\end{aligned}
\end{equation}

\subsubsection*{Assembling the answer}
The assignment sum is $2^{n^4 + {\rm step(1)} - {\rm step(2)}}$. Using (\ref{dimkn}), (\ref{hodd}) and (\ref{heven}), we find:
\begin{align}
    \log_2 \Delta_{C_3,\text{(odd)}}^{2,2,2} 
    & 
    =3n^4-3n^3+5n^2-3n+1 
    \label{oddnbranch}\\
    \log_2 \Delta_{C_3,\text{(even)}}^{2,2,2}
    & 
    =3n^4-3n^3+6n^2-6n+3
    \label{evennbranch}
\end{align}
This calculation is easily extended to triangle core graphs with arbitrary block sizes $a,b,c$:
\begin{equation}
\begin{aligned}
\log_2\Delta_{C_3}^{a,b,c}(n)
={}&n^{a+b}+n^{a+c}+n^{b+c}\\
&-\begin{cases}
(n-1)(n^a+n^b+n^c-3n+3),
&n\text{ even and }a,b,c\ge2,\\[1mm]
(n-1)(n^a+n^b+n^c-2n+1),
&\text{otherwise}.
\end{cases}
\end{aligned}
\label{abclog}
\end{equation}
The even $n$ case requires $a,b,c\ge2$ because the root block must retain a non-reference direction and each non-root block must provide an additional replica direction needed for the couplings $\langle u_b,\mathcal{D}_{st}\overline{c}\rangle$ and $\langle \overline{b},\mathcal{D}_{st}u_c\rangle$. 

From \eqref{abclog}, we can extract the reusable triangle coefficient $\Gamma_{C_3}$. For the triangle core graph, the connected subgraph exponent decomposes as:
\begin{equation}
\begin{aligned}
    X_{C_3}(a,b,c,n)=\log_2\Delta_{C_3}^{a,b,c}(n)=&n^{a+b}+n^{b+c}+n^{a+c}+(1-n)(n^a+n^b+n^c)\\&+\Gamma_{C_3}(a,b,c,n).
\end{aligned}
\end{equation}
The first three terms correspond to the three singleton subgraphs while the product collects the contributions from the three connected two-vertex subgraphs, each consisting of two vertices joined by an edge. Subtracting off these lower-support contributions from $\log_2\Delta_{C_3}^{a,b,c}(n)$, we obtain:
\begin{equation}
\Gamma_{C_3}(a,b,c,n)=
\begin{cases}
3(n-1)^2,&n\text{ even and }a,b,c\ge2,\\
(n-1)(2n-1),&\text{otherwise}.
\end{cases}
\label{trianglegamma}
\end{equation}
Equation~\eqref{trianglegamma} establishes the first loop entry in the dictionary of individual graph coefficients~(\ref{eq:cluster-extraction}). 

\leanverifiedbox{The assignment sum for a triangle core graph with GHZ blocks of arbitrary sizes \eqref{abclog} has been formally verified in Lean for all \(a,b,c\ge1\) and \(n\ge2\) in the \texttt{Core3loop} project of our public GitHub
repository~\cite{StabilizerRenyiLean2026}. The corresponding public theorem is
\path{Core3loop.DeltaABC_closed_by_cases_main}. The specialization to three GHZ$_2$ blocks is declared as \path{Core3loop.DeltaABC_regression_222_by_cases}, and its statement agrees with the independently formalized result \path{No16.Delta16_closed_by_parity}.  The Lean verification starts from valid assignments satisfying the replica gluing conditions, and carries the calculation through to an exact evaluation of the assignment sum, yielding \eqref{abclog}. The derivation of the gluing conditions from the replica tensor contraction, as well as the subsequent entropy normalization, is given in Section \ref{sec:calculus} and is not part of these Lean declarations.}

\subsection{A dictionary of four-vertex, five-vertex, and more general loop cores}
\label{subsec:core4-main}

We exhibit explicit full-graph coefficients for the four connected four-vertex loop cores and the eighteen connected five-vertex loop cores, together with an odd-\(n\) formula for an arbitrary odd cycle \(C_{2h+1}\). Using equation~\eqref{eq:loop-support-expansion} and the definitions, these results give analytic expressions for the R\'enyi multi-entropies of the corresponding families of states with arbitrary positive GHZ-block sizes. 

We list the four-vertex coefficients in Table~\ref{tab:core4-gamma-dictionary}. Labels \(a,b,c,d\) in the table denote the positive GHZ-block sizes at the corresponding vertices. For a condition \(\mathcal P\), we let \(\mathbf 1_{\mathcal P}=1\) when \(\mathcal P\) holds and \(\mathbf 1_{\mathcal P}=0\) otherwise.
 Coefficients for five-vertex and more general loop cores, including the available even-\(n\) expressions for the five-vertex graphs, are collected in Appendix~\ref{app:core5-dictionary}.

The calculation uses the Gauss-sum techniques developed for the triangle in Section~\ref{subsec:triangle-main}. After evaluating the assignment sum of each core, we subtract the contributions of its proper connected induced subgraphs, as prescribed in Section~\ref{subsec:universal-known-graph-formula}. We present the resulting coefficients without detailed derivations and illustrate their use below.

To illustrate how the dictionary is used, consider the triangle-plus-leaf graph in the second row. Its proper connected induced subgraphs consist of four single vertices, four edges, the two paths \(b\!-\!a\!-\!d\) and \(c\!-\!a\!-\!d\), and the triangle on the vertices labelled \(a,b,c\). The single-vertex, edge, and three-vertex path coefficients are \(1\), \(1-n\), and \((1-n)^2\), respectively. Equation~\eqref{eq:loop-support-expansion} therefore gives:
\begin{equation}
    \begin{aligned}
  X_{\rm TL}(a,b,c,d;n)={}&
  n^{b+c+d}+n^{a+c+d}+n^{a+b+d}+n^{a+b+c}\\
  &+(1-n)\bigl(
    n^{c+d}+n^{b+d}+n^{a+d}+n^{b+c}
  \bigr)\\
  &+(1-n)^2\bigl(n^c+n^b\bigr)
  +\Gamma_{C_3}(a,b,c;n)n^d\\
  &+\Gamma_{\rm TL}(a,b,c,d;n).
\end{aligned}
\end{equation}
Substituting the second row of
Table~\ref{tab:core4-gamma-dictionary} determines \(X_{\rm TL}\) and, by extension, \(\Delta_{\rm TL}=2^{X_{\rm TL}}\). 

\begin{table}[!htbp]
\centering
\footnotesize
\setlength{\tabcolsep}{4pt}
\renewcommand{\arraystretch}{1.22}
\begin{tabularx}{\textwidth}{@{}
  >{\centering\arraybackslash}p{0.25\textwidth}
  >{\raggedright\arraybackslash}X@{}}
\toprule
Core graph & Full-graph coefficient \(\Gamma_H(a,b,c,d;n)\)\\
\midrule

\CoreFourDictionaryGraph{square \(C_4\)}{1/2,2/3,3/4,4/1}
&
\(\displaystyle
  \Gamma_{C_4}=(n-1)-2(n-1)^3
\)
\\[2mm]

\CoreFourDictionaryGraph{triangle plus leaf}{1/2,2/3,1/3,1/4}
&
\(\displaystyle
\begin{aligned}
  \Gamma_{\rm TL}={}&
  \Gamma_{C_3}(a+d,b,c;n)\bigl(n^d-(n-1)\bigr)
  -\Gamma_{C_3}(a,b,c;n)n^d\\
  &-\mathbf 1_{\substack{n\text{ even},\ a=1,\\ b,c,d\ge2}}
  (n-1)(n-2)\bigl(n^d-2(n-1)\bigr)
\end{aligned}
\)
\\[3mm]

\CoreFourDictionaryGraph{diamond \(K_4\setminus e\)}
  {1/2,2/3,3/4,4/1,1/3}
&
\(\displaystyle
\begin{aligned}
  \Gamma_D={}&-(n-1)^2(3n-1)\\
  &+\mathbf 1_{n\text{ even}}(n-1)(n-2)
  \Bigl[-n\bigl(
  \mathbf 1_{a,b,c\ge2}+\mathbf 1_{a,c,d\ge2}
  \bigr)\\
  &\hspace{23mm}+\mathbf 1_{b,d\ge2}
  +(n-2)\bigl(
  \mathbf 1_{a,b,d\ge2}+\mathbf 1_{b,c,d\ge2}
  \bigr)\\
  &\hspace{23mm}-(3n-5)\mathbf 1_{a,b,c,d\ge2}
  \Bigr]
\end{aligned}
\)
\\[4mm]

\CoreFourDictionaryGraph{complete graph \(K_4\)}
  {1/2,1/3,1/4,2/3,2/4,3/4}
&
\(\displaystyle
\begin{aligned}
  \Gamma_{K_4}={}&-(n-1)(3n^2-3n+1)\\
  &-\mathbf 1_{n\text{ even}}(n-1)(n-2)
  \Bigl[n\bigl(
  \mathbf 1_{a,b,c\ge2}+\mathbf 1_{a,b,d\ge2}\\
  &\hspace{37mm}+\mathbf 1_{a,c,d\ge2}
  +\mathbf 1_{b,c,d\ge2}
  \bigr)\\
  &\hspace{23mm}+3(n-1)\mathbf 1_{a,b,c,d\ge2}
  \Bigr]
\end{aligned}
\)
\\

\bottomrule
\end{tabularx}
\caption{Individual graph coefficients for the four connected four-vertex loop cores, valid for positive GHZ-block sizes \(a,b,c,d\). The triangle coefficient \(\Gamma_{C_3}\) is given in equation~(\ref{trianglegamma}).}
\label{tab:core4-gamma-dictionary}
\end{table}
\FloatBarrier

\leanverifiedbox{The available finite-assignment results underlying Table~1 have been machine-checked for integer $n\geq 2$ and positive block sizes $a,b,c,d$ in the \texttt{Core4loop} project of our public GitHub repository~\cite{StabilizerRenyiLean2026}. The passage from the graph-state replica contraction to the assignment sum, the interpretation of subgraph contributions in equation~\eqref{eq:loop-support-expansion}, and the entropy normalization are supplied in the manuscript and are not part of these Lean declarations.}

\paragraph{Five-vertex and more general loop cores.}

Appendix \ref{app:core5-dictionary} lists the eighteen connected five-vertex loop cores and their full-graph coefficients. The odd-\(n\) formulas cover all eighteen graphs for arbitrary positive GHZ-block sizes. The even-\(n\) expressions are also given there, apart from the explicitly marked subcases of \(G_{16},G_{17},G_{18}\). The appendix also gives a closed formula for the full-graph coefficient of an arbitrary odd cycle \(C_{2h+1}\) for odd \(n\).

\subsection{The six-cycle as a global replica-torus problem}
\label{subsec:c6-global}
We now consider the six-cycle with one qubit in each core block, as shown in Figure~\ref{fig:six-cycle}. Whereas the triangle in Section~\ref{subsec:triangle-main} reduces to a \emph{local} problem involving fewer torus coordinates, the six-cycle leaves a \emph{global} problem on the full replica torus. Its assignment sum remains exactly computable at each integer \(n\ge2\), but we do not obtain a closed expression for arbitrary \(n\).


\begin{figure}[h]
\centering
\begin{tikzpicture}[
    scale=0.8,
    line width=0.9pt,
    vertex/.style={circle, fill=black, inner sep=1.6pt},
    every label/.style={font=\small}
]

\coordinate (v1) at (0,{sqrt(3)});
\coordinate (v2) at (1.5,{sqrt(3)/2});
\coordinate (v3) at (1.5,{-sqrt(3)/2});
\coordinate (v4) at (0,{-sqrt(3)});
\coordinate (v5) at (-1.5,{-sqrt(3)/2});
\coordinate (v6) at (-1.5,{sqrt(3)/2});

\draw (v1) -- (v2) -- (v3) -- (v4) -- (v5) -- (v6) -- cycle;

\node[vertex] at (v1) {};
\node[vertex] at (v2) {};
\node[vertex] at (v3) {};
\node[vertex] at (v4) {};
\node[vertex] at (v5) {};
\node[vertex] at (v6) {};

\node[yshift=10pt] at (v1) {$v_1$};
\node[xshift=10pt,yshift=5pt] at (v2) {$v_2$};
\node[xshift=10pt,yshift=-5pt] at (v3) {$v_3$};
\node[yshift=-10pt] at (v4) {$v_4$};
\node[xshift=-10pt,yshift=-5pt] at (v5) {$v_5$};
\node[xshift=-10pt,yshift=5pt] at (v6) {$v_6$};

\end{tikzpicture}
\caption{Illustration of the six-cycle considered in Section~\ref{subsec:c6-global}.}
\label{fig:six-cycle}
\end{figure}


For the simplest instance of the six-cycle problem, consider the graph state
\begin{equation}
  2^{-3}\sum_{x_1,\cdots,x_6\,\in\,\F_2^6}(-1)^{D_{C_6}}
  \ket{x_1x_2x_3x_4x_5x_6}
\end{equation}
with the phase polynomial
\begin{equation}
  D_{C_6}=x_1x_2+x_2x_3+x_3x_4+x_4x_5+x_5x_6+x_6x_1.  
\end{equation}

Choose party \(1\) as reference. The replica torus is $(\Zn)^5$, which we parametrize by $\mathbf{s}=(s_2,\dots,s_6)$. The six qubits in the bras and kets are specified by the fields \(A^{bra},A^{ket},B_2^{bra},B_2^{ket},\ldots,B_6^{bra},B_6^{ket}:(\Zn)^5\rightarrow\F_2\). In this instance, the bra-ket contractions identify $A^{bra}=A^{ket}$ on each torus point and set $B_i^{bra}(\mathbf{s})=B_i^{ket}(\mathbf{s}+\mathbf e_i)$ along each torus direction $2 \leq i \leq 6$. Crucially, the contractions do not further simplify the coordinate dependence of the assignment fields. Writing \(\tau_i\) for the cyclic shift in direction \(i\), the exponent in the $(-1)^N$ phase is 
\begin{align}
N\,=& \phantom{+} \sum_{\mathbf{s}}A\Big((1+\tau_2)B_2 +(1+\tau_6)B_6\Big)\nonumber \\
&+\sum_{\mathbf{s}}\Big(B_2B_3+(\tau_2B_2)(\tau_3B_3)
 +B_3B_4+(\tau_3B_3)(\tau_4B_4)\Big) \label{eq:sixphase} \\
&+\sum_{\mathbf{s}}\Big(B_4B_5+(\tau_4B_4)(\tau_5B_5)
 +B_5B_6+(\tau_5B_5)(\tau_6B_6)\Big), \nonumber
\end{align}
where each assignment field $A,B_2,\dots,B_6$ is a map from $\Zn^5$ to $\F_2$. Note that the three fields $A,B_3, B_5$ each enter $N$ linearly in one line of (\ref{eq:sixphase}). Performing the assignment sum over that trio, we obtain a combined power-of-two factor and three constraints over $B_2,B_4, B_6$. Summing over \(A\) imposes
\begin{equation}\label{eq:6consstraint1}
  (1+\tau_2)B_2(\mathbf{s})+(1+\tau_6)B_6(\mathbf{s})=0  
\end{equation}
for every $\mathbf{s}\in(\Zn)^5$ and contributes \(2^{n^5}\). Similarly, summing over $B_3$ and $B_5$ each gives another \(2^{n^5}\) and imposes:
\begin{equation}\label{eq:6consstraint2}
    \begin{aligned}
        (1+\tau_2\tau_3^{-1})B_2(\mathbf{s})+(1+\tau_3^{-1}\tau_4)B_4(\mathbf{s})&=0,\\
(1+\tau_4\tau_5^{-1})B_4(\mathbf{s})+(1+\tau_5^{-1}\tau_6)B_6(\mathbf{s})&=0.
    \end{aligned}
\end{equation}
Each independent solution of $(B_2,B_4,B_6)$ that satisfies \eqref{eq:6consstraint1} and \eqref{eq:6consstraint2} contributes \(2^{3n^5}\) to the assignment sum. 

It remains to enumerate the linearly independent such solutions. This is equivalent to evaluating the kernel of the operator $M_n:
\left(\F_2^{\Zn^5}\right)^3
\longrightarrow
\left(\F_2^{\Zn^5}\right)^3$, which acts on the triplet of torus fields $(B_2,B_4,B_6)$:
\begin{equation}
    M_n(B_2,B_4,B_6)=
\begin{pmatrix}
(1+\tau_2)B_2+(1+\tau_6)B_6\\
(1+\tau_2\tau_3^{-1})B_2+(1+\tau_3^{-1}\tau_4)B_4\\
(1+\tau_4\tau_5^{-1})B_4+(1+\tau_5^{-1}\tau_6)B_6.
\end{pmatrix}.
\end{equation}
Treating $M_n$ as a \(3n^5\times 3n^5\) matrix over \(\mathbb F_2\), we can therefore write the assignment sum as:
\begin{equation}
\Delta_{C_6}(n)
=
2^{3n^5}\lvert\ker M_n\rvert
=
2^{\,6n^5-\operatorname{rank}_{\mathbb F_2}M_n}.
\label{eq:C6-rank-form}
\end{equation}
We have not been able to obtain an analytic formula for the rank at general $n$. (We return to this point in the Discussion.) The results for $n = 2,3,4$ read:
\begin{equation}
    \begin{array}{c|c|c}
n & \dim_{\F_2}\ker M_n & \log_2\Delta_{C_6}(n)\\
\hline
2 & 34 & 130\\
3 & 157 & 886\\
4 & 498 & 3570
\end{array}.
\end{equation}

\section{Generalization and scope}
\label{sec:generalization-scope}
\label{sec:generalizations}
\label{sec:extensions}

In this section, we show that the finite-field Gauss-sum framework remains applicable when several qubits are grouped into one subsystem and when qubits are replaced by prime-dimensional qudits. We illustrate both extensions with explicit calculations of R\'enyi multi-entropy. 

\subsection{Coarsening the subsystem partition}
\label{subsec:packaged-qubits}

In Sections~\ref{sec:calculus}--\ref{sec:loops}, each physical qubit was treated as a separate subsystem. We now consider coarser partitions, in which some subsystems contain two or more qubits. We consider a \(\mathcal P=\{P_1,\ldots,P_q\}\) on the set of constituent qubits and study \(S_n^{(q)}(P_1:\cdots:P_q)\).

With multi-qubit subsystems, a crucial property of $S_n^{(q)}$ is that it is unchanged under unitary transformations, which act within parties:
\begin{equation}
      \mathcal Z_n(U_{\mathcal P}\ket{\psi})=\mathcal Z_n(\ket{\psi}),\quad S_n^{(q)}(U_{\mathcal P}\ket{\psi})=S_n^{(q)}(\ket{\psi}),\quad \quad U_{\mathcal P}=\bigotimes_{P_\alpha\in\mathcal P}U_{P_\alpha}
\end{equation}
The subsystem-wise unitaries $U_{\mathcal P}$ can significantly simplify the calculation of $S_n^{(q)}$. 

We call a graph state {\bf reducible with respect to $\mathcal{P}$} if a $U_{\mathcal P}$ can transform it to another graph state, which is disconnected. If this is not possible, we call the graph state {\bf irreducible}. A disconected graph state is a tensor product of the graph states, which are associated with its connected components. Because
\begin{equation}
    \mathcal Z_n(\psi_1\otimes\psi_2)
  =\mathcal Z_n(\psi_1)\mathcal Z_n(\psi_2)
  \qquad {\rm and} \qquad
  S_n^{(q)}(\psi_1\otimes\psi_2)
  =S_n^{(q)}(\psi_1)+S_n^{(q)}(\psi_2),
\end{equation}
it is sufficient to calculate the multi-R\'enyi entropies for connected graph states, which are irreducible with respect to the given multi-partition.

In particular, every \(\mathcal{N}\)-qubit stabilizer state with exactly three packaged subsystems is reducible when \(\mathcal{N}\geq4\): the tripartite stabilizer normal-form theorem \cite{Bravyi:2005ztn} states that every pure stabilizer state shared among three subsystems is locally (subsystem-wise) Clifford-equivalent to a tensor product of three-party GHZ states, Bell pairs between packages, and unentangled single-qubit states. Consequently, genuinely irreducible packaged stabilizer states can appear only for four or more subsystems. In addition, all core-2 states (equation~(\ref{eq:twoblockstate}) and Figure~\ref{fig:core2}) are reducible with respect to all multi-partitions except the finest one (one qubit per party).

The calculation still reduces to a finite-field quadratic Gauss sum. Grouping qubits together changes the replica gluing conditions, but they remain linear in the bit labels while the stabilizer phases remain quadratic. We illustrate the calculation with a concrete example below, showing how the R\'enyi multi-entropy responds to the choice of partition.

\paragraph{An irreducible example} The simplest family of genuinely irreducible states comprises core-3 graph states~\eqref{eq:threeblock}, which we discussed in Section~\ref{subsec:worked-examples}. In the text below, we call the state $T$ in order not to clutter the notation.

In expression~\eqref{eq:threeblock}, we work with GHZ block sizes \(a\ge1\) and \(b,c\ge2\).  In this graph state, we choose the multipartition $\mathcal{P}$, in which one qubit from each of the $b$- and $c$-blocks (not the central block) form one two-qubit party \(P\) and all remaining physical qubits are individual parties. Note that with respect to $\mathcal{P}$, the core-3 tree graph is also equivalent to a core-3 triangle graph. A subsystem-local $CZ$ gate acting within $P$ adds an edge between the two corresponding core blocks, which maps the state to the triangle core graph. Therefore:
\begin{equation}
  \Delta_{\triangle,\mathcal{P}}^{a,b,c}(n)
  =\Delta_{T,\mathcal{P}}^{a,b,c}(n)
  \qquad {\rm and} \qquad
  S_{n,\triangle,\mathcal{P}}^{a,b,c}
  =S_{n,T,\mathcal{P}}^{a,b,c}.  
\end{equation}

We now evaluate the R\'enyi multi-entropy $S_n^{q}$ for this multi-partition, where $q=a+b+c-1$. Choosing a qubit from the central GHZ block as reference, the $(q-1)$-dimensional toric lattice is characterized by the coordinates:
\begin{equation}
  \mathbf r\in(\Zn)^{a-1},\qquad
  \mathbf y\in(\Zn)^{b-1},\qquad
  \mathbf z\in(\Zn)^{c-1},\qquad
  p\in\Zn
\end{equation}
Coordinate $P$ is the shared shift direction of the two qubits in $P$. The assignment fields are $A,B,C:\Zn^{a+b+c-2}\rightarrow \F_2$ and satisfy the gluing conditions:
\begin{equation}
    \begin{aligned}
&A^{bra}(\mathbf r,\mathbf y,\mathbf z,p)=A^{ket}(\mathbf r,\mathbf y,\mathbf z,p)\\&A^{bra}(\mathbf r,\mathbf y,\mathbf z,p)=A^{ket}(\mathbf r+\mathbf e_r,\mathbf y,\mathbf z,p)\\
  &B^{bra}(\mathbf r,\mathbf y,\mathbf z,p)=B^{ket}(\mathbf r,\mathbf y+\mathbf e_y,\mathbf z,p)=B^{ket}(\mathbf r,\mathbf y,\mathbf z,p+1)\\
  &C^{bra}(\mathbf r,\mathbf y,\mathbf z,p)=C^{ket}(\mathbf r,\mathbf y,\mathbf z+\mathbf e_z,p)=C^{ket}(\mathbf r,\mathbf y,\mathbf z,p+1),
\end{aligned}
\end{equation}
where \(\mathbf e_r\), \(\mathbf e_y\), and \(\mathbf e_z\) denote arbitrary standard basis vectors in the \(\mathbf r\)-, \(\mathbf y\)-, and \(\mathbf z\)-coordinate sectors, respectively. Let $s:=|\mathbf y|$ and $t:=|\mathbf z|$. The coordinate dependence of these assignment fields can then be written as:
\begin{equation}
    \begin{aligned}
&A^{bra}=A^{ket}=A(\mathbf y,\mathbf z,p)\\
  &B^{bra}=B(\mathbf r,s+p+1,\mathbf z),\ B^{ket}=B(\mathbf r,s+p,\mathbf z)\\
  &C^{bra}=C(\mathbf r,\mathbf y,t+p+1),\ C^{ket}=C(\mathbf r,\mathbf y,t+p).
\end{aligned}
\end{equation}
The two edges of the core graph lead to four terms in the exponent of $(-1)$, which are linear in the root field \(A\). We may first perform the sum over $A$, which yields the constant factor \(2^{n^{b+c-1}}\) and imposes the usual compatibility condition \(K_{BC}=0\), where:
\begin{equation}\label{KBC}
    K_{BC}=\sum_\mathbf r \Big(B(\mathbf r,s+p,\mathbf z)+B(\mathbf r,s+p+1,\mathbf z)+C(\mathbf r,\mathbf y,t+p)+C(\mathbf r,\mathbf y,t+p+1)\Big).
\end{equation}
No residual phase remains after summing over \(A\), so the assignment sum is determined entirely by the number of solutions of \(K_{BC}=0\). 

Let \(\overline B,\overline C\) denote the sums of \(B,C\) over \(\mathbf r\). Using cyclic differences in the combined coordinate directions, the constraint becomes
\begin{equation}
    \mathcal{D}_{s+p} \overline B(s+p,\mathbf z)+\mathcal{D}_{t+p} \overline C(\mathbf y,t+p)=0,
\end{equation}
where $\mathcal{D}_{s+p}=1-\tau_{s+p}$, and $\mathcal{D}_{t+p}=1-\tau_{t+p}$. Mirroring the discussion in Section \ref{subsec:triangle-main}, the invisible sectors of $B,C$ that disappear after the summation over $\mathbf r$ span $(n^{a-1}-1)n^c$ and $(n^{a-1}-1)n^b$ dimensions,  respectively. For the visible sectors, besides the arbitrary $B_0(\mathbf z)$ and $C_0(\mathbf y)$ that vanish under $\mathcal{D}_{s+p}$ and $\mathcal{D}_{t+p}$, we still need to consider the degrees of freedom in $\overline B,\overline C$ for which the two derivatives are nonzero but equal:
\begin{equation}
    \mathcal{D}_{s+p} \overline B(s+p,\mathbf z)=\mathcal{D}_{t+p} \overline C(\mathbf y,t+p).
\end{equation}
The equality must hold for every \(\mathbf y,\mathbf z,p\). By setting \(\mathbf z=\mathbf 0\) and \(\mathbf y=\mathbf 0\), it is easy to check that this remaining sector is parameterized by a single function \(H(s+p+t)\). Thus, the visible sectors are:
\begin{equation}
    \overline B=B_0(\mathbf z)+H(s+p+t),\qquad \overline C=C_0(\mathbf y)+H(s+p+t).
\end{equation}
The free dimensions of \(B_0(\mathbf z)\) and \(C_0(\mathbf y)\) are \(n^{c-1}\) and \(n^{b-1}\), respectively, while that of \(H(s+p+t)\) is \(n-1\), with the constant solution of \(H\) absorbed into \(B_0\) and \(C_0\). Thus, combining the visible and invisible sectors, the solution space of $K_{BC}=0$ has dimension:
\begin{equation}
    (n^{a-1}-1)(n^c+n^b)+n^{c-1}+n^{b-1}+(n-1)
\end{equation}
Together with the constant-factor contribution from the summation over the root field, the assignment sum is:
\begin{equation}
\begin{aligned}
  X_{T,\mathcal{P}}^{a,b,c}(n)
  &=
  n^{b+c-1}+n^{a+c-1}+n^{a+b-1}-\bigl(n^{b-1}+n^{c-1}-1\bigr)(n-1),\\
  \Delta_{T,\mathcal{P}}^{a,b,c}(n)
  &=2^{X_{T,\mathcal{P}}^{a,b,c}(n)} .
\end{aligned}
\label{eq:core3-package-exponent}
\end{equation}

Restoring the factor \(2^{-3}\) from the three-variable core at each of the
\(n^{a+b+c-2}\) torus sites gives:
\begin{equation}
  \mathcal Z_{T,\mathcal{P}}^{a,b,c}(n)
  =2^{X_{T,\mathcal{P}}^{a,b,c}(n)-3n^{a+b+c-2}}.    
\end{equation}
Thus, the multi-R\'enyi entropy for the family of irreducible core-3 trees with a single two-qubit subsystem is:
\begin{equation}
S_{n,T,\mathcal{P}}^{a,b,c}
=\frac{3n^{a+b+c-2}-n^{b+c-1}-n^{a+c-1}-n^{a+b-1}
+(n-1)\bigl(n^{b-1}+n^{c-1}-1)}{(n-1)\,n^{a+b+c-3}}
\log 2
\label{eq:core3-package-entropy}
\end{equation}
Within this family, the smallest irreducible example is \((a,b,c)=(1,2,2)\), for which:
\begin{equation}
  X_{T,\mathcal{P}}^{1,2,2}(n)=n^3+3n-1
  \qquad {\rm and} \qquad
  S_{n,T,\mathcal{P}}^{1,2,2}
  =\left(2+\frac{2}{n}-\frac{1}{n^2}\right) \log 2.
  \label{eq:core3-package-minimal}
\end{equation}

\subsection{Changing the coefficient field: Prime qudits}
\label{subsec:prime-qudits}


We now extend the calculation from qubits to prime-dimensional qudits. This replaces the binary field \(\F_2\) with \(\Fp\)~\cite{gross2007hudson}. The finite-field Gauss sum method remains applicable and the resulting R\'enyi multi-entropies closely resemble those of the corresponding qubit graphs. As we shall see, the qudit calculations closely resemble their qubit predecessors. An obvious replacement is $\log 2 \to \log p$ in all logarithmic quantities.


Let $p$ be prime and $\omega_p=e^{2\pi i/p}$. Qudit graph states~\cite{keet2010qudit} are defined by edge-weighted graphs whose edges encode applications of the weighted gate
\begin{equation}
    CZ_{ij}^{(w)}\ket{x_i x_j}
  =\omega_p^{w x_i x_j}\ket{x_i x_j}
  \qquad {\rm for}~~x_i,x_j\in\Fp,\; w\in\Fp^\times.
\end{equation}
A weighted graph is a triple $G=(V,E,w)$ where $w:E\to\Fp^\times$ assigns a nonzero weight to each edge. In comparison to the qubit discussion, the phase polynomial $D_G(x)=\sum_{(u,v)\in E}w_{uv}x_u x_v$ now sets the exponent of $\omega_p$. 

By a local Clifford scaling at the leaves, any prime-dimensional qudit graph state can be brought to a form whose leaf
edges all have unit weight~\cite{BahramgiriBeigi2007,RevisZakaryanRaissi2025}. Therefore, in defining a core graph representation of a qudit state, we do not have to carry the information about the weights of the leaves; we can assume they are all set to unity. With this assumption, mirroring the definition of the qubit core graph representation, we set:

\begin{definition}
\label{def:prime-core-graph-state}
Let $G=(V,E,w)$ be a weighted graph and $\mathbf a\in\mathbb Z_{>0}^{|V|}$ a block-size vector.  {\bf The prime-field core-graph state} $\ket{\Psi_{G,\mathbf a}^{(p)}}$ in the computational basis is:
\begin{equation}\label{eq:prime-core-state}
\ket{\Psi_{G,\mathbf a}^{(p)}}
:=p^{-k/2}\sum_{x\,\in\,\Fp^k}\omega_p^{D_G(x)}
\bigotimes_{v\in V}\ket{x_v^{a_v}},
\qquad
D_G(x)=\sum_{(u,v)\,\in\, E}w_{uv}x_u x_v,
\end{equation}
where $k:=|V|$ and $\ket{x_v^{a_v}}$ records $a_v$ repeated copies of the same $p$-dit label.
\end{definition}
\noindent A physical graph state is related to its core graph by a local Clifford transformation, which a qudit-wise Fourier transform:
\begin{equation}
    \ket{G_{\rm phys}^{(p)}}
    =
    \left(
        \bigotimes_{v\in V}
        \bigotimes_{r\in L_v} F_{p,r}
    \right)
    \ket{\Psi_{G,\mathbf a}^{(p)}}
    \qquad {\rm where} \qquad
        F_p\ket{x}
    =
    p^{-1/2}\sum_{y\,\in\,\mathbb F_p}\omega_p^{xy}\ket{y}
\end{equation}

Let us highlight the differences between $p=2$ and general $p$-dimensional qudits:
\begin{itemize}
    \item The normalization of the state: $2^{-k/2} \to p^{-k/2}$.
    \item The phase: $(-1)^{D_G(x)} \to \omega_p^{D_G(x)}$.
    \item The exponent now involves the difference between ket and bra phases because of the complex conjugation in the bra states. (At $p=2$, there was no difference between addition and subtraction.)
    \item Variables that enter $D_G(x)$ linearly are summed using:
    \begin{equation}
    \sum_{a\,\in\,\Fp}\omega_p^{ak}=p\,\delta_{k,0}.
\end{equation}
Once again, summing over variables that appear linearly in the exponent imposes linear constraints. After solving them, the phase is at most quadratic in the remaining (unconstrained) assignment fields.
    \item For loop graphs, the distinction between odd and even $n$ branches lifts to a distinction between $p\nmid n$ and $p \mid n$. We illustrate this below.
\end{itemize}
\noindent
For trees, equation~\eqref{eq:tree-exponent} still applies, now understood as the base-$p$ exponent: $\Delta_T^{(p)}(n)=p^{X_T(n)}$. The argument in Section~\ref{subsec:subtreeformula} is only potentially affected by going from $p=2$ to general $p$ in the leaf elimination formula. However, the key quantity there---the rank of the cyclic difference operator $w(1-\tau)$, equal to $(n-1)$---is unaffected by going to other prime $p$ because the kernel still comprises only constant functions.

\paragraph{A loop example: Two distinct branches}
For a comparison with Section~\ref{subsec:triangle-main}, we consider the triangular core graph. The multi-R\'enyis again develop two branches, now depending on whether $p$ divides $n$. For the unit-weight triangle with two physical qudits in each core block:
\begin{equation}
\Delta_{\triangle,(2,2,2)}^{(p)}(n)=
\begin{cases}
p^{3n^4-3n^3+5n^2-3n+1}& ~{\rm for}~~p\nmid n,\\[2mm]
p^{3n^4-3n^3+6n^2-6n+3}& ~{\rm for}~~p\mid n.
\end{cases}    
\label{genploop}
\end{equation}
In other words, the exponents are the same as in equations~(\ref{oddnbranch}-\ref{evennbranch}). For arbitrary block sizes $a,b,c$, the exponent (now in base $p$) in equation~\eqref{abclog} still applies. 
As before, the exceptional branch (now for $p \mid n$) still requires $a,b,c\ge2$. 

To see the arithmetic origin of the branch, let:
\begin{equation}
  \langle f,g\rangle=\sum_{t\,\in\,\Zn}f(t)\,g(t)\in\Fp
\end{equation}
Note that for the constant function $\one=(1,\ldots,1)$, we have
  $\langle\one,\one\rangle=0 \in \mathbb{F}_p$ if $p \mid n$ and non-zero otherwise.
This precisely parallels the role of the constant function in the binary triangle calculation of Section~\ref{subsec:triangle-main}. 

\paragraph{Analogue of Subsection~\ref{subsec:c6-global}}
In calculating the multi-R\'enyi entropies for qudit stabilizer states, we often encounter the same difficulty as in Section~\ref{subsec:c6-global}, that is  residual fields that cannot be eliminated locally on the global replica torus. In such cases, the calculation involves a matrix rank, which is challenging to compute analytically for all $n$. The simplest example where this happens is the 
four-qutrit AME\((4,3)\) state~\cite{helwig2013ame}
\begin{equation}
    \ket{\Omega}
  =\frac{1}{3}\sum_{a,s\in\mathbb F_3}
    \ket{a}\otimes \ket{ a+s}\otimes \ket{s}\otimes \ket{s-a},
\end{equation}
which is also called a perfect tensor in some literature \cite{Kbasis2019}. 
\section{Discussion and outlook}
\label{sec:discussion}

We have calculated the multi-R\'enyi entropies for a broad class of stabilizer states. The purpose of this exercise is to supply cases studies, from which one might garner qualitative and quantitative insights about the behavior of the multi-R\'enyis. Our key results are formula~(\ref{eq:tree-exponent}), which covers all tree graph states, and formula~(\ref{eq:odd-bridge-factorization}) along with an assortment of individual loop graph coefficients $\Gamma_L$, which pertains to graph states with loops. We have also observed that the multi-R\'enyi entropies of graph states with loops have qualitatively distinct behaviors at even and odd $n$ (Section~\ref{subsec:triangle-main}), and identified a quintessentialy non-local aspect of the computation, which is showcased by the 6-cycle graph state (Section~\ref{subsec:c6-global}).

The last two points might hint at some qualitative insights. The distinction between even and odd replica index seems to be a signature of graph states with loops. Moreover, its lift to $p$-dimensional qudits (Section~\ref{subsec:prime-qudits}) could be harnessed to detecting $p$. If we had an operational way to control integer-indexed multi-R\'enyi entropies of unknown states, we could use the distinction in equation~(\ref{genploop}) to determine the base $p$.

The non-local aspects of the computation encountered in Section~\ref{subsec:c6-global} are also intriguing. A preliminary study of the 6-cycle graph state's multi-R\'enyi entropies at larger $n$ indicates a non-polynomial behavior with some exotic ingredients that depend on the factors of $n$. The key source of the difficulty in this computation is that the linear conditions obtained after summing over the linearly appearing assignment fields are non-local in character. This non-locality is reminiscent of other features of the 6-cycle graph state, which have found use elsewhere \cite{WRC2024}. Indeed, this quantum state is the paradigmatic resource state for Measurement-Based Quantum Computation (MBQC) \cite{raussendorf2001oneway}. For structural reasons, we expect the universal MBQC resource states such as the two-dimensional cluster state \cite{raussendorf2001oneway,markov2008simulating,Nest:2006rge} to exhibit the same type non-locality in its multi-R\'enyi spectrum. It would be fascinating to understand if the said behavior of the multi-R\'enyis is conceptually linked to the computational power possessed by those states.

Most of the analysis in this paper concerns the finest partition of constituent qubits: one qubit per party. We briefly sketched the generalization to coarser partitions in Section~\ref{subsec:packaged-qubits}. We have not attempted a systematic study of just how much information about the state one gets by studying the complete set of the multi-R\'enyi entropies under all multi-partitions. To illustrate this question, consider the example in Figure~\ref{fig:tree-counterexample}. It shows two graph states, which have identical multi-R\'enyi spectra {\it for the finest partition} (one qubit per party). Does there exist a pair of states, for which all multi-R\'enyi entropies over {\it all} partitions are identical? If so, what is such a minimal example? We leave this question to future work.

\paragraph{Limits and monotonicity.}
It is interesting to study the \(n\to\infty\) and \(n\to1\) limits of the R\'enyi multi-entropies. We collect some representative examples in Table~\ref{tab:limit-values}. In every qubit stabilizer family solved here, \(S_\infty^{(q)}\) is an integer multiple of $\log 2$. 

For unit-block trees:
\begin{equation}
    \frac{S_\infty^{(q=\mathcal{N})}(T)}{\log 2}
  =
  \mathcal{N}-\sum_{\substack{
      \varnothing\neq U\subseteq V(T)\\
      T[U]\ \mathrm{connected}
  }}(-1)^{|U|-1}.
\end{equation}
This limit turns out to reproduce a quantity of independent interest: the maximum min-entropy over all bipartitions of the full system. Preliminary studies indicate that this conclusion persists for all stabilizer states, including those whose graph representations involve loops. We will investigate (and hopefully prove) this relation in forthcoming work~\cite{CzechFengJuWuZhao2026}.

The \(n\to1\) limit, which motivated the introduction of the multi-R\'enyi entropies in holographic duality, is---as usual---marred by an ambiguity concerning the choice of analytic continuation. In Table~\ref{tab:limit-values} we display the most na{\"\i}ve extrapolations. 


Unlike ordinary R\'enyi entropy~\cite{Zyczkowski2005}, the polynomial continuations of R\'enyi multi-entropy need not decrease monotonically for real \(n>0\). Restricting instead to integer \(n\ge2\), where the replica contractions can be evaluated directly without analytic continuation, \(S_n^{(q)}\) appears to be nonincreasing in stabilizer examples examined here. A general proof of this integer-\(n\) monotonicity remains open~\cite{GaddeJainKrishnaKulkarniSharma2023}.

\begin{table}[!htbp]
\centering
\small
\setlength{\tabcolsep}{5pt}
\renewcommand{\arraystretch}{1.25}
\begin{tabularx}{\textwidth}{
  @{}
  >{\raggedright\arraybackslash}p{0.36\textwidth}
  >{\raggedright\arraybackslash}X
  cc
  @{}
}
\toprule
Family (parties)
& \(S_n/\log 2\)
& \(n\to1\)
& \(n\to\infty\) \\
\midrule
Bell pair (\(q=2\))
& \(1\)
& \(1\)
& \(1\) \\

Three-party star = GHZ (\(q=3\))
& \(1+\frac1n\)
& \(2\)
& \(1\) \\

Star with block sizes \((1,1,2)\) (\(q=4\))
& \(2+\frac1{n^2}\)
& \(3\)
& \(2\) \\

Star with block sizes \((1,2,2)\), un-packaged (\(q=5\))
& \(2+\frac2n-\frac1{n^2}+\frac1{n^3}\)
& \(4\)
& \(2\) \\

Same state, packaged (\(q=4\))
[Eq.~\eqref{eq:core3-package-minimal}]
& \(2+\frac2n-\frac1{n^2}\)
& \(3\)
& \(2\) \\

\(6\)-qubit triangle, odd \(n\) (\(q=6\))
& \(3+\frac3{n^2}-\frac2{n^3}+\frac1{n^4}\)
& \(5\)
& \(3\) \\

\(6\)-qubit triangle, even \(n\) (\(q=6\))
& \(3+\frac3{n^2}-\frac3{n^3}+\frac3{n^4}\)
& \(6\)
& \(3\) \\
\bottomrule
\end{tabularx}
\caption{
Representative formulas and limit values, in units of \(\log 2\). The \(n\to1\) entries are extrapolations of the displayed formulas. Both the triangle branches and the finest / minimally coarse partitions agree through order \(1/n^2\) and differ at order \(1/n^3\). Each pair has different \(n\to1\) values but a common large-\(n\) limit. Results for the minimally coarse partition (two qubits in one party and all others in individual parties) are indicated as `packaged' in the table.}
\label{tab:limit-values}
\end{table}
\FloatBarrier

\paragraph{Quantifying magic?}
The progress we have made in computing the multi-R\'enyi entropies relies heavily on using stabilizer states. We can turn this upside down and ask: Can we detect magic (non-stabilizerness) of the state by noting a departure of its multi-R\'enyi entropies from the formulas derived here? More ambitiously, can we design a measure of magic, which is built up from the multi-R\'enyis?

We leave the latter question open. For the former, the property that all the multi-R\'enyi entropies are rational multiples of $\log 2$ is a potential witness of non-stabilizerness. The origin of this phenomenon is the quadratic form of the phase exponent in stabilizer states. As exemplified in Section~\ref{subsec:triangle-main}, quadratic exponents naturally organize the computation into two parts: (1) Linearly appearing assignment field impose linear constraints and the solution space has a dimension, which is a power of 2. (2) Each hyperbolic pair (residual bilinearly appearing fields) cuts that dimension by a factor of $1/2$---still a power of 2. As soon as the phase is not quadratic, the assignment sum has no reason to be a power of 2 and its logarithm---the entropy---fails to be an integer multiple of $\log 2$.

A simple example is the three-qubit hypergraph
state~\cite{Rossi2013Hypergraph}:
\begin{equation}
\begin{aligned}
  \ket{\psi_{xyz}}
  &
    =\frac{1}{\sqrt8}
      \sum_{x,y,z\,\in\,\F_2}(-1)^{xyz}\ket{xyz}\\
  &=\frac{1}{\sqrt8}\bigl(
      \ket{000}+\ket{001}+\ket{010}+\ket{011}
      +\ket{100}+\ket{101}+\ket{110}-\ket{111}
    \bigr).
\end{aligned}
\label{eq:cubic-state}
\end{equation}
Crucially, its phase polynomial $D(x,y,z) = xyz$ is cubic.



We have evaluated the assignment sums for replica indices $n=2, 3$ by hand:
\begin{equation}
\begin{aligned}
  \Delta_{xyz}(n=2)&=2^5\cdot29
  &\qquad &
  S_2^{(3)}(\psi_{xyz})
    =\frac{7-\log_2 29}{2} \log 2 \\
  \Delta_{xyz}(n=3)&=2^{10}\cdot3917
  &\qquad & 
  S_3^{(3)}(\psi_{xyz})
    =\frac{17-\log_2 3917}{6} \log 2
\end{aligned}
\label{eq:cubic-small-n}
\end{equation}
The coefficients of $\log 2$ are irrational.

\paragraph{Proof status.}
The Lean~4 source code and build documentation for the finite-assignment results underlying the two- and three-block formulas (Sec.~2.4), the tree graph theorem (Sec.~3), the triangle core formula (Sec.~4.2), and the four-vertex core results (Sec.~4.3) are publicly available in our GitHub repository~\cite{StabilizerRenyiLean2026}. The four-vertex core results (Sec. \ref{subsec:core4-main}) are presented without detailed proofs, with supporting Lean source code and build instructions are publicly available in our GitHub repository~\cite{StabilizerRenyiLean2026}. The composition rule (Sec. \ref{subsec:universal-known-graph-formula}) and the coefficients for five-vertex and more general loop cores (Appendix \ref{app:core5-dictionary}) are stated without proof. Supporting documentation, which establishes these results is available upon request.

\paragraph{Use of large language models.}
Most formulas were proposed or conjectured by the authors, who also provided the main ideas for their proofs, with ChatGPT assisting in the derivations. Exceptions include the proofs of the four- and five-vertex core results and of the general odd-cycle formula, which mainly relied on ChatGPT. We used models from the GPT-5 and GPT-6 families. The Lean formalizations relied heavily on GPT-5 models. All proofs were subsequently independently checked using GPT-6 models.

\section*{Acknowledgments}
We are grateful to Abhijit Gadde and Sriram Akella for useful discussions and for synchronizing the arXiv submissions of our papers. YZ thanks Prof. Ya-Wen Sun for support and useful discussions. BC thanks the organizers of the workshops ``Discussions on Quantum Spacetime'' held at LMSI (Mumbai, India)
and ``Quantum Gravity, Chaos, and Observers for Black Holes and Cosmology'' held at the Aspen Center for Physics, where some of this work was completed. The work of BC, YF, and XXJ is supported by an NSFC grant number 12042505. The work of JH is supported by MEXT KAKENHI Grant-in-Aid for Transformative Research Areas (A) through the ``Extreme Universe'' collaboration: Grant Number 21H05187. The work of XXJ is also supported by the Shuimu Tsinghua Scholar Program of Tsinghua University.

\appendix

\section{Dictionary of five-vertex and more general loop cores}
\label{app:core5-dictionary}

This appendix collects the full-graph coefficients \(\Gamma_H(n)\) for the eighteen connected five-vertex core graphs with loops in Table~\ref{tab:core5-gamma-dictionary}. The final row also gives two general formulas for odd \(n\): the product rule for two loop cores sharing exactly one vertex, and the closed expression for an arbitrary odd cycle \(C_{2h+1}\).

The vertex labels \(a,b,c,d,e\) denote positive block sizes. For each graph \(G_i\), \(B_i(n)\) denotes the polynomial in its odd-\(n\) entry. Set \(\delta_n=(n-1)(n-2)\), \(t_x=\mathbf{1}_{\{x\ge2\}}\), and \(t_S=\prod_{x\in S}t_x\); for example, \(t_{abc}=t_at_bt_c\).

The functions \(N_5,p,q,r\) and the correction factors \(F_7,\ldots,F_{18}\) used in the even-\(n\) entries are defined under ``Auxiliary functions'' after the table. The entries marked with a dagger exclude the even-\(n\) subcase \(a,b,c,d\geq2\), \(e=1\).

In the final row, \(C_1\vee_v C_2\) denotes the graph obtained by identifying one vertex of \(C_1\) with one vertex of \(C_2\), calling the shared vertex \(v\), and retaining all edges of both graphs.

\begingroup

\newcommand{\CoreFiveSharedLoopCores}{%
  \begin{tikzpicture}[
    x=7.8mm,y=7.8mm,
    baseline=(current bounding box.center),
    line width=.45pt,
    every node/.style={font=\scriptsize,inner sep=0pt}]
    \coordinate (v)  at (0,0);
    \coordinate (a1) at (-.58,.43);
    \coordinate (b1) at (-.43,1.02);
    \coordinate (c1) at (.43,1.02);
    \coordinate (d1) at (.58,.43);
    \coordinate (a2) at (-.58,-.43);
    \coordinate (b2) at (-.43,-1.02);
    \coordinate (c2) at (.43,-1.02);
    \coordinate (d2) at (.58,-.43);

    \draw (v)--(a1)--(b1);
    \draw (c1)--(d1)--(v);
    \draw (v)--(a2)--(b2);
    \draw (c2)--(d2)--(v);

    \foreach \p in {a1,b1,c1,d1,a2,b2,c2,d2}
      \fill (\p) circle[radius=1pt];
    \fill (v) circle[radius=1.3pt];

    \node at (0,1.10) {$\cdots$};
    \node at (0,-1.10) {$\cdots$};
    \node at (0,.58) {$C_1$};
    \node at (0,-.58) {$C_2$};
    \node[anchor=west] at (.14,0) {$v$};
  \end{tikzpicture}%
}

\newcommand{\CoreFiveOddCycle}{%
  \begin{tikzpicture}[
    x=8mm,y=8mm,
    baseline=(current bounding box.center),
    line width=.45pt,
    every node/.style={font=\scriptsize,inner sep=0pt}]
    \coordinate (a) at (0,.82);
    \coordinate (b) at (.72,.38);
    \coordinate (c) at (.66,-.43);
    \coordinate (d) at (-.66,-.43);
    \coordinate (e) at (-.72,.38);

    \draw (d)--(e)--(a)--(b)--(c);
    \draw (c)--(.30,-.69);
    \draw (-.30,-.69)--(d);

    \foreach \p in {a,b,c,d,e}
      \fill (\p) circle[radius=1pt];

    \node at (0,-.70) {$\cdots$};
    \node at (0,-1.16) {$C_{2h+1}$};
  \end{tikzpicture}%
}

\fontsize{9}{10.5}\selectfont
\setlength{\tabcolsep}{2pt}
\renewcommand{\arraystretch}{1}

\setlength{\LTleft}{\fill}
\setlength{\LTright}{\fill}
\setlength{\LTcapwidth}{\textwidth}
\setlength{\LTpre}{6pt}
\setlength{\LTpost}{6pt}

\newpage
\begin{longtable}{@{}
  >{\centering\arraybackslash}m{0.12\textwidth}
  >{\raggedright\arraybackslash}m{0.36\textwidth}
  >{\centering\arraybackslash}m{0.12\textwidth}
  >{\raggedright\arraybackslash}m{0.36\textwidth}@{}}

\caption{Coefficients for the eighteen connected five-vertex graphs
with loops, followed by two general identities for odd \(n\).}
\label{tab:core5-gamma-dictionary}\\
\toprule
Graph & \(\Gamma_H(n)\) & Graph & \(\Gamma_H(n)\)\\
\midrule
\endfirsthead

\caption[]{Coefficients \(\Gamma_H(n)\) (continued).}\\
\toprule
Graph & \(\Gamma_H(n)\) & Graph & \(\Gamma_H(n)\)\\
\midrule
\endhead

\bottomrule
\endfoot

\bottomrule
\addlinespace[2pt]
\multicolumn{4}{@{}p{\dimexpr .96\textwidth+6\tabcolsep\relax}@{}}{%
  \footnotesize
  \textsuperscript{\(\dagger\)}\,For \(G_{16},G_{17},G_{18}\), only the
  non-polynomial even-\(n\) subcase \(a,b,c,d\ge2,\ e=1\) is omitted.
}\\
\endlastfoot


\CoreFiveDictionaryGraph{G_1}{1/2,2/3,3/4,4/5,5/1}
& \(\begin{aligned}
\text{odd }n:\;& 3n^4-10n^3+10n^2-3,\\[1pt]
\text{even }n:\;& B_{1}(n)+\delta_n \bigl((n-1)^2+1\bigr)\mathcal N_5
\end{aligned}\)
& \CoreFiveDictionaryGraph{G_2}{1/2,2/3,3/4,4/1,1/5}
& \(\begin{aligned}
\text{odd }n:\;& 2(n-1)^4-(n-1)^2,\\[1pt]
\text{even }n:\;& B_{2}(n)
\end{aligned}\)\\[0.5mm]

\CoreFiveDictionaryGraph{G_3}{1/2,2/3,3/1,1/4,4/5}
& \(\begin{aligned}
\text{odd }n:\;& (n-1)^3(2n-1),\\[1pt]
\text{even }n:\;& B_{3}(n)+\delta_n t_{bc}\,q(a;d,e)
\end{aligned}\)
& \CoreFiveDictionaryGraph{G_4}{1/2,2/3,3/1,1/4,1/5}
& \(\begin{aligned}
\text{odd }n:\;& (n-1)^3(2n-1),\\[1pt]
\text{even }n:\;& B_{4}(n)+\delta_n t_{bc}\,r(a;d,e)
\end{aligned}\)\\[0.5mm]

\CoreFiveDictionaryGraph{G_5}{1/2,2/3,3/1,1/4,2/5}
& \(\begin{aligned}
\text{odd }n:\;& (n-1)^3(2n-1),\\[1pt]
\text{even }n:\;& B_{5}(n)+\delta_n t_c\,p(a;d)p(b;e)
\end{aligned}\)
& \CoreFiveDictionaryGraph{G_6}{1/4,1/5,2/4,2/5,3/4,3/5}
& \(\begin{aligned}
\text{odd }n:\;& 3n^4-11n^3+13n^2-6n+1,\\[1pt]
\text{even }n:\;& B_{6}(n)
\end{aligned}\)\\[0.5mm]

\CoreFiveDictionaryGraph{G_7}{1/5,2/4,2/5,3/4,3/5,4/5}
& \(\begin{aligned}
\text{odd }n:\;& (n-1)^3(3n-1),\\[1pt]
\text{even }n:\;& B_{7}(n)+\delta_n \mathcal F_{7}
\end{aligned}\)
& \CoreFiveDictionaryGraph{G_8}{1/3,1/4,1/5,2/4,3/5,4/5}
& \(\begin{aligned}
\text{odd }n:\;& (n-1)^3(3n-1),\\[1pt]
\text{even }n:\;& B_{8}(n)+\delta_n \mathcal F_{8}
\end{aligned}\)\\[0.5mm]

\CoreFiveDictionaryGraph{G_9}{1/3,1/4,1/5,2/4,2/5,3/5}
& \(\begin{aligned}
\text{odd }n:\;& 4n^4-13n^3+12n^2-3,\\[1pt]
\text{even }n:\;& B_{9}(n)+\delta_n \mathcal F_{9}
\end{aligned}\)
& \CoreFiveDictionaryGraph{G_{10}}{1/5,2/3,2/4,2/5,3/4,3/5,4/5}
& \(\begin{aligned}
\text{odd }n:\;& (n-1)^2(3n^2-3n+1),\\[1pt]
\text{even }n:\;& B_{10}(n)+\delta_n \mathcal F_{10}
\end{aligned}\)\\[0.5mm]

\CoreFiveDictionaryGraph{G_{11}}{1/3,1/4,1/5,2/4,2/5,3/4,3/5}
& \(\begin{aligned}
\text{odd }n:\;& (n-1)(5n^3-11n^2+6n-1),\\[1pt]
\text{even }n:\;& B_{11}(n)+\delta_n \mathcal F_{11}
\end{aligned}\)
& \CoreFiveDictionaryGraph{G_{12}}{1/4,1/5,2/4,2/5,3/4,3/5,4/5}
& \(\begin{aligned}
\text{odd }n:\;& (n-1)^2(2n-1)^2,\\[1pt]
\text{even }n:\;& B_{12}(n)+\delta_n \mathcal F_{12}
\end{aligned}\)\\[0.5mm]

\CoreFiveDictionaryGraph{G_{13}}{1/3,1/4,1/5,2/4,2/5,3/4,3/5,4/5}
& \(\begin{aligned}
\text{odd }n:\;& (n-1)^2(5n^2-5n+1),\\[1pt]
\text{even }n:\;& B_{13}(n)+\delta_n \mathcal F_{13}
\end{aligned}\)
& \CoreFiveDictionaryGraph{G_{14}}{1/3,1/4,1/5,2/3,2/4,2/5,3/4,3/5,4/5}
& \(\begin{aligned}
\text{odd }n:\;& (n-1)^2(5n^2-4n+1),\\[1pt]
\text{even }n:\;& B_{14}(n)+\delta_n \mathcal F_{14}
\end{aligned}\)\\[0.5mm]



\CoreFiveDictionaryGraph{G_{15}}{1/2,1/3,1/4,1/5,2/3,2/4,2/5,3/4,3/5,4/5}
& \(\begin{aligned}
\text{odd }n:\;& (n-1)(2n-1)(2n^2-2n+1),\\[1pt]
\text{even }n:\;& B_{15}(n)+\delta_n \mathcal F_{15}
\end{aligned}\)
& \CoreFiveDictionaryGraph{G_{16}}{1/3,1/4,1/5,2/4,2/5,3/5,4/5}
& \(\begin{aligned}
\text{odd }n:\;& (n-1)(5n^3-10n^2+3n+3),\\[1pt]
\text{even }n:\;& B_{16}(n)+\delta_n \mathcal F_{16}\,{}^\dagger
\end{aligned}\)\\[0.5mm]

\CoreFiveDictionaryGraph{G_{17}}{1/3,1/5,2/4,2/5,3/5,4/5}
& \(\begin{aligned}
\text{odd }n:\;& (n-1)^2(2n-1)^2,\\[1pt]
\text{even }n:\;& B_{17}(n)+\delta_n \mathcal F_{17}\,{}^\dagger
\end{aligned}\)
& \CoreFiveDictionaryGraph{G_{18}}{1/3,1/4,1/5,2/3,2/4,2/5,3/5,4/5}
& \(\begin{aligned}
\text{odd }n:\;& (n-1)(6n^3-12n^2+6n-1),\\[1pt]
\text{even }n:\;& B_{18}(n)+\delta_n \mathcal F_{18}\,{}^\dagger
\end{aligned}\)\\[0.5mm]

\midrule

\CoreFiveSharedLoopCores
& \(\begin{aligned}
  &\text{odd }n:\\[2pt]
  &\Gamma_{C_1\vee_v C_2}(n)
   =\Gamma_{C_1}(n)\Gamma_{C_2}(n),\\[2pt]
  &C_1\cap C_2=\{v\}.
\end{aligned}\)
& \CoreFiveOddCycle
& \(\text{odd }n,\quad h\in\mathbb Z_{\ge1}:\)\par\vspace{2pt}
  \(\begin{aligned}
  \Gamma_{C_{2h+1}}(n)
  &=(h+1)(n-1)^{2h}\\[1pt]
  &\quad+h\,\frac{(n-1)^{2h}+n-1}{n}.
\end{aligned}\)\\[1mm]

\end{longtable}
\endgroup

\begingroup
\small
\setlength{\jot}{1pt}
\setlength{\abovedisplayskip}{3pt}
\setlength{\belowdisplayskip}{3pt}
\setlength{\abovedisplayshortskip}{3pt}
\setlength{\belowdisplayshortskip}{3pt}
\allowdisplaybreaks[1]
\noindent\textbf{Auxiliary functions.}
Set
\[
u_x=1+(n-2)t_x,\qquad
p(x;y)=(1-t_x)u_y-(n-1)t_x.
\]
For \(\mathcal N_5\), put \((t_1,t_2,t_3,t_4,t_5)=(t_a,t_b,t_c,t_d,t_e)\),
with indices read modulo five:
\[
\mathcal N_5=
\sum_{i=1}^{5}t_it_{i+1}t_{i+2}
-\sum_{i=1}^{5}t_it_{i+1}t_{i+2}t_{i+3}
+2t_1t_2t_3t_4t_5.
\]
The functions in the \(G_3\) and \(G_4\) entries are
\[
q(a;d,e)=
\begin{cases}
(n-1)^2,&a\ge2,\\
-(n-1)^2,&a=1,\ d\ge2,\\
-1,&a=d=e=1,\\
(n-1)(n-3),&a=d=1,\ e\ge2,
\end{cases}
\qquad
r(a;d,e)=
\begin{cases}
(n-1)^2,&a\ge2,\\
-\bigl[u_du_e+(n-1)(u_d+u_e)\bigr],&a=1.
\end{cases}
\]
The remaining correction factors are
\begin{align*}
\mathcal F_{7}={}& -t_{bc}-n(t_{bd}+t_{cd})+(n-2)^2t_{abc}-n(n-2)(t_{abd}+t_{acd})-(n^2-3)t_{bcd}-n(n-2)t_{bce}\\*
&{}+n^2(t_{bde}+t_{cde})-(n-2)(3n-5)t_{abcd}-(n-2)^2t_{abce}+n(n-2)(t_{abde}+t_{acde})+n(3n-5)t_{bcde}\\*
&{}+(n-2)(3n-5)t_{abcde},\\[3pt]
\mathcal F_{8}={}& t_c+(n-2)(t_{ac}+t_{bc}+t_{ce})-n(t_{ae}+t_{cd})+(n-2)^2t_{abc}-n(n-2)(t_{abe}+t_{acd})\\*
&{}+(n^2-4n+5)t_{ace}+n^2t_{ade}-(n-2)t_{bcd}+(n-2)^2t_{bce}-n(n-2)t_{cde}-(n-2)^2t_{abcd}\\*
&{}-(n-2)(3n-5)t_{abce}+n(n-2)t_{abde}+n(3n-5)t_{acde}-(n-2)^2t_{bcde}+(n-2)(3n-5)t_{abcde},\\[3pt]
\mathcal F_{9}={}& -n(t_{ac}+t_{ce})-(n^2-2n-1)(t_{abc}+t_{cde})+(n^2-2n+2)(t_{abd}+t_{bde})+(n^2-4n+5)(t_{acd}+t_{bce})\\*
&{}+2(n^2-n+1)t_{ace}+(n-2)^2t_{bcd}+(-3n^2+11n-12)(t_{abcd}+t_{bcde})+2(n-3)(t_{abce}+t_{acde})\\*
&{}-(n^2-2n+2)t_{abde}+(5n^2-18n+20)t_{abcde},\\[3pt]
\mathcal F_{10}={}& -n(t_{bc}+t_{bd}+t_{cd})-n(n-2)(t_{abc}+t_{abd}+t_{acd})+(n-1)(n-3)t_{bcd}+n^2(t_{bce}+t_{bde}+t_{cde})\\*
&{}-3(n-1)(n-2)t_{abcd}+n(n-2)(t_{abce}+t_{abde}+t_{acde})+3n(n-1)t_{bcde}+3(n-1)(n-2)t_{abcde},\\[3pt]
\mathcal F_{11}={}& t_b+(n-2)(t_{ab}+t_{bc}+t_{bd}+t_{be})-(n+1)t_{ac}-nt_{de}+(-2n^2+2n+5)t_{abc}\\*
&{}+(n-2)^2(t_{abd}+t_{bcd}+t_{abe}+t_{bce})+\bigl[n^2+(n-2)^2\bigr](t_{acd}+t_{ace})-n(n-2)(t_{ade}+t_{cde})\\*
&{}+\bigl[(n-2)^2+1\bigr]t_{bde}-(n-2)(2n-7)(t_{abcd}+t_{abce})-(n-2)(3n-5)(t_{abde}+t_{bcde})\\*
&{}+(2n^2+n-8)t_{acde}+(9n^2-34n+33)t_{abcde},\\[3pt]
\mathcal F_{12}={}& -n(t_{ab}+t_{ac}+t_{bc})+(n-1)(n-3)t_{abc}-n(n-2)(t_{abd}+t_{acd}+t_{bcd}+t_{abe}+t_{ace}+t_{bce})\\*
&{}+n^2(t_{ade}+t_{bde}+t_{cde})-3(n-1)(n-2)(t_{abcd}+t_{abce})+n(3n-5)(t_{abde}+t_{acde}+t_{bcde})\\*
&{}+(6n^2-18n+13)t_{abcde},\\[3pt]
\mathcal F_{13}={}& -n(t_{ab}+t_{bc})-(n+1)t_{ac}+(-2n^2+2n+3)t_{abc}-n(n-2)(t_{abd}+t_{bcd}+t_{abe}+t_{bce})\\*
&{}+\bigl[n^2+(n-2)^2\bigr](t_{acd}+t_{ace})+n^2(t_{ade}+t_{bde}+t_{cde})-(n-2)(2n-5)(t_{abcd}+t_{abce})\\*
&{}+n(3n-5)(t_{abde}+t_{bcde})+(2n^2+3n-8)t_{acde}+(9n^2-28n+23)t_{abcde},\\[3pt]
\mathcal F_{14}={}& -(3n-1)t_{ab}-2n(n-2)(t_{abc}+t_{abd}+t_{abe})+n^2(t_{acd}+t_{bcd}+t_{ace}+t_{bce}+t_{ade}+t_{bde}+t_{cde})\\*
&{}+(2n^2-n-4)(t_{abcd}+t_{abce}+t_{abde})+3n(n-1)(t_{acde}+t_{bcde})+(9n^2-24n+17)t_{abcde},\\[3pt]
\mathcal F_{15}={}& n^2\sum_{\substack{S\subseteq\{a,b,c,d,e\}\\|S|=3}}t_S+3n(n-1)\sum_{\substack{S\subseteq\{a,b,c,d,e\}\\|S|=4}}t_S+(6n^2-12n+7)t_{abcde},\\[3pt]
\mathcal F_{16}={}& -n(t_{ab}+t_{ac}+t_{bd}+t_{cd})+(5-2n)(t_{abc}+t_{abd}+t_{acd}+t_{bcd})+(12n-20)t_{abcd}\\*
&{}+(-n^2+2n+1)(t_{abe}+t_{cde})+(2n^2-2n+2)(t_{ace}+t_{bde})+(n-2)^2t_{bce}+n^2t_{ade}\\*
&{}+(-2n^2+9n-12)(t_{abce}+t_{bcde})+(2n^2-n-6)(t_{abde}+t_{acde})+(4n^2-25n+32)t_{abcde},\\[3pt]
\mathcal F_{17}={}& -(n+1)(t_{ac}+t_{bd})+(4-2n)(t_{abc}+t_{abd}+t_{acd}+t_{bcd})+(10n-14)t_{abcd}\\*
&{}+(2n^2-2n+2)(t_{ace}+t_{bde})+(2n-4)(t_{abce}+t_{abde}+t_{acde}+t_{bcde})+(n^2-13n+16)t_{abcde},\\[3pt]
\mathcal F_{18}={}& t_e-(n+1)(t_{ab}+t_{cd})+(n-2)(t_{ae}+t_{be}+t_{ce}+t_{de})+(4-2n)(t_{abc}+t_{abd}+t_{acd}+t_{bcd})\\*
&{}+(-2n^2+2n+5)(t_{abe}+t_{cde})+(2n^2-4n+4)(t_{ace}+t_{bce}+t_{ade}+t_{bde})+(10n-14)t_{abcd}\\*
&{}+(8n-14)(t_{abce}+t_{abde}+t_{acde}+t_{bcde})+(9n^2-41n+42)t_{abcde}.
\end{align*}
\endgroup

\phantomsection
\bibliography{references}

\end{document}